\newtheorem{lemma}{Lemma}
\newtheorem{proposition}{Theorem}
\begin{document}
	\title{ Multi-Antenna Aided Secrecy Beamforming Optimization for Wirelessly Powered HetNets}
  \author{Shiqi Gong, Shaodan Ma, Chengwen Xing, Yonghui Li,~\IEEEmembership{Fellow,~IEEE}, and Lajos Hanzo,~\IEEEmembership{Fellow,~IEEE} }
	\maketitle
	\vspace{-13mm}
	\begin{abstract}
		The new paradigm  of  wirelessly powered  two-tier heterogeneous networks (HetNets) is considered in this paper.  Specifically, the femtocell base station (FBS) is powered by a power beacon (PB)  and transmits confidential information to  a legitimate femtocell user (FU) in the presence of a potential eavesdropper (EVE) and  a macro base station (MBS).  In this scenario,  we investigate   the secrecy beamforming design  under  three different levels of FBS-EVE channel state information (CSI), namely,
		 the perfect,  imperfect  and  completely unknown FBS-EVE CSI.  Firstly, given the  perfect global CSI at the FBS, the PB energy covariance matrix,  the FBS information covariance matrix and the  time splitting  factor are jointly optimized  aiming for perfect  secrecy rate maximization.   Upon assuming   the   imperfect  FBS-EVE  CSI,    the worst-case  and  outage-constrained SRM problems corresponding to  deterministic and  statistical  CSI errors are investigated,  respectively.   Furthermore, considering the more  realistic case of   unknown FBS-EVE CSI ,   the  artificial noise (AN) aided secrecy beamforming design is  studied. Our analysis reveals   that  for all above cases  both  the   optimal PB energy  and  FBS  information secrecy beamformings are of rank-1. Moreover,   for all considered cases of FBS-EVE CSI, the closed-form   PB energy   beamforming  solutions  are   available   when  the    cross-tier interference constraint is inactive. Numerical simulation results  demonstrate the secrecy performance advantages of all  proposed secrecy beamforming designs compared to the adopted  baseline algorithms.
	\end{abstract}
%
\vspace{-1.5mm}

	\section{Introduction}
With the proliferation of smart devices  and  data-hungry applications,  establishing  ubiquitous, high-throughput and  secure  communications   is   gaining increased  importance in next-generation  systems \cite{Jiang:2016hdba,Boccardi:2014kzba,Hossain:2014ddba}. The traditional macrocells  generally have poor performance in terms of  indoor coverage and  cell edge rate. To tackle this issue,  heterogenous networks (HetNets) have emerged as a promising  next-generation architecture, which  are  generally  supported  by  heterogenous  base  stations (BSs)  having  different  service coverages \cite{Ghosh:2012wo,Jo:2012dc}. Specifically, the macrocell  base station (MBS)  can   provide open  access  and  wide coverage up to dozens of  kilometers, while
	the low-power femtocell base station (FBS) and picocell base station (PBS)  are typically deployed  in  indoor environments and near to femtocell users (FUs) and picocell users (PUs), respectively. As pointed out in \cite{Shariat:2016tvba},  the ultra-dense  deployment  of
	femtocells  is recognized as an effecient technique  to realize 1000 times increase in wireless data rate for.  However, due to  the high spatial spectrum reuse in HetNets and the dense deployment of FBSs and PBSs,  cross-tier  interference  is  usually unavoidable in HetNets. Fortunately, according to  \cite{ Dhillon:2014dzba, Zhao:2015wpbaca}, the
	interference  can be re-utilized  as an effective  radio-frequency (RF)  energy source for wireless energy harvesting (WEH), which  thus contributes to  the green and self-sustainable  communications.  WEH has  many  advantages over  conventional energy supply methods \cite{Zhou:2013bjba}. For example,    WEH  is  more reliable than natural energy supply, such as solar, wind and tide, which are   significantly  affected  by climate  and  terrain.  Also,  it is   more cost-effective  compared to the  widely adopted  batteries recharge/replacement technique.  Generally, the densely deployed HetNets  are   favorable  from the perspective of  improving efficiency of WEH, since the distances from  energy harvesters to  energy stations are substantially shortened.

Recently,   WEH-based HetNets  have received  extension attention, in which the  power
	beacons (PBs)  provide
	power  for other  nodes  via  wireless energy transfer \cite{Tabassum:2015tsba,Lohani:2016etba,Akbar:2016htba,Zhu:2016caba,Sheng:2016icba,Zhang:2017kqba,Kim:2017ewba}. In \cite{Tabassum:2015tsba}, H.~Tabassum and E.~Hossain   studied the optimal deployment  of PBs  in wirelessly powered cellular networks.
	In \cite{Lohani:2016etba},  the downlink resource allocation problem was investigated by S.~Lohani et al. under  both time-switching and power-splitting   based  simultaneous wireless information
	and power transfer (SWIPT) strategies for two-tier HetNets. The  comprehensive  analysis of the  outage probability and the average ergodic rate in both downlink and uplink stages  of  wirelessly powered  HetNets with different cell  associations  were presented by S.~Akbar et al. in \cite{Akbar:2016htba}.  Y. Zhu et al. in 
	\cite{Zhu:2016caba}  further  extended   the above work into the Massive MIMO aided HetNets with WEH, where  different  user association schemes  are  investigated in terms of    the achievable average uplink
	rate.  From the view of green communications,   the
	energy efficient beamforming designs for SWIPT  HetNets   was  studied  by M. Sheng et al. and H. Zhang et al.  in \cite{Sheng:2016icba} and \cite{Zhang:2017kqba}, respectively.
	To  increase  energy harvesting efficiency  multi-antenna PBs and users,  J. Kim et al. in \cite{Kim:2017ewba}  considered sum
	throughput maximization   under different cooperative   protocols of two-tier
	wireless powered  cognitive
	networks.

	Furthermore, owing  to  the open network architectures of  HetNets, the security issue faced by wireless powered  HetNets has also drawn extensive  attention  \cite{Lv:2015knba}.   As a mature  technique  to  guarantee
 secure  communications  from the information-theoretical perspective, 	physical-layer security (PLS)  has  been widely researched in both academia and industry  \cite{Tekin:2008vbba, Shiu:2011ua}. There have been some works considering applying  PLS techniques  to HetNets with WEH. 
 In \cite{Ren:2017txba} and \cite{Li:2017waba}, the artificial  noise based  secrecy rate
	maximization was  studied  for  secure  HetNets with  SWIPT.
	The authors in \cite{Hu:2018kyba} proposed the max-min  secrecy energy efficiency
	optimization for   wireless powered HetNets, and    a distributed ADMM approach  was applied to
	reduce the  information
	exchange overhead. Considering the  more practical scenarios where  the transmitter only
	has imperfect eavesdropper's CSI,  a secrecy SWIPT strategy  for two-tier cognitive radio networks  was  investigated in  \cite{Ng:2016jzba}. 
	
	Most of the  existing
	works on  HetNets with WEH  focus
	on  SWIPT HetNets,  it is still an open challenging  issue on how to design the optimal harvest-then-transmit  strategies for HetNets. In this paper,  we investigate the secrecy beamforming design for  a wirelessly powered HetNet, where the  wirelessly powered FBS  transmits  the confidential information  to
	a  single-antenna  FU in the presence of  a multi-antenna eavesdropper (Eve). The FBS can harvest energy from  the  PB and  the  MBS. Moreover, there is no cooperation among the MBS, the PB and the FBS, thus the resultant cross-tier interference is 
	taken into account. In this wirelessly powered HetNet, the energy and information  covariance matrices as well as    the  time splitting  factor are jointly optimized to maximize the secrecy  rate under  different levels of  FBS-EVE CSI. 	 Our main  contribution is  summarized as :
	\begin{itemize}
		\item  Firstly, we  study  secrecy  rate maximization (SRM) of   the wirelessly powered  HetNet  having  perfect global CSI. In order to  address  this  non-convex perfect 
		SRM problem, a relaxed  problem using the matrix  trace inequality  is studied  and   proved to be   tight,  since it  always provides  a rank-1 optimal solution.  Considering  the  joint-convexity and quasi-convexity  of the  relaxed problem on   different  variables, a  convexity-based  linear search is proposed for  optimally  solving  the perfect SRM problem. In particular,  the  closed-form solution  of this problem is derived  when the cross-tier interference at the MU  is  negligible.
		
		\item  Secondly, the imperfect  FBS-EVE CSI with deterministic and  Gaussian random CSI errors  are   considered, respectively.  For the deterministic CSI error, the worst-case SRM  problem is studied, which can be addressed similarly to the  perfect SRM problem using the S-procedure. For the Gaussian random CSI error,  the  outage-constrained SRM problem subject to the probabilistic secrecy rate  constraint  is studied by applying the Bernstein-type inequality (BTI) \cite{Yuan:2018vzba} and  then  an alternating optimization procedure is proposed. For both the worst-case  and the outage-constrained   SRM  problems,  the rank-1  property  of the optimal  solutions is  also  validated.

		\item  Finally, we consider the realistic scenario with  unknown   FBS-EVE CSI, in  which   artificial noise (AN) is utilized for  improving   secrecy performance.   We design   AN aided secrecy beamforming  by maximizing  the average  AN power  subject to the legitimate  rate requirement at  the FU. This robust design  can  be  reformulated  as  a concave one and  its  optimal  rank-1 solution is demonstrated.  Similarly,  for the inactive cross-interference constraint,   the closed-form   solution to   this AN aided secrecy beamforming design is available.
	\end{itemize}
	
In fact,  the studied  SRM  problems  belong  to  the  nonconvex  difference of convex functions (DC) programming, which   are   more  challenging   than  that  in \cite{Kim:2017ewba} focusing on  the  sum-throughput maximization of  cognitive WPCNs.  Compared to the SRM  problem of \cite{Wu:2016foba} where   the  single-antenna PB and  Eve are  assumed,  these  SRM problems   are also  more intractable  due to the additional  energy and interference constraints. Fortunately,  we validate  that  the optimal  energy  and information  beamformings are of rank-1 in the secrecy wirelessly powered HetNet,
regardless of the availability of   eavesdropper's  CSI. This  conclusion also provides important insights  for   practical   engineering applications. 

\textit{Notations:} 
 The bold-faced lower-case and upper-case letters stand for vectors and matrices,
 respectively. The operators
 $(\cdot )^{\rm T}$, $(\cdot )^{\rm H}$ and $(\cdot )^{-1}$ denote the transpose, Hermitian and inverse of a matrix, respectively. 
 $\text{Tr}(\bm{A})$ and $\det (\bm{A})$ represent  the trace and determinant of
 $\bm{A}$, respectively.  $\|\cdot \|_2$ denotes the matrix
 spectral norm and $\bm{A}\!\succeq\! \bm{0}$ indicates that the square matrix
 $\bm{A}$ is positive semidefinite.   $\text{rank}(\bm{A})$ and   $\nu({\bf A})$ denote the rank of  ${\bf A}$ and  the unit-norm eigenvector associated with  the maximum eigenvalue of  ${\bf A}$, respectively. Also,  $(a)^{+}\!=\!\max\{a,0\}$ is defined.  The
 words `independent and identically distributed' and `with respect to' are
 abbreviated as `i.i.d.' and `w.r.t.', respectively.
 \vspace{-3mm}
\section{System model and Problem Formulation}
As shown in Fig.~\ref{Fig1},  we consider  secure communications of the wirelessly powered HetNet, in which an $N_M$-antenna MBS used for information transmission coexists with an $N_P$-antenna
 PB  deployed  for wireless energy transfer and  an $N_F$-antenna FBS  aiming for  energy harvesting.  Note that the FBS is
 \begin{figure}[t]
  \vspace{-8mm}
  \centering
  \includegraphics[width=3.3in]{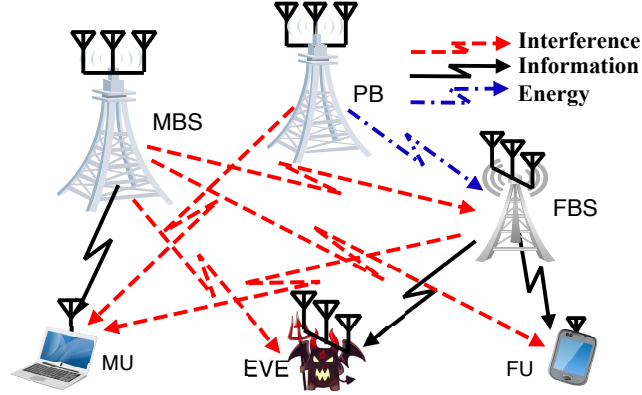}
  \caption{ A secrecy wirelessly powered HetNet.}
    \label{Fig1}
    \vspace{-8mm}
  \end{figure}
 energy-limited and  harvests energy for  its communications from the RF
signals transmitted by  the PB  and the MBS. The MBS and the PB  transmit the information-bearing signal ${ \bf   s}_{M}\in \mathbb{C}^{N_M}$ and energy-bearing signal ${ \bf  { s}}_{P}\in \mathbb{C}^{N_P}$  to a single-antenna MU and an $N_F$-antenna FBS, respectively. Then the  FBS transmits the confidential signal  ${ \bf   s}_{F}\in \mathbb{C}^{N_F}$  to a single-antenna FU, while   a  multi-antenna  Eve aims for intercepting  the signal of the FBS.  We focus our attention on the security  of  FBS  and  there is no cooperation
between   the MBS and the FBS.  Hence,  the signals transmitted
from the MBS and  the FBS  actually  impose    interference  on the FU and the MU,  respectively.  All  wireless channels are assumed to be quasi-static flat-fading   and  remain  constant during a whole  time slot $T$.

In the initial $\tau T$   subslot,   where  $0<\tau<1$  denotes the   time splitting factor, the FBS  harvests energy from both the PB energy signal ${ \bf   s}_{P}$  and  MBS  interfering signal ${ \bf   s}_{M}$.  Let's define     ${ \bf    W}_P=\mathbb{E}[{ \bf    s}_{P}{ \bf    s}_{P}^H]  \!\in\!\mathbb{C}^{N_P \times N_P}$ as  the  covariance matrix of the PB  energy signal  ${ \bf   s}_{P}  $   subject  to the maximum  transmit  power $P_P$, i.e.  $\text{tr}({ \bf    W}_P) \le P_P$, and  for simplicity 
   $\mathbb{E}[{ \bf   s}_{M}{ \bf   s}_{M}^H]=\frac{P_M}{N_M}{ \bf   I}_{N_M}$   with    $P_M$  being   
   the MBS maximum   transmit power.  By neglecting   the 
  contribution of thermal noise to the total  harvested energy  at the FBS,   the   amount of energy  harvested at  the  FBS  is expressed as
\begin{align}\label{EH1}
  E({ \bf    W}_P)=\tau T\xi\big(\text{tr}({ \bf    H}_F{ \bf    W}_P{ \bf    H}_F^H)+\frac{P_M}{N_M}\text{tr}(
  { \bf    G}_F{ \bf    G}_F^H)\big)
\end{align}
where $0<\xi< 1$  is the energy harvesting efficiency factor. ${ \bf H}_F\in\mathbb{C}^{ N_F \times N_P}$ and ${ \bf    G}_F
\in\mathbb{C}^{N_F \times N_M}$ denote the PB-FBS channel and  MBS-FBS channel, respectively.  

Next, in the  second $(1-\tau)T$  subslot, the FBS transmits the confidential  signal ${ \bf   s}_{F}$ to the FU by  utilizing  the harvested energy in \eqref{EH1}. The   signals  received at the FU and  the Eve are then  expressed  as
\begin{align}\label{RS1}
  &{ y}_R={ \bf h}_R^H{ \bf s}_F+{\bf g }_R^H{\bf s}_M+{ n}_R, \nonumber\\
&{ \bf  y}_E={ \bf H}_E{ \bf s}_F+{\bf G }_E{\bf s}_M +{ \bf   n}_E,
\end{align}
where ${ \bf h}_R\in\mathbb{C}^{ N_F }$ and ${ \bf g}_R
\in\mathbb{C}^{N_M}$ denote the FBS-FU channel and the MBS-FU channel, respectively.  ${ \bf H}_E\in\mathbb{C}^{ N_E\times N_F }$ and ${ \bf G}_E\in \mathbb{C}^{ N_E\times N_M}$ denote the FBS-EVE channel  and   the  MBS-EVE channel, respectively.
 $n_R\sim \mathcal{CN}({0,\sigma_n^2})$
 and ${ \bf n}_E \sim \mathcal{CN}({\bf 0}, \sigma_n^2{\bf I}_{N_E})$  are    i.i.d  circularly symmetric  Gaussian noises at the FU and the EVE, respectively. Additionally, we define 
  ${ \bf  P}_F=\mathbb{E}[{ \bf    s}_{F}{ \bf    s}_{F}^H]  \!\in\!\mathbb{C}^{N_F \times N_F}$  as    the covariance matrix of the FBS information  signal  ${ \bf   s}_{F}  $, the  achievable  rates (in  bps/Hz) at the FU and the EVE  are then given by 
\begin{align}\label{RS1}
  & R_I= (1-\tau)\log_2\bigg(1+ \frac{ { \bf   h}_{R}^H
   { \bf   P}_{F}
   { \bf   h}_{R}}{\sigma_n^{2}+\frac{P_M}{N_M}\Vert {\bf g }_R\Vert^2}\bigg),\nonumber\\
& R_E= (1-\tau)\log_2\det\big({{ \bf   I}}_{N_E}\!+\! (\sigma_n^{2}{ \bf   I}_{N_E}+\frac{P_M}{N_M}
   { \bf   G}_E{\bf G }_E^H)^{-1} \!{ \bf   H}_{E}
   { \bf   P}_{F}{ \bf   H}_{E}^H\big).
\end{align}

 According  to  \cite{Ren:2017txba}, the achievable secrecy rate $R_S$ of the  wirelessly powered HetNet is  actually the   data  rate at which   the   desired information  is  correctly decoded by the  FU, while  no  information  is  wiretapped by the EVE. Mathematically, we  have
\begin{align}\label{RS1}
R_S=[R_I-R_E]^+.
\end{align}

In our work, we jointly optimize the   PB and  FBS transmit covariance matrices $\{{ \bf W}_P, { \bf P}_F\}$  and the time splitting factor $\tau$  for maximizing the achievable secrecy rate  $R_S$ in \eqref{RS1}.  The resultant SRM problem  for the  wirelessly powered HetNet is then  formulated as 
\begin{align}\label{SRM}
  &{R}_S^{\star}= \underset{{{\tau},{ \bf   W}_P\succeq{ \bf   0}, { \bf   P}_{F}
  \succeq{ \bf   0} }}
{\text{max}}~~R_I-R_E\nonumber\\
&~{\rm{s.t.}}~~\text{CR1:}~0\le\tau\le 1,~\text{tr}({ \bf   W}_P) \le P_P\nonumber\\
 &~~~~~~~\text{CR2:}~(1\!-\!\tau)\text{tr}({ \bf   P}_{F})\!
 \le \!\tau \xi \big[\text{tr}({ \bf    H}_F{ \bf    W}_P{ \bf    H}_F^H)+\frac{P_M}{N_M}\text{tr}(
  { \bf    G}_F{ \bf    G}_F^H)\big],\nonumber\\
 &~~~~~~~\text{CR3:}~\tau{ \bf   h}_{M}^H { \bf   W}_P{ \bf   h}_{M} +(1\!-\!\tau){ \bf g}_{p}^H { \bf P}_{F}{ \bf g}_{p}\!
 \le \! I_{th}.
\end{align}

In  problem (\ref{SRM}), the constraint $\text{CR1}$  comes from  the fact that the  PB transmit power has  a maximum threshold, and  $\text{CR2}$  denotes  the    energy causality constraint   of the wirelessly powered  FBS.  While $\text{CR3}$   models  the average interference constraint   of the secrecy wirelessly powered HetNet. Specifically, by defining   ${ \bf h}_{M}\in \mathbb{C}^{ N_P}$ and ${ \bf g}_{p}\in \mathbb{C}^{ N_F}$  as   the  PB-MU channel and the FBS-MU channel, respectively,
 the terms ${ \bf h}_{M}^H { \bf   W}_P{ \bf   h}_{M}$ and ${ \bf g}_{p}^H { \bf P}_{F}{ \bf g}_{p}$ actually  denote the  total interference  at  the MU  originating  from  the PB and the FBS, respectively.
Since the PB energy transfer and the FBS information transfer are separated by  the time splitting factor $\tau$,  we consider  the average interference power constraint  at the MU  as shown in $\text{CR3}$, where $ I_{th}$ denotes the  minimum  tolerable interference.  It is readily inferred  from $\text{CR3}$ that  problem \eqref{SRM} is feasible for an  arbitrary  $I_{th}\ge 0$. However, due to the  highly coupled variables $\{{\tau},{\bf W}_P,{\bf P}_{F}\}$, the  SRM problem \eqref{SRM}  is generally non-convex and   challenging to address.

In the sequel, we will investigate the SRM  under  three  different levels of the FBS-EVE CSI. In the first case,  the  global CSI  of the wirelessly powered  HetNet   is available at the FBS via   channel feedback and  high-SNR training techniques. In the second case,  by assuming imperfect FBS-EVE  CSI at the FBS,  a pair of robust SRM  problems   are  investigated  under  deterministic  and Gaussian random CSI errors, respectively.
In the third case,  we consider the more practical scenario that the
FBS  is not aware  of the existence of Eve. In other words, the
FBS-EVE  CSI is  completely unknown to the FBS.
\vspace{-3mm}

 \section{Perfect SRM under global   CSI of  the wirelessly powered  HetNet }\label{III}
  In this section, a  convexity-based one  dimensional search  is proposed  for optimally solving the  SRM problem  \eqref{SRM}  under the global  CSI of  the    wirelessly powered HetNet.
  \vspace{-4mm}
  \subsection{Transformation of Problem \eqref{SRM}}
  Firstly,  by introducing an auxiliary variable $\eta$, the   SRM  problem \eqref{SRM} can be  reformulated as
  \begin{align}\label{SRM2}
  &{R}_S^{\star}=\underset{\substack{{\tau},{ \bf W}_P\succeq{ \bf   0},
  { \bf P}_{F}\succeq{ \bf   0},\eta }}
{\text{max}}~(1-\tau)\log_2\bigg(1+ { C_1{ \bf   h}_{R}^H
   { \bf   P}_{F}
   { \bf   h}_{R}}\bigg)\!-\!(1\!-\!\tau)\log_2\eta\nonumber\\
&{\rm{s.t.}}~~\text{CR1}\sim\text{CR3},~~~\text{CR4:}~
   \det\big({{ \bf   I}}_{N_E}\!+\! {\bf R}_E { \bf   H}_{E}
   { \bf   P}_{F}{ \bf   H}_{E}^H\big)\le \eta,
\end{align} where $C_1={(\sigma_n^{2}+\frac{P_M}{N_M}\Vert {\bf g }_R\Vert^2)^{-1}}$ and ${\bf R}_E=(\sigma_n^{2}{ \bf   I}_{N_E}+\frac{P_M}{N_M}
   { \bf   G}_E{\bf G }_E^H)^{-1}$. Unfortunately,  problem   \eqref{SRM2} is still difficult  to address   because of the nonconvex  constraint $\text{CR4}$. In order to tackle this issue,  we  firstly consider a relaxation  of  $\text{CR4}$ based on  the following lemma.
   \vspace{-2mm}
 \begin{lemma}\label{lemm0}\cite{Li:2011ivba}
For any positive semi-definite matrix ${\bf A}\succeq {\bf 0}$,  we have
  $\det({\bf I} + {\bf A}) \ge 1+ \text{tr}({\bf A})$,
where the equality  holds if and only if  $\text{rank}({\bf A})=1$.
  \end{lemma}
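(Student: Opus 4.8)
The plan is to reduce this matrix inequality to an elementary scalar one by diagonalizing ${\bf A}$. Since ${\bf A}\succeq{\bf 0}$, the spectral theorem supplies a unitary ${\bf U}$ with ${\bf A}={\bf U}\,\text{diag}(\lambda_1,\dots,\lambda_n)\,{\bf U}^{\rm H}$ and real eigenvalues $\lambda_i\ge 0$. As both $\det(\cdot)$ and $\text{tr}(\cdot)$ are unitarily invariant, this gives $\det({\bf I}+{\bf A})=\prod_{i=1}^{n}(1+\lambda_i)$ and $\text{tr}({\bf A})=\sum_{i=1}^{n}\lambda_i$, so the claim becomes the purely scalar statement $\prod_{i=1}^{n}(1+\lambda_i)\ge 1+\sum_{i=1}^{n}\lambda_i$ for nonnegative $\lambda_i$.

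Next I would expand the product in terms of the elementary symmetric polynomials $e_k(\lambda_1,\dots,\lambda_n)$, i.e. $\prod_{i=1}^{n}(1+\lambda_i)=\sum_{k=0}^{n}e_k(\lambda)=1+\sum_{i=1}^{n}\lambda_i+\sum_{k=2}^{n}e_k(\lambda)$. Each $e_k(\lambda)$ with $k\ge 2$ is a sum of products of nonnegative numbers, hence $e_k(\lambda)\ge 0$, and the inequality follows immediately. An equivalent route is a one-line induction on $n$: the step multiplies the inductive hypothesis by $1+\lambda_{n+1}\ge 1$ and then discards the nonnegative cross term $\lambda_{n+1}\sum_{i\le n}\lambda_i$.

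For the equality characterization I would note that equality forces $\sum_{k=2}^{n}e_k(\lambda)=0$; being a sum of nonnegative terms, this forces in particular $e_2(\lambda)=\sum_{i<j}\lambda_i\lambda_j=0$, so at most one eigenvalue is nonzero, i.e. $\text{rank}({\bf A})\le 1$. Conversely, if exactly one $\lambda_i$ is positive then $\prod_i(1+\lambda_i)=1+\lambda_i=1+\text{tr}({\bf A})$, and if ${\bf A}={\bf 0}$ both sides equal $1$; in either case equality holds, so the condition in the lemma is to be read as $\text{rank}({\bf A})\le 1$. There is essentially no real obstacle here: the only point deserving a little care is making the equality analysis airtight, namely explicitly invoking that a finite sum of nonnegative reals vanishes if and only if every summand does, and then translating ``a single nonzero eigenvalue'' back into the rank condition. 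Since the lemma is quoted from \cite{Li:2011ivba}, keeping the argument to this short self-contained form is adequate.
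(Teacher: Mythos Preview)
Your argument is correct: diagonalizing ${\bf A}$ reduces the claim to $\prod_{i}(1+\lambda_i)\ge 1+\sum_i\lambda_i$ for $\lambda_i\ge 0$, which follows at once from the nonnegativity of the higher elementary symmetric polynomials (or the one-line induction you sketch), and your equality analysis is clean. Your remark that the equality case is really $\text{rank}({\bf A})\le 1$ rather than $=1$ is also correct, since ${\bf A}={\bf 0}$ trivially gives equality.

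As for comparison with the paper: there is nothing to compare. The paper does not prove Lemma~\ref{lemm0}; it simply quotes it from \cite{Li:2011ivba} and immediately uses it to relax the constraint $\text{CR4}$. Your short self-contained proof is therefore more than what the paper itself supplies, and is entirely adequate for a lemma of this kind.
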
   \vspace{-3mm}
By applying Lemma~\ref{lemm0} to   $\text{CR4}$,  problem  \eqref{SRM2}  is relaxed to
 \begin{align}\label{SRMRe}
   &\tilde{R}_S^{\star}=\underset{\substack{{\tau},{ \bf W}_P\succeq{ \bf   0},
  { \bf P}_{F}\succeq{ \bf   0},\eta }}
{\text{max}}~(1-\tau)\log_2\bigg(1+ { C_1{ \bf   h}_{R}^H
   { \bf   P}_{F}
   { \bf   h}_{R}}\bigg)\!-\!(1\!-\!\tau)\log_2\eta\nonumber\\
&~~~~~{\rm{s.t.}}~~~\text{CR1}\sim\text{CR3},~~\text{CR4:}~
   1\!+\! \text{tr}\big({\bf R}_E { \bf   H}_{E}
   { \bf   P}_{F}{ \bf   H}_{E}^H\big)\le \eta,
 \end{align} where $\tilde{R}_S^{\star}$ denotes the  achievable  maximum  secrecy rate by solving    problem \eqref{SRMRe}.
Based on  Lemma~\ref{lemm0},  problem \eqref{SRMRe} clearly   has a larger feasible  region   than   problem \eqref{SRM2}, so that  $\tilde{R}_S^{\star}\ge {R}_S^{\star}$ holds.  We then  introduce   new
variables  $\widetilde{\bf W}_P=
\tau{\bf W}_P$ and $\widetilde{\bf P }_{F}
=(1\!-\!\tau) {\bf P}_{F}$ to equivalently transform   problem  \eqref{SRMRe} into
\begin{align}\label{SRMRe3}
   &\tilde{R}_S^{\star}=\underset{\substack{{\tau}, \widetilde{ \bf W}_P\succeq{ \bf   0}, 
  \widetilde{ \bf P}_{F}\succeq{ \bf   0},  \eta }}
{\text{max}}~(1-\tau)\log_2\bigg(\frac{1+ \frac{ C_1{ \bf   h}_{R}^H
   \widetilde{ \bf   P}_{F}
   { \bf   h}_{R}}{1-\tau}}{\eta}\bigg)\nonumber\\
&~~~{\rm{s.t.}}~~\widetilde{\text{CR1}}:~0\le\!\tau\!\le 1,~\text{tr}(\widetilde{ \bf   W}_P)\! \le \!\tau P_P, ~~\widetilde{\text{CR2}}:\text{tr}(\widetilde{ \bf   P}_{F})\!
 \le \! \xi \text{tr}({ \bf    H}_F\widetilde{ \bf    W}_P{ \bf    H}_F^H)\!+\!\xi \tau\frac{P_M}{N_M}\text{tr}(
  { \bf    G}_F{ \bf    G}_F^H),\nonumber\\
 &~~~~~~~~~\widetilde{\text{CR3}}:{ \bf   h}_{M}^H \widetilde{ \bf   W}_P{ \bf   h}_{M} \!+\!{ \bf g}_{p}^H \widetilde{ \bf P}_{F}{ \bf g}_{p}\!
 \le \! I_{th}, ~~\widetilde{\text{CR4}}:
 1\!-\!\tau\!+\! \text{tr}\big( {\bf R}_E{ \bf   H}_{E}
  \widetilde { \bf   P}_{F}{ \bf   H}_{E}^H\big)\le \eta(1\!-\!\tau).
 \end{align}
 
 It is concluded  from   problem  \eqref{SRMRe3} that for any fixed $\eta$  the objective function  is the perspective  of a strictly concave  matrix
  function  $f( \widetilde{\bf P}_{F}) \!= \!
\log_2\bigg(\frac{1+ { C_1{ \bf   h}_{R}^H
   \widetilde{ \bf   P}_{F}
   { \bf   h}_{R}}}{\eta}\bigg)$, which is also strictly concave   \cite[p. 39]{Step:2018twba}. Moreover, all
constraints in  problem  \eqref{SRMRe3} are  convex.  Therefore, it is  inferred that  problem  \eqref{SRMRe3} is  jointly  concave w.r.t.
$\{{\tau}, \widetilde{\bf W}_P,
\widetilde{\bf P}_{F}\}$ given any $\eta$, and can be globally solved  by  the interior point method.
To further  investigate the tightness of the  constraint $\widetilde{\text{CR3}}$  when varying  $I_{th}$,  we  firstly consider  solving  problem \eqref{SRMRe3} without  $\widetilde{\text{CR3}}$
   \begin{align}\label{SRMRe31}
   &\underset{\substack{{\tau},\widetilde{ \bf W}_P\succeq{ \bf   0},
  \widetilde{ \bf P}_{F}\succeq{ \bf   0},\eta }}
{\text{max}}~(1-\tau)\log_2\bigg(\frac{1+ \frac{ C_1{ \bf   h}_{R}^H
   \widetilde{ \bf   P}_{F}
   { \bf   h}_{R}}{1-\tau}}{\eta}\bigg), ~~~{\rm{s.t.}}~~\widetilde{\text{CR1}},~\widetilde{\text{CR2}},
~\widetilde{\text{CR4}}.
 \end{align}   
 
 Let's define $\widetilde{ \bf   W}_{P, I_{th}}$ and $\widetilde{ \bf P}_{F,I_{th}}$  as  the optimal solution to   problem \eqref{SRMRe31}.  The corresponding interference level is then expressed  as
 $\tilde{I}_{th}= { \bf   h}_{M}^H \widetilde{ \bf   W}_{P,I_{th}}{ \bf   h}_{M} +{ \bf g}_{p}^H \widetilde{ \bf P}_{F,I_{th}}{ \bf g}_{p}$. When ${I}_{th}>\tilde{I}_{th}$,  it is  readily inferred  that the constraint $\widetilde{\text{CR3}}$ in  problem  \eqref{SRMRe3}  will be automatically satisfied using  the  optimal solution $\{\widetilde{ \bf   W}_{P, I_{th}}, \widetilde{ \bf P}_{F,I_{th}}\}$ to   problem \eqref{SRMRe31}, implying  that  in this context  $\widetilde{\text{CR3}}$  has no effect on   problem  \eqref{SRMRe3} and  can be neglected  without loss of optimality.   While for the case of   $0\le I_{th}\le \tilde{I}_{th}$,  we   provide an interesting insight  in the following  Theorem.
\vspace{-3mm}
 \begin{proposition}\label{thre1}
When $0\le I_{th}\le \tilde{I}_{th}$,  the constraint $\widetilde{\text{CR3}}$ in  problem \eqref{SRMRe3}  is tight, which   implies   that  the optimal solution to   problem \eqref{SRMRe3} lies at  the boundary of ${ \bf   h}_{M}^H \widetilde{ \bf   W}_P{ \bf   h}_{M} +{ \bf g}_{p}^H \widetilde{ \bf P}_{F}{ \bf g}_{p}\!
 = \! I_{th}$.
\end{proposition}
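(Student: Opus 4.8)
The plan is to argue by contradiction. Suppose that at some optimal solution $(\tau^\star,\widetilde{\mathbf{W}}_P^\star,\widetilde{\mathbf{P}}_F^\star,\eta^\star)$ of \eqref{SRMRe3} the constraint $\widetilde{\text{CR3}}$ is inactive, so the attained interference $I':=\mathbf{h}_M^H\widetilde{\mathbf{W}}_P^\star\mathbf{h}_M+\mathbf{g}_p^H\widetilde{\mathbf{P}}_F^\star\mathbf{g}_p$ obeys $I'<I_{th}$. I would then show that this is incompatible with the hypothesis $I_{th}\le\tilde{I}_{th}$.

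The cleanest route appears to be via the value function of \eqref{SRMRe3}. Write $v(I_{th})$ for the optimal value $\tilde{R}_S^\star$ as a function of the interference budget; enlarging $I_{th}$ only enlarges the feasible region, so $v$ is non-decreasing. The key claim is that $v$ is \emph{strictly} increasing on $[0,\tilde{I}_{th}]$: for any fixed $\eta$, problem \eqref{SRMRe3} is a jointly concave program in $(\tau,\widetilde{\mathbf{W}}_P,\widetilde{\mathbf{P}}_F)$ in which $\widetilde{\text{CR3}}$ is a single affine inequality with right-hand side $I_{th}$, so its optimal value is a concave non-decreasing function of $I_{th}$, and such a function is strictly increasing up to the value of $I_{th}$ at which the $\widetilde{\text{CR3}}$ multiplier first vanishes --- which is exactly the interference produced by the $\widetilde{\text{CR3}}$-free optimum, i.e.\ $\tilde{I}_{th}$ once the optimum over $\eta$ of \eqref{SRMRe31} is also taken into account. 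Granting this, since $\widetilde{\text{CR3}}$ is slack the point $(\tau^\star,\widetilde{\mathbf{W}}_P^\star,\widetilde{\mathbf{P}}_F^\star,\eta^\star)$ is feasible --- hence, by monotonicity, optimal --- for \eqref{SRMRe3} with budget $I'$, so $v(I')=v(I_{th})$ with $I'<I_{th}\le\tilde{I}_{th}$, contradicting the strict monotonicity of $v$ on $[0,\tilde{I}_{th}]$. Therefore $\widetilde{\text{CR3}}$ must be active at every optimal solution of \eqref{SRMRe3}.

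The main obstacle will be the coupling through $\eta$: neither \eqref{SRMRe3} nor \eqref{SRMRe31} is jointly concave, so the concavity of the value function is only available for frozen $\eta$, and care is needed in passing to the optimum over $\eta$ that defines $\tilde{I}_{th}$. A more hands-on alternative bypasses this: with $\widetilde{\text{CR3}}$ slack its KKT multiplier vanishes, so the $\widetilde{\mathbf{W}}_P$-stationarity condition of \eqref{SRMRe3} reduces to that of \eqref{SRMRe31}, forcing $\widetilde{\mathbf{W}}_P^\star$ to be the maximum-harvested-energy beamformer --- full power $\tau^\star P_P$ aligned with $\nu(\mathbf{H}_F^H\mathbf{H}_F)$ --- which is precisely the energy-covariance structure at an optimum of \eqref{SRMRe31}, because there $\widetilde{\mathbf{W}}_P$ enters only by widening $\widetilde{\text{CR2}}$. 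Substituting this back shows that $(\tau^\star,\widetilde{\mathbf{W}}_P^\star,\widetilde{\mathbf{P}}_F^\star,\eta^\star)$ solves \eqref{SRMRe31} and hence realizes the interference level $\tilde{I}_{th}$, contradicting $I'<\tilde{I}_{th}$. This variant additionally requires dispatching the degenerate case where $\widetilde{\text{CR2}}$ is inactive (then $\widetilde{\mathbf{P}}_F^\star$ is not limited by the harvested energy and one argues directly on \eqref{SRMRe31}) and a mild uniqueness statement for the interference level of the optimum of \eqref{SRMRe31}.
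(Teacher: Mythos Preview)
Your primary route---contradiction via the value function $v(I_{th})$---is essentially the paper's own argument: the paper likewise assumes slack at an optimum with budget $I_{th}^1\le\tilde I_{th}$, notes that this optimum is feasible at the smaller budget $I_{th}^2:=I'$, obtains $v(I_{th}^2)=v(I_{th}^1)$ from two-sided monotonicity, and then declares this incompatible with the definition of $\tilde I_{th}$. You are in fact more careful than the paper, which does not isolate the $\eta$-coupling issue and simply asserts that constancy of $v$ on $[I_{th}^2,I_{th}^1]$ ``hints'' that $\widetilde{\text{CR3}}$ is inactive; your concavity-of-the-value-function remark (for frozen $\eta$) supplies the missing justification for strict monotonicity below $\tilde I_{th}$. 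The alternative KKT/structure-of-$\widetilde{\mathbf W}_P^\star$ route you sketch is not in the paper; it is a genuinely different (and more constructive) way to reach the same contradiction, at the cost of the extra case analysis you already identified.
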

\vspace{-5mm}
\begin{proof}
Theorem~\ref{thre1}  is proved by  contradiction  as follows.  Firstly, we  consider  an interference threshold  $I_{th}^1$ satisfying $0\le I_{th}^1\le  \tilde{I}_{th}$ and  denote the obtained    $\tilde{R}_S^{\star}$ by solving  problem  \eqref{SRMRe3} as $\tilde{R}_S^{\star}=f_{obj, I_{th}^1}^{\star}({\mathcal{Q}}_1)$, where ${\mathcal{Q}}_1=\{\widetilde{ \bf P}_{F,1}^{\star},\widetilde{ \bf W}_{P,1}^{\star},\tau_1^{\star}, \eta_1^{\star}\}$  is the optimal solution to  problem   \eqref{SRMRe3} with  $I_{th}=I_{th}^1$ and  ${ \bf   h}_{M}^H \widetilde{ \bf   W}_{P,1}^{\star}{ \bf   h}_{M} +{ \bf g}_{p}^H \widetilde{ \bf P}_{F,1}^{\star}{ \bf g}_{p}\!< I_{th}^1$ is assumed. It is then  easily found   that there is  another  interference
 threshold $I_{th}^2$ satisfying  $I_{th}^2={ \bf   h}_{M}^H \widetilde{ \bf   W}_{P,1}^{\star}{ \bf   h}_{M} +{ \bf g}_{p}^H \widetilde{ \bf P}_{F,1}^{\star}{ \bf g}_{p}\!< I_{th}^1$, based on which   the  optimal solution   ${\mathcal{Q}}_1$  actually becomes a  feasible solution to  problem \eqref{SRMRe3}.  In other words, we  have
   $f_{obj,I_{th}^1}^{\star}({\mathcal{Q}}_1) \!\le \! f_{obj, I_{th}^2}^{\star}({\mathcal{Q}}_2)$,
where ${\mathcal{Q}}_2\!=\!\{\widetilde{ \bf P}_{F,2}^{\star},\widetilde{ \bf W}_{P,2}^{\star},\tau_2^{\star}, \eta_2^{\star}\}$  is the  optimal solution to  problem \eqref{SRMRe3} with $I_{th}=I_{th}^2$. On the other hand, since  $I_{th}^2<I_{th}^1$, a  smaller feasible  region   is observed  for problem \eqref{SRMRe3}  with  $I_{th}=I_{th}^2$,  which   thus yields 
  $ f_{obj,I_{th}^1}^{\star}({\mathcal{Q}}_1)\! \ge \! f_{obj,I_{th}^2}^{\star}( {\mathcal{Q}}_2).$
   By combining the above two inequalities, it is concluded  that
   $f_{obj,I_{th}^1}^{\star}({\mathcal{Q}}_1)=
    f_{obj,I_{th}^2}^{\star}({\mathcal{Q}}_2).$
 Similarly, for an arbitrary threshold $I_{th}\in[I_{th}^2,I_{th}^1]$, the same   maximum objective value of  problem \eqref{SRMRe3}  is observed. This phenomenon   hints  that the  constraint $\widetilde{\text{CR3}}$  actually has no  effect on  problem \eqref{SRMRe3} and thus can be ignored without loss of optimality. As discussed before, this happens  only when $ {I}_{th}>\tilde{I}_{th}$,  which  contradicts to the original condition  $0\le I_{th}\le \tilde{I}_{th}$. Therefore,  the initial assumption  of ${ \bf   h}_{M}^H \widetilde{ \bf   W}_{P,1}^{\star}{ \bf   h}_{M} +{ \bf g}_{p}^H \widetilde{ \bf P}_{F,1}^{\star}{ \bf g}_{p}\!< I_{th}^1$ is actually  invalid,  and   we must have the optimal solution $\{\widetilde{ \bf   W}_{P,1}^{\star},\widetilde{ \bf P}_{F,1}^{\star}\}$  at the boundary of ${ \bf   h}_{M}^H \widetilde{ \bf   W}_{P,1}^{\star}{ \bf   h}_{M} +{ \bf g}_{p}^H \widetilde{ \bf P}_{F,1}^{\star}{ \bf g}_{p}\!= I_{th}^1$ for  problem \eqref{SRMRe3}  when $0\le I_{th}^1 \le \tilde{I}_{th}$.
\end{proof}
Since the globally optimal solution $\{{\tau}, \widetilde{\bf W}_P,
\widetilde{\bf P}_{F}\}$ to  problem \eqref{SRMRe3}  for any fixed $\eta$ is   available,  our remaining  task is to find the globally optimal $\eta$, as presented  in Theorem~\ref{thre2}. 
\vspace{-2mm}
 \begin{proposition}\label{thre2}
Problem  \eqref{SRMRe3}   is
 quasi-concave  w.r.t.  $\eta$, and the globally optimal $\eta$  can be computed by a one-dimensional search, i.e., Golden section search (GSS) \cite{Step:2018twba}.
\end{proposition}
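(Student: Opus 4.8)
The plan is to treat $\eta$ as the outer variable and study the value function $g(\eta)$, defined as the optimal objective of problem~\eqref{SRMRe3} when $\eta$ is held fixed; by the discussion preceding the theorem this is a concave maximization solvable to global optimality, so $g$ is well defined. First I would pin down the effective domain: since $\widetilde{\text{CR4}}$ with $\widetilde{\bf P}_F={\bf 0}$ forces $\eta\ge 1$, while the point $(\tau,\widetilde{\bf W}_P,\widetilde{\bf P}_F)=(1,{\bf 0},{\bf 0})$ is always feasible with objective $0$, the domain of $g$ is $[1,\infty)$, $g\ge 0$ there and $g(1)=0$; moreover the numerator $1+C_1{\bf h}_R^H\widetilde{\bf P}_F{\bf h}_R/(1-\tau)$ is uniformly bounded by the power, energy and interference budgets $\widetilde{\text{CR1}}$--$\widetilde{\text{CR3}}$, so $g(\eta)\to 0$ as $\eta\to\infty$. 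It then suffices to prove that $g$ is quasi-concave on $[1,\infty)$, i.e.\ that it is unimodal, i.e.\ that every super-level set $\{\eta:g(\eta)\ge\alpha\}$ is an interval.

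The structural fact I would build on is that, for each \emph{fixed} $\tau$, problem~\eqref{SRMRe3} is jointly concave in the remaining variables $(\widetilde{\bf W}_P,\widetilde{\bf P}_F,\eta)$: the objective is concave in $(\widetilde{\bf P}_F,\eta)$ --- a logarithm of an affine function of $\widetilde{\bf P}_F$ minus $\log_2\eta$, scaled by the nonnegative constant $1-\tau$ --- and, crucially, with $\tau$ frozen the troublesome product $\eta(1-\tau)$ in $\widetilde{\text{CR4}}$ becomes \emph{linear} in $(\widetilde{\bf P}_F,\eta)$, while $\widetilde{\text{CR1}}$--$\widetilde{\text{CR3}}$ do not involve $\eta$. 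Hence the partial maximum over $\widetilde{\bf W}_P,\widetilde{\bf P}_F$, call it $g_\tau(\eta)$, is concave in $\eta$, so each set $\{\eta:g_\tau(\eta)\ge\alpha\}$ is an interval and $g(\eta)=\max_{\tau\in[0,1]}g_\tau(\eta)$.

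The crux --- and the step I expect to be the main obstacle --- is to pass from ``each $g_\tau$ is concave'' to ``$g$ is quasi-concave'', since a pointwise maximum of concave functions need not be quasi-concave. I would attack this by showing the union $\{\eta:g(\eta)\ge\alpha\}=\bigcup_{\tau}\{\eta:g_\tau(\eta)\ge\alpha\}$ is connected. For $\alpha\le 0$ it equals $[1,\infty)$; for $\alpha>0$, given $\eta_1<\eta_2$ in the set, attained at optimal points $(\tau_i,\widetilde{\bf W}_{P,i},\widetilde{\bf P}_{F,i})$, and any $\eta_3\in(\eta_1,\eta_2)$, I would exhibit a point feasible for~\eqref{SRMRe3} at $\eta_3$ with objective $\ge\alpha$: take the convex combination of the two optimal points with the weight $\theta$ chosen so that $\eta_3=(1-\theta)\eta_1+\theta\eta_2$ (then $\widetilde{\text{CR1}}$--$\widetilde{\text{CR3}}$ hold by convexity and the bilinear term of $\widetilde{\text{CR4}}$ is controlled through the $\theta$-interpolation of $\eta$), and, if the combined point slightly violates $\widetilde{\text{CR4}}$ --- which can happen when $\tau_1,\tau_2$ are not co-monotone with $\eta_1,\eta_2$ --- scale $\widetilde{\bf P}_F$ down until $\widetilde{\text{CR4}}$ is restored, which only relaxes $\widetilde{\text{CR2}}$--$\widetilde{\text{CR4}}$. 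The delicate part is to verify that after this adjustment the objective is still at least $\alpha$; here one must quantify the objective loss from the scaling against the feasibility margin of $\widetilde{\text{CR4}}$, using the concavity of $-\log_2\eta$ and of the log-rate term. An alternative route to the same unimodality is a sensitivity argument: by the envelope theorem the right derivative is $g'(\eta)=(1-\tau^\star(\eta))\big(\lambda^\star(\eta)-\frac{1}{\eta\ln 2}\big)$, where $\lambda^\star(\eta)\ge 0$ is the multiplier of $\widetilde{\text{CR4}}$, so one would instead need to show that $\eta\,\lambda^\star(\eta)$ --- the suitably scaled marginal value of relaxing $\widetilde{\text{CR4}}$ --- is non-increasing, which forces $g'$ to change sign at most once.

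Finally, once $g$ is established to be quasi-concave (hence unimodal) on $[1,\infty)$ with $g(1)=0$ and $g(\eta)\to 0$, it attains its maximum at some $\eta^\star$ inside a readily computed bracket, and the golden-section search --- which needs only unimodality to converge --- locates $\eta^\star$. Solving the inner concave problem at $\eta^\star$ then returns the global optimum of~\eqref{SRMRe3}, which completes the proof.
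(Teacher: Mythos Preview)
Your ``structural fact'' contains a sign error that undermines the main line. For fixed $\tau$ the objective of~\eqref{SRMRe3} is $(1-\tau)\big[\log_2\!\big(1+\tfrac{C_1{\bf h}_R^H\widetilde{\bf P}_F{\bf h}_R}{1-\tau}\big)-\log_2\eta\big]$, and the term $-\log_2\eta$ is \emph{convex}, not concave, in $\eta$; the Hessian of $(u,\eta)\mapsto\log_2(u/\eta)$ is indefinite. Hence the objective is not jointly concave in $(\widetilde{\bf P}_F,\eta)$, and the partial-maximum-of-a-concave-function rule does not give you concavity of $g_\tau$. What does survive is the weaker statement that each $g_\tau$ is \emph{quasi}-concave: with $\phi_\tau(\eta)=\max\{C_1{\bf h}_R^H\widetilde{\bf P}_F{\bf h}_R:\widetilde{\text{CR1}}\text{--}\widetilde{\text{CR4}}\}$ concave in $\eta$ (LP sensitivity, since $\widetilde{\text{CR4}}$ is linear in $\eta$ for fixed $\tau$), the super-level set $\{g_\tau\ge\alpha\}$ reads $1+\phi_\tau(\eta)/(1-\tau)\ge 2^{\alpha/(1-\tau)}\eta$, i.e.\ a concave function dominating a linear one, hence an interval. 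But then you are taking a pointwise maximum over $\tau$ of merely quasi-concave functions, which is an even weaker hypothesis than the one you already identify as the crux; your interpolation-plus-rescaling step does not obviously close this gap, and the envelope alternative leaves the monotonicity of $\eta\,\lambda^\star(\eta)$ unproved.

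For comparison, the paper does not attempt a convexity decomposition. It introduces an auxiliary $t\le(1-\tau)\log_2(1/\eta)$ so that all $\eta$-dependence sits in two constraints, and then argues heuristically that for small $\eta$ the constraint $\widetilde{\text{CR4}}$ is active and relaxing it raises the optimum, while for large $\eta$ the $t$-constraint dominates and the optimum falls, inferring a single turning point. That argument is itself closer to intuition than to a proof of unimodality (it does not by itself exclude multiple oscillations), but it is the route the authors take. Your approach is more principled in spirit; however, the concavity claim it rests on is false and must be downgraded to quasi-concavity before the remaining $\max_\tau$ step can even be attacked.
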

\vspace{-5mm}
  \begin{proof}
Let's rewrite  problem \eqref{SRMRe3} by introducing an auxiliary variable $t$ as
  \begin{subequations}\label{ARM1}
  \begin{align}
   &\underset{\substack{{\tau}, \eta, t,\widetilde{ \bf W}_P\succeq{ \bf   0},
  \widetilde{ \bf P}_{F}\succeq{ \bf   0}}}
{\text{max}}~(1-\tau)\log_2\bigg(1+ \frac{C_1 { \bf   h}_{R}^H
  \widetilde{ \bf   P}_{F}
   { \bf   h}_{R}}{1-\tau}\bigg)\!+\! t \label{ARM10}\\
&~~~~~~{\rm{s.t.}}~~\widetilde{\text{CR1}}\sim \widetilde{\text{CR3}},\label{ARM11} \\
&~~~~~~~~~~~~\widetilde{\text{CR4}}:~
 1\!-\!\tau\!+\! \text{tr}\big( {\bf R}_E{ \bf   H}_{E}
  \widetilde { \bf   P}_{F}{ \bf   H}_{E}^H\big)\le \eta(1\!-\!\tau) \label{ARM12}\\
&~~~~~~~~~~~~\widetilde{\text{CR5}}: t\le (1\!-\!\tau)\log_2\frac{1}{\eta}. \label{ARM13}
 \end{align}
\end{subequations}

Given any $\eta$,  problem \eqref{ARM1} is  jointly and  strictly concave w.r.t. $\{{\tau},t,\widetilde{ \bf W}_P,
  \widetilde{ \bf P}_{F}\}$ and thus the  unqiue optimal solution  exists. Based on this, it is readily  inferred that the value of \eqref{ARM10}  is continuous on $\eta$.  We  observe that
for a sufficiently  small $\eta$, the value of \eqref{ARM10} is strongly dominated  by the active constraint \eqref{ARM12}.  Upon increasing  $\eta$, the
feasible  region specified by \eqref{ARM12} expands and thus   the value of \eqref{ARM10} increases.  However, when $\eta$ becomes large enough, the constraint  \eqref{ARM13}   with  the small $\log_2\frac{1}{\eta}$ actually  dominates   the value of  \eqref{ARM10}.  In this context,   we   find  that the value of \eqref{ARM10}  decreases with   increasing  $\eta $.  According to the above analysis, it can be   inferred  that there must exist  a turning point  $\hat{\eta}$ for  problem \eqref{SRMRe3}.  Specifically,   with the  increase of  small  $\eta$,  the   value of  \eqref{ARM10}  firstly  increases   until  $\eta$ reaches $\hat{\eta}$, and  then  decreases. This property hints that  problem  \eqref{SRMRe3} is
unimodal (quasi-concave)  w.r.t. $\eta$, of which the globally optimal  value   can be  found by  GSS.
  \end{proof}
 According to Theorem~\ref{thre2},  we firstly determine
    the one-dimensional  search interval  of $\eta$ as follows.  For achieving    a nonzero  secrecy rate,   the maximum value of $\eta$ actually corresponds to the maximum legitimate rate  $R_{up}$ of the FU, which is derived   by solving the following problem
\begin{align}\label{RM1}
   &R_{up}=\underset{\substack{{\tau},\widetilde{ \bf W}_P\succeq{ \bf   0},
  \widetilde{ \bf P}_{F}\succeq{ \bf   0}}}
{\text{max}}~(1-\tau)\log_2\bigg(1+ \frac{ C_1{ \bf   h}_{R}^H
   \widetilde{ \bf   P}_{F}
   { \bf   h}_{R}}{1-\tau}\bigg),~~~{\rm{s.t.}}~~\widetilde{\text{CR1}}\sim \widetilde{\text{CR3}}.
 \end{align} 
 
It is clear  that  problem \eqref{RM1} is also jointly concave w.r.t. $\{{\tau},\widetilde{ \bf W}_P,$ $
  \widetilde{ \bf P}_{F} \}$.  With the  optimal solution $\{R_{up}^{\star},\tau_1^{\star} \}$ to  problem \eqref{RM1},  we  readily infer  that  the value range of    $\eta$  is  $1\le\eta\le 2^{\frac{R_{up}^{\star}}{1-\tau_1^{\star}}}$. Overall, given any $\eta$,  by combining the  joint concavity of  problem  \eqref{SRMRe3}  with  the    GSS for  finding the globally-optimal  $\eta$,   the problem \eqref{SRMRe3}   can be optimally  solved.  More importantly,  we further prove that  ${R}_S^{\star}=\tilde{R}_S^{\star}$   holds  for the  PSRM  problem \eqref{SRM}  and the relaxed problem \eqref{SRMRe3}, as shown  in  Theorem~\ref{thre3}.
  \vspace{-3mm}
\begin{proposition}\label{thre3}
Problem \eqref{SRMRe3}  is a tight relaxation  of  the perfect SRM problem  \eqref{SRM}.  In other words, the optimal solution $\{{\tau}^{\star},\eta^{\star},\widetilde{ \bf W}_P^{\star},
  \widetilde{ \bf P}_{F}^{\star} \}$ to  problem  \eqref{SRMRe3} satisfies  $\text{rank}(\widetilde{ \bf W}_P^{\star})\!\!=\!\!\text{rank}(\widetilde{ \bf P}_{F}^{\star} )\!\!=\!\!1$ and thus  $\tilde{R}_S^{\star}\!=\!{R}_S^{\star}$ holds.  By recalling $\widetilde{\bf W}_P=
\tau{\bf W}_P$ and $\widetilde{\bf P }_{F}
=(1\!-\!\tau) {\bf P}_{F}$, the corresponding  $\{{\tau}^{\star},{ \bf W}_P^{\star},
  { \bf P}_{F}^{\star} \}$  are  also  globally-optimal  to problem \eqref{SRM}.
\end{proposition}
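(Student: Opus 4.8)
The plan is to fix $\eta$ at its optimal value $\eta^\star$ and exploit the joint concavity already established below \eqref{SRMRe3}. Since a strictly feasible point exists for any $I_{th}>0$, Slater's condition holds, so the KKT conditions of the resulting convex program in $\{\tau,\widetilde{\bf W}_P,\widetilde{\bf P}_F\}$ are necessary and sufficient at the optimum $(\tau^\star,\widetilde{\bf W}_P^\star,\widetilde{\bf P}_F^\star)$. I would attach multipliers $\lambda_1,\lambda_2,\lambda_3,\lambda_4\ge0$ to the power part of $\widetilde{\text{CR1}}$ and to $\widetilde{\text{CR2}},\widetilde{\text{CR3}},\widetilde{\text{CR4}}$, and PSD dual matrices ${\bf Z}_1,{\bf Z}_2\succeq{\bf 0}$ to $\widetilde{\bf W}_P\succeq{\bf 0},\widetilde{\bf P}_F\succeq{\bf 0}$, with complementary slackness ${\bf Z}_1\widetilde{\bf W}_P^\star={\bf 0}$ and ${\bf Z}_2\widetilde{\bf P}_F^\star={\bf 0}$. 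Since $\widetilde{\bf W}_P$ is absent from the objective, stationarity in $\widetilde{\bf W}_P$ and in $\widetilde{\bf P}_F$ yields
\begin{align*}
{\bf Z}_1&=\lambda_1{\bf I}_{N_P}-\lambda_2\xi{\bf H}_F^H{\bf H}_F+\lambda_3{\bf h}_M{\bf h}_M^H,\\
{\bf Z}_2&=\lambda_2{\bf I}_{N_F}+\lambda_3{\bf g}_p{\bf g}_p^H+\lambda_4{\bf H}_E^H{\bf R}_E{\bf H}_E-\mu\,{\bf h}_R{\bf h}_R^H,
\end{align*}
where $\mu=C_1\big/\big[\ln 2\,\big(1+C_1{\bf h}_R^H\widetilde{\bf P}_F^\star{\bf h}_R/(1-\tau^\star)\big)\big]>0$ is the (positive) slope of the concave objective.

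The crux is to show that the energy-causality constraint $\widetilde{\text{CR2}}$ is active, i.e. $\lambda_2>0$. I would argue this by contradiction: if $\widetilde{\text{CR2}}$ had slack, the FBS could scale $\widetilde{\bf W}_P^\star$ down (keeping $\widetilde{\text{CR1}}$ satisfied and loosening the interference budget in $\widetilde{\text{CR3}}$) and then push $\widetilde{\bf P}_F^\star$ further along the ${\bf h}_R$ direction, strictly raising the objective while $R_E$ stays pinned to the fixed $\eta^\star$ via $\widetilde{\text{CR4}}$ --- contradicting optimality; the careful version distinguishes which of $\widetilde{\text{CR3}},\widetilde{\text{CR4}}$ is the binding obstacle and uses that, for channels in general position, ${\bf h}_R$ is not annihilated by the relevant channel matrices. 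With $\lambda_2>0$ one also gets $\lambda_1>0$, since otherwise ${\bf Z}_1=\lambda_3{\bf h}_M{\bf h}_M^H-\lambda_2\xi{\bf H}_F^H{\bf H}_F$ would have to be PSD, which is impossible for generic channels when $N_F\ge2$ (a vector orthogonal to ${\bf h}_M$ but not in $\ker{\bf H}_F$ makes the quadratic form negative).

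Given $\lambda_2>0$ and $\lambda_1>0$, the remainder is a rank count. The matrix $\lambda_2{\bf I}_{N_F}+\lambda_3{\bf g}_p{\bf g}_p^H+\lambda_4{\bf H}_E^H{\bf R}_E{\bf H}_E$ is positive definite, so ${\bf Z}_2$ is a nonsingular PSD matrix minus the rank-one term $\mu{\bf h}_R{\bf h}_R^H$, whence $\text{rank}({\bf Z}_2)\ge N_F-1$; by ${\bf Z}_2\widetilde{\bf P}_F^\star={\bf 0}$ this forces $\text{rank}(\widetilde{\bf P}_F^\star)\le1$, and $\widetilde{\bf P}_F^\star\ne{\bf 0}$ (otherwise the secrecy rate would be non-positive, contradicting attainability of a positive value), so $\text{rank}(\widetilde{\bf P}_F^\star)=1$. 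Likewise write ${\bf Z}_1=\lambda_1{\bf I}_{N_P}-{\bf M}$ with ${\bf M}=\lambda_2\xi{\bf H}_F^H{\bf H}_F-\lambda_3{\bf h}_M{\bf h}_M^H$; since ${\bf Z}_1\succeq{\bf 0}$, we have $\lambda_1=\lambda_{\max}({\bf M})$ and $\ker{\bf Z}_1$ is exactly the top eigenspace of ${\bf M}$, which --- being a rank-one perturbation of a PSD matrix --- has a simple largest eigenvalue for channels in general position, so $\text{rank}(\widetilde{\bf W}_P^\star)\le1$ and, being nonzero, equals $1$. (When $\widetilde{\text{CR3}}$ is inactive, $\lambda_3=0$ and this eigenspace is spanned by $\nu({\bf H}_F^H{\bf H}_F)$, the announced closed-form PB beamforming direction.)

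Finally, $\text{rank}(\widetilde{\bf P}_F^\star)=1$ implies $\text{rank}({\bf R}_E{\bf H}_E\widetilde{\bf P}_F^\star{\bf H}_E^H)\le1$, so Lemma~\ref{lemm0} holds with equality and $\det({\bf I}_{N_E}+{\bf R}_E{\bf H}_E\widetilde{\bf P}_F^\star{\bf H}_E^H)=1+\text{tr}({\bf R}_E{\bf H}_E\widetilde{\bf P}_F^\star{\bf H}_E^H)$; hence $(\tau^\star,\eta^\star,\widetilde{\bf W}_P^\star,\widetilde{\bf P}_F^\star)$ is feasible for the unrelaxed problem \eqref{SRM2} with the same objective, giving ${R}_S^\star\ge\tilde{R}_S^\star$, which combined with the relaxation bound $\tilde{R}_S^\star\ge{R}_S^\star$ gives $\tilde{R}_S^\star={R}_S^\star$. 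Undoing the change of variables via ${\bf W}_P^\star=\widetilde{\bf W}_P^\star/\tau^\star$ and ${\bf P}_F^\star=\widetilde{\bf P}_F^\star/(1-\tau^\star)$ --- legitimate since $\tau^\star\in\{0,1\}$ would force zero secrecy rate, hence $0<\tau^\star<1$ --- preserves rank one and yields a globally optimal point of \eqref{SRM}. The main obstacle is the activeness of $\widetilde{\text{CR2}}$ (positivity of $\lambda_2$), including the sub-case where $\widetilde{\text{CR3}}$ is also inactive; once that is secured, the rest is a routine rank count plus a genericity argument on the channels.
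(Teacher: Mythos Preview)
Your overall strategy --- fix $\eta^\star$, exploit convexity, write the KKT system, and do a rank count --- is the same as the paper's, and the final tightness argument via Lemma~\ref{lemm0} is fine. The weak point is precisely the one you flag yourself: the activeness of $\widetilde{\text{CR2}}$ (i.e.\ $\lambda_2>0$). Your perturbation argument does not go through as stated. At the optimum with $\eta=\eta^\star$, the constraint $\widetilde{\text{CR4}}$ is active (otherwise one could decrease $\eta$), so ``pushing $\widetilde{\bf P}_F^\star$ along the ${\bf h}_R$ direction'' strictly increases $\text{tr}({\bf R}_E{\bf H}_E\widetilde{\bf P}_F{\bf H}_E^H)$ whenever ${\bf H}_E{\bf h}_R\neq{\bf 0}$ and therefore \emph{violates} $\widetilde{\text{CR4}}$ rather than leaving $R_E$ ``pinned.'' Loosening $\widetilde{\text{CR3}}$ by shrinking $\widetilde{\bf W}_P$ does nothing for $\widetilde{\text{CR4}}$, so the contradiction does not close. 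One can try to repair this by perturbing $\widetilde{\bf P}_F$ in a direction that raises ${\bf h}_R^H\widetilde{\bf P}_F{\bf h}_R$ while keeping the trace against ${\bf H}_E^H{\bf R}_E{\bf H}_E$ and ${\bf g}_p{\bf g}_p^H$ fixed, but such a PSD-preserving direction need not exist, and a clean argument along those lines is not supplied.

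The paper circumvents this difficulty entirely by a reformulation trick: for each fixed $(\tau,\eta)$, instead of analyzing the KKT system of the inner maximization directly, it passes to the \emph{equivalent power-minimization problem}
\[
\min_{\widetilde{\bf W}_P,\widetilde{\bf P}_F\succeq{\bf 0}}\ \text{tr}(\widetilde{\bf P}_F)\quad\text{s.t.}\quad \widetilde{\text{CR1}}\!\sim\!\widetilde{\text{CR4}},\ \ C_1{\bf h}_R^H\widetilde{\bf P}_F{\bf h}_R\ge f_{\eta,\tau},
\]
where $f_{\eta,\tau}$ is the optimal inner value. Equivalence is shown by the same two-inequality sandwich you use at the end. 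The payoff is that the stationarity condition for $\widetilde{\bf P}_F$ now reads
\[
(1+\beta^\star){\bf I}_{N_F}+\gamma^\star{\bf g}_p{\bf g}_p^H+\rho^\star{\bf H}_E^H{\bf R}_E{\bf H}_E-\psi^\star C_1{\bf h}_R{\bf h}_R^H={\bf Z}_F^\star,
\]
so the identity block carries a \emph{guaranteed} coefficient $1+\beta^\star\ge 1$ coming from the objective $\text{tr}(\widetilde{\bf P}_F)$, and the positive-definiteness needed for the rank count is automatic --- no separate argument that $\lambda_2>0$ is required. The treatment of $\widetilde{\bf W}_P$ then proceeds essentially as you do (two cases on the multiplier of $\widetilde{\text{CR1}}$). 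In short, your plan is sound apart from the $\lambda_2>0$ step; adopting the power-minimization reformulation removes that obstacle cleanly and also avoids the genericity caveats you invoke for $\widetilde{\bf P}_F$.
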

\vspace{-5mm}
  \begin{proof}
    Please refer to Appendix~\ref{AppenA}.
  \end{proof}
\vspace{-3mm}
 
Following   the proof of Theorem~\ref{thre3},  we propose 
  a convexity-based  linear search  for globally solving the  perfect  SRM problem \eqref{SRM}.  To be specific,  given any  $\eta$,   the  relaxed  convex problem   \eqref{SRMRe3}   is  firstly addressed   for obtaining  the optimal  solution  $\{{\tau}^{\star},{ \bf W}_P^{\star},
  { \bf P}_{F}^{\star}\}$ (via variable substitution)  and the resultant achievable    secrecy rate. Then  the  GSS is  applied to   find the globally-optimal $\eta^{\star}$.  Theorem~\ref{thre3}   reveals  that the optimal energy  and information covariance matrices, i.e. ${ \bf W}_P^{\star}$ and $
  { \bf P}_{F}^{\star}$,   to problem \eqref{SRM} satisfy $ \text{rank}({ \bf W}_P^{\star})\!=\!\text{rank}({ \bf P}_{F}^{\star} )\!=\!1$. Physically, this property   means   that  single-stream transmission of  both   PB and  FBS  are   optimal for secrecy performance  of the  wirelessly powered HetNet.
\vspace{-3mm}
\section{Robust  SRMs   under  Imperfect FBS-EVE  CSI }
In this section,  two types of robust SRM problems  are investigated in depth  for the secrecy wirelessly powered HetNet. Specifically, one is the worst-case SRM  associated with deterministic FBS-EVE CSI error. In this case, we   propose  a convexity-based linear search method  for finding the  optimal  worst-case  solution. The other is the outage-constrained  SRM  subject to  Gaussian random FBS-EVE CSI error, for which the   convex reformulation  is  realized  by introducing an auxiliary variable.
\vspace{-4mm}
\subsection{The Proposed  Worst-Case SRM}\label{IV_A}
Recall the system model  in Section II, when considering deterministically  imperfect FBS-EVE channel, we have  
   ${ \bf   H}_{E}=\widehat{\bf H}_{E}+\Delta{ \bf   H}_{E}$,
  where $\widehat{{ \bf   H}}_{E_k}$  denotes  the estimated  FBS-EVE  channel and  $\Delta{ \bf   H}_{E}$  is the norm bounded CSI error, i.e,  $\Vert\Delta{ \bf   H}_{E}\Vert_F\le \xi_f$.  Based on this, the achievable worst-case secrecy rate of the wirelessly powered HetNet is  given by 
$R_{S,Ro}=R_I- \underset{{  \Delta}{ \bf   H}_{E}}{\max} ~ R_E$, and the  resultant  worst-case SRM problem is  formulated as
\begin{align}\label{WSRM}
  {R}_{S,Ro}^{\star}= &\underset{{{\tau},{ \bf   W}_P\succeq{ \bf   0}, { \bf   P}_{F}
  \succeq{ \bf   0} }}{\text{max}}~~\underset{{  \Delta}{ \bf   H}_{E}}{\text{min}}
~~R_I-  R_E,~~{\rm{s.t.}}~~\text{CR1}\sim \text{CR3}.
\end{align}
 As with   the reformulation of  perfect SRM problem \eqref{SRM},  we  also  introduce the auxiliary  variable  $\eta$ to reformulate problem \eqref{WSRM}   as
\begin{align}\label{WSRM2}
  &{R}_{S,Ro}^{\star}=\underset{{{\tau},{ \bf W}_P\succeq{ \bf   0},
  { \bf P}_{F}\succeq{ \bf   0},\eta }}
{\text{max}}~~(1-\tau)\log_2\bigg(1+ {C_1 { \bf   h}_{R}^H
   { \bf   P}_{F}
   { \bf   h}_{R}}\bigg)\!-\!(1\!-\!\tau)\log_2\eta\nonumber\\
&{\rm{s.t.}}~~\text{CR1}\sim\text{CR3}, ~~\text{CR5:}~
   \det\big({{ \bf   I}}_{N_E}\!+\! \bm{R}_E { \bf   H}_{E}
   { \bf   P}_{F}{ \bf   H}_{E}^H\big)\le \eta, ~\Vert\Delta{ \bf   H}_{E}\Vert_F\le \xi_f.
\end{align}

Compared to the   perfect SRM problem \eqref{SRM},   problem \eqref{WSRM2} is more challenging  since the semi-infinite norm bounded CSI error is included  in  $\text{CR5}$. To make    problem  \eqref{WSRM2}  tractable,  we firstly  relax   it using the above  Lemma~\ref{lemm0}   to 
\begin{align}\label{WSRM3}
  &\tilde{R}_{S,Ro}^{\star}=\underset{{{\tau},{ \bf W}_P\succeq{ \bf   0},
  { \bf P}_{F}\succeq{ \bf   0},\eta }}
{\text{max}}~~(1-\tau)\log_2\bigg(1+ {C_1 { \bf   h}_{R}^H
   { \bf   P}_{F}
   { \bf   h}_{R}}\bigg)\!-\!(1\!-\!\tau)\log_2\eta\nonumber\\
&{\rm{s.t.}}~~\text{CR1}\sim\text{CR3},~~\text{CR5:}~
   1\!+\! \text{tr}\big({\bf R}_E { \bf   H}_{E}
   { \bf   P}_{F}{ \bf   H}_{E}^H\big)\le \eta, ~\Vert\Delta{ \bf   H}_{E}\Vert_F\le \xi_f,
\end{align} where $\tilde{R}_{S,Ro}^{\star}$ denotes the maximum achievable  worst-case  secrecy rate by solving 
problem \eqref{WSRM3} and satisfies $\tilde{R}_{S,Ro}^{\star}\ge {R}_{S,Ro}^{\star}$ due to  larger feasible  region of problem \eqref{WSRM3}.  For solving  this non-convex problem  effectively,   both  the equality ${\bf h}_E=\text{vec}((\widehat{{ \bf   H}}_E+{\Delta}{{ \bf   H}_E})^H)
=\hat{{ \bf   h}}_E+{ \bf   \Delta}{{ \bf   h}_E}$ and the identity $\text{Tr}({\bf A}^H{\bf B}{\bf C}{\bf D})= \text{vec}({\bf A})^H({\bf D}^T\otimes {\bf B})\text{vec}({\bf C})$ are   utilized to rewrite  the constraint   $\text{CR5}$ as
  \begin{align}\label{eqrobus}
1+(\hat{{ \bf   h}}_E+{    \Delta}{{ \bf   h}_E})^H
   ({ \bf   R}_E\otimes { \bf   P}_F)(\hat{{ \bf   h}}_E+{   \Delta}{{ \bf   h}_E})\le \eta, ~\Vert\Delta{ \bf   h}_{E}\Vert_F\le \zeta_f.
   \end{align}
   \begin{lemma}\label{lemm1}  \vspace{-3mm}
{(S-procedure) \cite{Luo:2004ffba}}
For the equation $f({ \bf   x})={ \bf   x}^H{ \bf   A}{ \bf   x}+{ \bf   x}^H{ \bf   b}+
{ \bf   b}^H{ \bf   x}+c$, in which ${\bf  x}\in \mathbb{C}^{N}$, ${ \bf   A}\in
\mathbb{H}^{N\times N}$, ${ \bf   b}\in \mathbb{C}^{N}$ and $c$ is a constant, the  following equality  holds
\begin{align}
 f({ \bf   x})\! \le \! 0, \forall {\bf   x}\! \in \! \{{ \bf   x} |  \text{tr}({ \bf   x}{ \bf   x}^H)\! \le \! \! \epsilon_e\} \! \Leftrightarrow  \!  u\left[\begin{array}{cc}
{ \bf   I}_{N} &{ \bf   0}_{N\times 1} \\{ \bf   0}_{N\times 1}^T & \! \! -\!\!  \epsilon_e
\end{array}\right]\! \! -\! \! \left[\begin{array}{cc}
{ \bf   A} &{ \bf   b} \\{ \bf   b}^H& c
\end{array}\right]\! \! \succeq\! \! { \bf   0},  ~\text{with some} ~ u\! \ge\!  0.   \end{align}
\end{lemma}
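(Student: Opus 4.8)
The plan is to prove this as the single-constraint complex S-lemma. First I would homogenize: writing $\tilde{\bf x}=[{\bf x}^H,1]^H\in\mathbb{C}^{N+1}$, ${\bf M}_f=\left[\begin{smallmatrix}{\bf A}&{\bf b}\\{\bf b}^H&c\end{smallmatrix}\right]$ and ${\bf M}_g=\left[\begin{smallmatrix}{\bf I}_N&{\bf 0}\\{\bf 0}^T&-\epsilon_e\end{smallmatrix}\right]$, one has $f({\bf x})=\tilde{\bf x}^H{\bf M}_f\tilde{\bf x}$ and $\text{tr}({\bf x}{\bf x}^H)-\epsilon_e=\tilde{\bf x}^H{\bf M}_g\tilde{\bf x}$, so the right-hand LMI is exactly $u{\bf M}_g-{\bf M}_f\succeq{\bf 0}$. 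The direction ``$\Leftarrow$'' is then immediate: evaluating the PSD quadratic form at $\tilde{\bf x}=[{\bf x}^H,1]^H$ gives $f({\bf x})\le u\big(\text{tr}({\bf x}{\bf x}^H)-\epsilon_e\big)$, and for any ${\bf x}$ with $\text{tr}({\bf x}{\bf x}^H)\le\epsilon_e$ the right side is non-positive since $u\ge 0$.

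For ``$\Rightarrow$'' I would work with the joint numerical range $\mathcal{W}=\{(\tilde{\bf x}^H{\bf M}_f\tilde{\bf x},\,\tilde{\bf x}^H{\bf M}_g\tilde{\bf x}):\tilde{\bf x}\in\mathbb{C}^{N+1}\}\subseteq\mathbb{R}^2$, invoking the classical fact that the joint numerical range of two Hermitian matrices over $\mathbb{C}$ is convex; here it is moreover a convex cone, since replacing $\tilde{\bf x}$ by $\lambda\tilde{\bf x}$ scales both coordinates by $|\lambda|^2$. The next step is to translate the hypothesis into $\mathcal{W}\cap\{(s_1,s_2):s_1>0,\,s_2\le 0\}=\emptyset$: a point generated by $\tilde{\bf x}=[{\bf x}^H,t]^H$ with $t\neq 0$ rescales (on dividing by $t$) to the pair $\big(f({\bf x}/t),\,\Vert{\bf x}/t\Vert^2-\epsilon_e\big)$, so lying in that region would mean $f({\bf x}/t)>0$ while $\Vert{\bf x}/t\Vert^2\le\epsilon_e$, contradicting the hypothesis; for $t=0$ the second coordinate equals $\Vert{\bf x}\Vert^2\ge 0$, which is $\le 0$ only at the origin, where the first coordinate is $0$ as well.

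Disjointness of the convex cone $\mathcal{W}$ from the convex set $\{s_1>0,\,s_2\le 0\}$, which has nonempty interior, then yields by a separating-hyperplane argument (through the origin, since $\mathcal{W}$ is a cone containing $0$) a nonzero pair $(\mu,\nu)$ with $\mu\ge 0$, $\nu\le 0$ and $\mu\,\tilde{\bf x}^H{\bf M}_f\tilde{\bf x}+\nu\,\tilde{\bf x}^H{\bf M}_g\tilde{\bf x}\le 0$ for all $\tilde{\bf x}$, i.e. $\mu{\bf M}_f+\nu{\bf M}_g\preceq{\bf 0}$. The key point to settle is $\mu>0$: if $\mu=0$ then $\nu<0$, so the inequality forces ${\bf M}_g\succeq{\bf 0}$, which is impossible because ${\bf M}_g$ has the negative eigenvalue $-\epsilon_e$ — this is exactly where the strict feasibility $\epsilon_e>0$ (which holds in every application of this lemma in the paper) enters. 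Normalizing $\mu=1$ and setting $u=-\nu\ge 0$ then gives ${\bf M}_f-u{\bf M}_g\preceq{\bf 0}$, i.e. $u{\bf M}_g-{\bf M}_f\succeq{\bf 0}$, the claimed LMI. The main, and essentially only, non-elementary obstacle is the convexity of the complex joint numerical range; the rest is homogenization plus a standard separation argument, and in the write-up I would simply cite \cite{Luo:2004ffba} for that convexity fact rather than reprove it (e.g.\ via the Sturm--Zhang rank-one decomposition of Hermitian matrices).
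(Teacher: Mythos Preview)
Your argument is correct and follows the standard route to the complex single-constraint S-lemma: homogenize, invoke convexity of the joint numerical range of two Hermitian matrices, separate, and use strict feasibility ($\epsilon_e>0$) to rule out $\mu=0$. The easy direction and the translation of the hypothesis into disjointness of $\mathcal{W}$ from $\{s_1>0,\,s_2\le 0\}$ are handled cleanly, including the boundary case $t=0$.

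By contrast, the paper does not prove this lemma at all: it is stated with a citation to \cite{Luo:2004ffba} and then applied directly to convert the semi-infinite constraint \eqref{eqrobus} into the LMI $\widetilde{\text{CR5}}$ in \eqref{WSRM5}. So there is no ``paper's own proof'' to compare against; you have supplied a self-contained justification where the paper simply quotes the result. One minor remark: your appeal to $\epsilon_e>0$ is exactly the Slater-type hypothesis that the usual S-procedure statements carry (existence of a strictly feasible point for the constraint), and you are right that in every use of the lemma here it holds because $\epsilon_e=\zeta_f^2>0$.
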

We then take advantage of     Lemma~\ref{lemm1}  and   auxiliary variables $\widetilde{\bf W}_P=
\tau{\bf W}_P$, $\widetilde{\bf P }_{F}
=(1\!-\!\tau) {\bf P}_{F}$      to  transform  the nonlinear semi-infinite constraint  \eqref{eqrobus}  into a LMI. Then the relaxed WSRM problem \eqref{WSRM3}  is re-expressed   as
\vspace{-3mm}
   \begin{align}\label{WSRM5}
  &\tilde{R}_{S,Ro}^{\star}=\underset{\substack{{\tau},u, \eta,\\ \widetilde{ \bf W}_P\succeq{ \bf   0},
 \widetilde{ \bf P}_{F}\succeq{ \bf   0}}}
{\text{max}}~~(1-\tau)\log_2\bigg(1+\frac {C_1 { \bf   h}_{R}^H\widetilde{ \bf   P}_{F}
   { \bf   h}_{R}}{1-\tau}\bigg)\!-\!(1\!-\!\tau)\log_2\eta \\
&{\rm{s.t.}}~~\widetilde{\text{CR1}}\!\sim \!\widetilde{\text{CR3}},~\widetilde{\text{CR5}}:
 \!  \left[\begin{array}{cc}
u{ \bf   I}_{N}\!-\!({ \bf   R}_E^{ T}\otimes \widetilde{ \bf   P}_F) & -({ \bf   R}_E^{ T}\!\otimes\! \widetilde{ \bf   P}_F)\hat{ \bf   h}_E\\ -\hat{ \bf   h}_E^H({ \bf   R}_E^{ T}\!\otimes\! \widetilde{ \bf   P}_F)  & (\eta\!-\!1)(1\!-\!\tau)
-u\zeta_f^2-\hat{ \bf   h}_E^H({ \bf   R}_E^{ T}\otimes \widetilde{ \bf   P}_F)\hat{ \bf   h}_E
\end{array}\right]\!\succeq\!{ \bf   0},\nonumber
\end{align} where a scalar  variable $u>0$ is  introduced.  In contrast to problem \eqref{SRMRe3}, an additional SDP constraint $\widetilde{\text{CR5}}$ is included in 
  problem \eqref{WSRM5}.
 For any fixed $\eta$,  we easily find that $\widetilde{\text{CR5}}$  is  a convex linear matrix inequality (LMI) w.r.t. $\{{\tau},u,
  \widetilde{ \bf P}_{F}\}$,  so     problem \eqref{WSRM5} becomes  jointly concave w.r.t. $\{{\tau},u, \widetilde{ \bf W}_P,
  \widetilde{ \bf P}_{F} \}$.  Similar to   Theorem~\ref{thre2},  we can
    also   prove   that    problem \eqref{WSRM5} is  quasi-concave w.r.t $\eta$,   since  the    semi-infinite  CSI error ${\Delta}{{ \bf   h}_E}$  in the equivalent problem \eqref{WSRM3} is  independent of  $\eta$.  Based on  the above discussions, it is readily inferred that the proposed  convexity-based linear search  for   problem  \eqref{SRMRe3}  can be directly extended to   problem  \eqref{WSRM5}.
More importantly,  an interesting  insight is  provided in the following  Theorem, namely, 
  problem \eqref{WSRM5}  also admits the rank-1 optimal solution.
    \vspace{-3mm}
\begin{proposition}\label{thre4}
Problem \eqref{WSRM5}  is a tight relaxation of the worst-case SRM problem  \eqref{WSRM},  which  means that  its  optimal solution $\{{\tau}^{\star},\eta^{\star},\widetilde{ \bf W}_P^{\star},
  \widetilde{ \bf P}_{F}^{\star} \}$    satisfies  $\text{rank}(\widetilde{ \bf W}_P^{\star})=\text{rank}(\widetilde{ \bf P}_{F}^{\star} )=1$ and $\tilde{R}_{S,Ro}^{\star}={R}_{S,Ro}^{\star}$.  Meanwhile, the corresponding  original variables $\{{\tau}^{\star},{ \bf W}_P^{\star},
  { \bf P}_{F}^{\star} \}$   are    optimal for  problem \eqref{WSRM}.
\end{proposition}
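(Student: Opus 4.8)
The plan is to reuse the KKT-based argument behind Theorem~\ref{thre3} (Appendix~\ref{AppenA}), adding only the ingredient needed to absorb the semidefinite constraint $\widetilde{\text{CR5}}$ produced by the S-procedure. First I would fix the globally optimal $\eta^{\star}$ returned by the GSS, so that problem \eqref{WSRM5} is jointly concave in $\{\tau,u,\widetilde{\bf W}_P,\widetilde{\bf P}_F\}$ and, since Slater's condition holds, the KKT conditions are necessary and sufficient. I would then form the Lagrangian, attaching $\mu_1\ge0$ to $\text{tr}(\widetilde{\bf W}_P)\le\tau P_P$, $\lambda_2\ge0$ to $\widetilde{\text{CR2}}$, $\lambda_3\ge0$ to $\widetilde{\text{CR3}}$, a matrix $\bm{\Psi}\succeq{\bf 0}$ to the LMI $\widetilde{\text{CR5}}$, and $\bm{\Sigma}_W,\bm{\Sigma}_P\succeq{\bf 0}$ to $\widetilde{\bf W}_P\succeq{\bf 0}$, $\widetilde{\bf P}_F\succeq{\bf 0}$, and work with stationarity together with $\bm{\Sigma}_W\widetilde{\bf W}_P^{\star}={\bf 0}$, $\bm{\Sigma}_P\widetilde{\bf P}_F^{\star}={\bf 0}$ and $\bm{\Psi}\bm{\Phi}(\widetilde{\bf P}_F^{\star})={\bf 0}$, where $\bm{\Phi}(\cdot)$ denotes the matrix appearing in $\widetilde{\text{CR5}}$.

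The rank of $\widetilde{\bf W}_P^{\star}$ needs no new work: since $\widetilde{\bf W}_P$ does not enter $\widetilde{\text{CR5}}$, its stationarity condition is exactly that of Theorem~\ref{thre3}, namely $\bm{\Sigma}_W=\mu_1{\bf I}_{N_P}-\lambda_2\xi{\bf H}_F^H{\bf H}_F+\lambda_3{\bf h}_M{\bf h}_M^H$. The same reasoning that gives $\mu_1>0$ whenever the optimal secrecy rate is positive makes $\mu_1{\bf I}_{N_P}+\lambda_3{\bf h}_M{\bf h}_M^H\succ{\bf 0}$, so $\text{rank}(\bm{\Sigma}_W)\ge N_P-1$ and hence $\text{rank}(\widetilde{\bf W}_P^{\star})\le1$; since $\widetilde{\bf W}_P^{\star}={\bf 0}$ is incompatible with a positive secrecy rate, $\text{rank}(\widetilde{\bf W}_P^{\star})=1$, and when $\widetilde{\text{CR3}}$ is inactive ($\lambda_3=0$) this collapses to $\widetilde{\bf W}_P^{\star}=\tau^{\star}P_P\,\nu({\bf H}_F^H{\bf H}_F)\nu({\bf H}_F^H{\bf H}_F)^H$.

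The one genuinely new step is the rank of $\widetilde{\bf P}_F^{\star}$, for which I would exploit the block factorization
\[
\bm{\Phi}(\widetilde{\bf P}_F)=\begin{bmatrix}u{\bf I}_N & {\bf 0}\\ {\bf 0} & (\eta\!-\!1)(1\!-\!\tau)-u\zeta_f^2\end{bmatrix}-{\bf E}^H\big({\bf R}_E^{T}\!\otimes\!\widetilde{\bf P}_F\big){\bf E},\qquad {\bf E}=[\,{\bf I}_N\ \ \hat{\bf h}_E\,],\quad N=N_E N_F,
\]
together with the identity $\langle\bm{\Psi},{\bf E}^H({\bf R}_E^{T}\!\otimes\!\widetilde{\bf P}_F){\bf E}\rangle=\langle{\bf E}\bm{\Psi}{\bf E}^H,{\bf R}_E^{T}\!\otimes\!\widetilde{\bf P}_F\rangle$. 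This shows that $\widetilde{\text{CR5}}$ contributes to the stationarity of $\widetilde{\bf P}_F$ only through $\mathcal{A}(\bm{\Psi}):=\sum_{i,j=1}^{N_E}[{\bf R}_E]_{ij}\,[{\bf E}\bm{\Psi}{\bf E}^H]_{ij}$ (the $(i,j)$ block having size $N_F$), and a short argument shows that $\mathcal{A}$ maps positive semidefinite matrices to positive semidefinite matrices, since it is a partial trace of a PSD matrix weighted by the PSD matrix ${\bf R}_E\succ{\bf 0}$. As $\nabla_{\widetilde{\bf P}_F}$ of the concave objective equals a positive multiple $a\,{\bf h}_R{\bf h}_R^H$, $a>0$, the stationarity of $\widetilde{\bf P}_F$ becomes $\bm{\Sigma}_P=\lambda_2{\bf I}_{N_F}+\lambda_3{\bf g}_p{\bf g}_p^H+\mathcal{A}(\bm{\Psi})-a\,{\bf h}_R{\bf h}_R^H$, i.e. the Theorem~\ref{thre3} expression with the Eve term $\beta{\bf H}_E^H{\bf R}_E{\bf H}_E$ replaced by $\mathcal{A}(\bm{\Psi})\succeq{\bf 0}$. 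Writing $\bm{\Theta}:=\lambda_2{\bf I}_{N_F}+\lambda_3{\bf g}_p{\bf g}_p^H+\mathcal{A}(\bm{\Psi})$, the Theorem~\ref{thre3} argument (which yields $\bm{\Theta}\succ{\bf 0}$ for a positive secrecy rate, e.g. via $\lambda_2>0$) gives $\text{rank}(\bm{\Sigma}_P)\ge\text{rank}(\bm{\Theta})-1=N_F-1$, since subtracting a rank-one matrix lowers the rank by at most one; hence $\text{rank}(\widetilde{\bf P}_F^{\star})\le1$, and $\widetilde{\bf P}_F^{\star}\neq{\bf 0}$ again forces $\text{rank}(\widetilde{\bf P}_F^{\star})=1$.

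Finally, with $\text{rank}(\widetilde{\bf W}_P^{\star})=\text{rank}(\widetilde{\bf P}_F^{\star})=1$ established, tightness follows as in Theorem~\ref{thre3}: since $\text{rank}\big({\bf R}_E{\bf H}_E\widetilde{\bf P}_F^{\star}{\bf H}_E^H\big)\le1$ for every ${\bf H}_E$ in the uncertainty ball, Lemma~\ref{lemm0} holds with equality, so the relaxed $\widetilde{\text{CR5}}$ coincides with the original determinant constraint in problem \eqref{WSRM2}; consequently $\tilde{R}_{S,Ro}^{\star}\le R_{S,Ro}^{\star}$, which together with the relaxation bound $\tilde{R}_{S,Ro}^{\star}\ge R_{S,Ro}^{\star}$ gives equality, and undoing $\widetilde{\bf W}_P=\tau{\bf W}_P$, $\widetilde{\bf P}_F=(1-\tau){\bf P}_F$ (with $\tau^{\star}\in(0,1)$ forced by a positive secrecy rate) makes $\{\tau^{\star},{\bf W}_P^{\star},{\bf P}_F^{\star}\}$ feasible and optimal for problem \eqref{WSRM}. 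I expect the main obstacle to be the $\widetilde{\bf P}_F$ part, specifically unwinding the Kronecker/partial-trace algebra of the S-procedure dual to confirm $\mathcal{A}(\bm{\Psi})\succeq{\bf 0}$, and verifying that the full-rank claim for $\bm{\Theta}$ (and the $\mu_1>0$, $\lambda_2>0$ claims) carries over unobstructed from the perfect-CSI proof; the remainder is a direct transcription of Theorem~\ref{thre3}.
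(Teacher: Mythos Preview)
Your approach is the same KKT-based rank analysis the paper carries out in Appendix~\ref{AppenB}: the key new step---that the dual multiplier of the LMI $\widetilde{\text{CR5}}$ contributes only a positive semidefinite term to the $\widetilde{\bf P}_F$ stationarity condition---is exactly the paper's $\sum_{i}\widetilde{\bf R}_{E,i}^H{\bf Z}_R\widetilde{\bf R}_{E,i}$, which your partial-trace map $\mathcal{A}(\bm{\Psi})$ simply rephrases. Two places where your execution would need tightening compared with the paper.

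First, the paper does not apply KKT directly to \eqref{WSRM5}; as in Appendix~\ref{AppenA} it first passes (for fixed $\tau,\eta$) to the equivalent power-minimization problem~\eqref{SPR1} with objective $\text{tr}(\widetilde{\bf P}_F)$. This buys the ``$1$'' in the stationarity $(1+\beta^\star){\bf I}_{N_F}+(\text{PSD})-\psi^\star C_1{\bf h}_R{\bf h}_R^H={\bf Z}_F^\star$, so the positive definiteness of your $\bm{\Theta}$ is automatic. Working directly on the SRM objective you instead need $\lambda_2>0$ (i.e.\ $\widetilde{\text{CR2}}$ active) to reach the same conclusion, and that is \emph{not} contained in the Theorem~\ref{thre3} proof---which also uses the power-min reformulation---so it would have to be argued separately.

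Second, your summary of the $\widetilde{\bf W}_P^\star$ step is not what the paper does and, as written, has a gap: ${\bf H}_F^H{\bf H}_F$ is generally \emph{not} rank-one, so ``$\mu_1{\bf I}_{N_P}+\lambda_3{\bf h}_M{\bf h}_M^H\succ{\bf 0}$ implies $\text{rank}(\bm{\Sigma}_W)\ge N_P-1$'' does not follow. The paper instead case-splits on the multiplier $\lambda^\star$ (your $\mu_1$): if $\lambda^\star=0$ then $\beta^\star\xi{\bf H}_F^H{\bf H}_F\preceq\gamma^\star{\bf h}_M{\bf h}_M^H$ forces $\beta^\star=0$ by a rank comparison, leaving a rank-one $\widetilde{\bf W}_P^\star$ in the null space of ${\bf h}_M^H$; if $\lambda^\star>0$ then $\lambda^\star$ must equal the top eigenvalue of $\beta^\star\xi{\bf H}_F^H{\bf H}_F-\gamma^\star{\bf h}_M{\bf h}_M^H$ and $\widetilde{\bf W}_P^\star$ is the corresponding (rank-one) eigenmatrix. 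Your final tightness paragraph (rank-one $\widetilde{\bf P}_F^\star$ makes Lemma~\ref{lemm0} an equality for every ${\bf H}_E$ in the ball, hence $\tilde R_{S,Ro}^\star=R_{S,Ro}^\star$) matches the paper's Step~2.
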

  \vspace{-5mm}
  \begin{proof}
    Please refer to Appendix~\ref{AppenC}.
  \end{proof}  \vspace{-3mm}
Generally,  the  proof of Theorem~\ref{thre4} subject  to the complicated LMI constraint $\widetilde{\text{CR5}}$ is    more  difficult  than that of Theorem~\ref{thre3}.   Based on Theorem~\ref{thre4},   the  globally optimal solution $\{\!{\tau}^{\star},{ \bf W}_P^{\star},
  { \bf P}_{F}^{\star} \!\}$ to the  worst-case SRM problem \eqref{WSRM} can  also be  obtained 
   by successively solving the relaxed problem \eqref{WSRM5}, for which the proposed  convexity-based linear search in Section III  still works.
  \vspace{-4mm}
\subsection{The Proposed Outage-Constrained SRM }
It is widely recognized that  the worst-case   optimization  is the most conservative robust design, which  is  only encountered in practical   systems  with  a  low probability. Hence,   in this subsection,   we  consider the more general case of  statistically  imperfect  CSI, in which  the  FBS-EVE CSI error   is  assumed to be complex  Gaussian distributed, i.e.,  $\Delta{{ \bf   h}}_E=\text{vec}(\Delta{{ \bf   H}}_E^H)\sim\mathcal{CN}({\bf 0},{{ \bf   C}}_E)$,  where ${{ \bf   C}}_E\in
\mathbb{C}^{N_FN_E\times N_FN_E}$ denotes  the  positive semi-definite  error  covariance matrix.    Inspired by the fact  that  under 
  the   unbounded Gaussian  error  $\Delta{{ \bf   h}}_E$, an absolutely safe  beamforming design cannot be  guaranteed,   we instead consider the outage-constrained  SRM  to implement secure communications in  the  wirelessly powered  HetNet.  More specifically,  by defining the maximum secrecy rate outage probability  $p_{out}
 $, a $100(1\!-\!p_{out})\%$-safe design of our wirelessly powered HetNet  is  investigated.  Mathematically, the   outage-constrained SRM  problem is   formulated as
\begin{align}\label{OUTRM}
  &\underset{\substack{{\tau}_0,{ \bf   W}_P\succeq{ \bf   0}, { \bf   P}_{F}
  \succeq{ \bf   0} }} {\text{max}}~~~R_S,~~~~{\rm{s.t.}}~\text{CR1}\sim\text{CR3},~~\text{CR6}:~\text{Pr}_{{\Delta}{{ \bf   H}_E}}\{R_I-R_E\ge R_{S}\}\ge 1-p_{out}.
\end{align} 

Clearly, the secrecy outage constraint  $\text{CR6}$ indicates that  the
probability of the achievable secrecy rate being  over $R_S$
 should be higher than $1-p_{out}$,  and     problem \eqref{OUTRM} aims for maximizing this  $100p_{out}\%$-outage secrecy rate  threshold $R_S$.  However,   problem \eqref{OUTRM} is  computationally intractable   since the constraint $\text{CR6}$ does not have an explicit expression.  Therefore,  we consider replacing the function $R_I-R_E$ in  $\text{CR6}$  by an easy-to-handle function  via the following  Lemmas.  
\begin{lemma}\label{lemm3}\cite{Christensen:2008jwba}
For an arbitrary positive-definite matrix ${\bf E}\in\mathbb{C}^{N\times N}$,
we have
  $-\ln\det( {\bf E}) = \underset{{\bf S}\succeq{\bf 0}}{\text{max}}~
   -\text{tr}({\bf S}{\bf E})
+\ln\det({\bf S})+N$,
 where  the optimal ${\bf S}^{\star}$ is derived as  ${\bf S}^{\star}={\bf E}^{-1}$.
\end{lemma}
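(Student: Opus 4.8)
The plan is to recognize the right-hand side as an unconstrained concave maximization whose maximizer admits a closed form via a first-order stationarity condition. Define $g(\mathbf{S}) = -\text{tr}(\mathbf{S}\mathbf{E}) + \ln\det(\mathbf{S})$ over the Hermitian positive semidefinite cone. Since $\ln\det(\mathbf{S}) = -\infty$ whenever $\mathbf{S}$ is singular, the supremum is effectively taken over the open cone $\mathbf{S}\succ\mathbf{0}$, on which $g$ is differentiable. The map $\mathbf{S}\mapsto -\text{tr}(\mathbf{S}\mathbf{E})$ is linear and $\mathbf{S}\mapsto\ln\det(\mathbf{S})$ is strictly concave on positive definite matrices \cite{Step:2018twba}, hence $g$ is strictly concave and any stationary point is its unique global maximizer.

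Next I would compute the gradient. Using $\nabla_{\mathbf{S}}\ln\det(\mathbf{S}) = \mathbf{S}^{-1}$ and $\nabla_{\mathbf{S}}\text{tr}(\mathbf{S}\mathbf{E}) = \mathbf{E}$, stationarity reads $\mathbf{S}^{-1} - \mathbf{E} = \mathbf{0}$, i.e. $\mathbf{S}^{\star} = \mathbf{E}^{-1}$, which lies in the feasible region because $\mathbf{E}\succ\mathbf{0}$ forces $\mathbf{E}^{-1}\succ\mathbf{0}$. Substituting back gives $g(\mathbf{S}^{\star}) = -\text{tr}(\mathbf{E}^{-1}\mathbf{E}) + \ln\det(\mathbf{E}^{-1}) = -N - \ln\det(\mathbf{E})$, so that $\max_{\mathbf{S}\succeq\mathbf{0}} g(\mathbf{S}) + N = -\ln\det(\mathbf{E})$, which is exactly the claimed identity, with the optimal $\mathbf{S}^{\star} = \mathbf{E}^{-1}$ obtained along the way.

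As an alternative, more elementary route, I could exploit unitary invariance of the objective: with the eigendecomposition $\mathbf{E} = \mathbf{U}\,\text{diag}(\lambda_1,\dots,\lambda_N)\,\mathbf{U}^H$, $\lambda_i>0$, one argues (via Hadamard's inequality) that it is without loss of optimality to take $\mathbf{S} = \mathbf{U}\,\text{diag}(s_1,\dots,s_N)\,\mathbf{U}^H$; the problem then decouples into the $N$ scalar problems $\max_{s_i>0}(-\lambda_i s_i + \ln s_i)$, each solved by $s_i = 1/\lambda_i$ with value $-1-\ln\lambda_i$, and summing plus adding $N$ yields $-\sum_i\ln\lambda_i = -\ln\det(\mathbf{E})$.

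The only point needing a little care is the domain issue, namely justifying that the supremum over $\mathbf{S}\succeq\mathbf{0}$ is attained at an interior point so the stationarity argument is legitimate; this is immediate since $g(\mathbf{S})\to-\infty$ both as $\mathbf{S}$ approaches a singular matrix and as $\|\mathbf{S}\|\to\infty$ (the term $-\text{tr}(\mathbf{S}\mathbf{E})$ dominates, using $\mathbf{E}\succ\mathbf{0}$), so a maximizer exists and must be positive definite. I expect no genuine obstacle here: this is the classical Fenchel-type identity underlying the WMMSE reformulation, and the substantive work is just the gradient computation and back-substitution.
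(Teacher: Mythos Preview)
Your argument is correct and is the standard derivation of this Fenchel-type identity: concavity of $\ln\det(\cdot)$ plus linearity of the trace gives a strictly concave objective, the stationarity condition $\mathbf{S}^{-1}=\mathbf{E}$ yields $\mathbf{S}^\star=\mathbf{E}^{-1}$, and back-substitution produces $-\ln\det(\mathbf{E})$. The domain discussion and the alternative eigenvalue/Hadamard route are both fine.

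There is nothing to compare against here: the paper does not prove Lemma~\ref{lemm3} at all but merely quotes it from \cite{Christensen:2008jwba} and uses it as a black box to linearize $-R_E$ in~\eqref{OUTRM2}. So your proposal is not so much a different route as a self-contained justification that the paper simply omits.
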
\vspace{-5mm}
\begin{lemma}\label{lemm4}
{(Bernstein-type inequality (BTI)  \cite{Yuan:2018vzba})} For an arbitrary vector ${ \bf   x}\in \mathcal{CN}({\bf 0}, {\bf I})$,
    we assume $f({ \bf   x})={ \bf   x}^H{ \bf   A}{ \bf   x}+2\text{Re}\{{ \bf   x}^H{ \bf   b}\}+c$, where  ${ \bf   A}\in\mathbb{H}^{N\times N}$, ${ \bf   b}\in \mathbb{C}^{N}$ and $c$ is a constant. Then  for any $p_{out}\in[0,1]$,  the following convex approximation  holds, i.e. \vspace{-1mm}
\begin{align}\label{eqpower}
\small
\text{Pr}_{{ \bf   x}}\{f({ \bf   x})\!\le\! 0\}\!\ge\! 1\!-\!p_{out}\Longleftarrow\left\{\!\begin{array}{c}
\text{tr}({ \bf   A})+\sqrt{-2\ln p_{out}}  t_1-t_2\ln p_{out}+ c \le 0 \\
\left\Vert\left[ \begin{array}{l}\text{vec}({ \bf   A}) \\ \sqrt{2}{ \bf   u} \end{array}\right] \right\Vert_2 \le t_1\\
t_2{ \bf   I}_n-{ \bf   A}\succeq { \bf   0},~ {t}_2 \ge 0
\end{array}\right.,
\end{align}
 where $t_1$ and $t_2$ are a pair  of   slack variables.
\end{lemma}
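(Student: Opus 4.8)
The plan is to establish \eqref{eqpower} in three moves: (i) convert the safety requirement into a Chernoff bound on the moment generating function (MGF) of the Gaussian quadratic form $f(\mathbf{x})$; (ii) bound that MGF in closed form by elementary inequalities; and (iii) recast the resulting scalar sufficient condition as the stated convex system by introducing epigraph (slack) variables. Throughout set $\rho := -\ln p_{out} \ge 0$, $\mu := \text{tr}(\mathbf{A}) + c = \mathbb{E}[f(\mathbf{x})]$, $v := \Vert\mathbf{A}\Vert_F^2 + 2\Vert\mathbf{b}\Vert_2^2$ (which also equals $\mathrm{Var}(f(\mathbf{x}))$, since the quadratic and linear parts are uncorrelated for circularly symmetric $\mathbf{x}$), and $\lambda^{+} := \max\{\lambda_{\max}(\mathbf{A}), 0\}$. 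Since $\text{Pr}_{\mathbf{x}}\{f(\mathbf{x}) \le 0\} \ge 1 - p_{out}$ is equivalent to $\text{Pr}_{\mathbf{x}}\{f(\mathbf{x}) > 0\} \le e^{-\rho}$, by Markov's inequality it suffices to exhibit one $\theta > 0$ with $\mathbb{E}[e^{\theta f(\mathbf{x})}] \le e^{-\rho}$.

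First I would compute the MGF exactly. Writing $\mathbf{A} = \mathbf{U}\,\mathrm{diag}(\lambda_1,\dots,\lambda_N)\,\mathbf{U}^H$ and substituting $\mathbf{y} = \mathbf{U}^H\mathbf{x} \sim \mathcal{CN}(\mathbf{0},\mathbf{I}_N)$ and $\tilde{\mathbf{b}} = \mathbf{U}^H\mathbf{b}$, the form decouples as $f = \sum_i \lambda_i|y_i|^2 + 2\,\text{Re}(\tilde b_i^{*} y_i) + c$ over i.i.d.\ $\mathcal{CN}(0,1)$ entries, so the MGF factorizes into scalar complex-Gaussian integrals and, for $0 < \theta < 1/\lambda^{+}$, evaluates to $e^{\theta c}\prod_i (1-\theta\lambda_i)^{-1}\exp(\theta^2|\tilde b_i|^2/(1-\theta\lambda_i))$. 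Taking logarithms and applying, termwise, the elementary bounds $-\ln(1-x) \le x + \frac{x^2}{2(1-x^{+})}$ (valid for $x<1$) and $\frac{1}{1-\theta\lambda_i} \le \frac{1}{1-\theta\lambda^{+}}$, together with $\sum_i\lambda_i^2 = \Vert\mathbf{A}\Vert_F^2$ and $\sum_i|\tilde b_i|^2 = \Vert\mathbf{b}\Vert_2^2$, yields the uniform estimate $\ln\mathbb{E}[e^{\theta f(\mathbf{x})}] \le \theta\mu + \frac{\theta^{2} v/2}{1-\theta\lambda^{+}}$ for all $\theta \in (0, 1/\lambda^{+})$.

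Next I would show that ``there exists $\theta \in (0, 1/\lambda^{+})$ with $\theta\mu + \frac{\theta^{2} v/2}{1-\theta\lambda^{+}} \le -\rho$'' is equivalent to the scalar inequality $-\mu \ge \rho\lambda^{+} + \sqrt{2\rho v}$. Because $1 - \theta\lambda^{+} > 0$, clearing the denominator turns this into a convex quadratic inequality in $\theta$; it is satisfiable in the admissible range iff its discriminant is nonnegative, which after cancellation reduces to $(-\mu - \rho\lambda^{+})^2 \ge 2\rho v$, i.e.\ $-\mu \ge \rho\lambda^{+} + \sqrt{2\rho v}$ (which forces a strictly negative mean $\mu$). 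Re-substituting $\rho = -\ln p_{out}$, $\mu = \text{tr}(\mathbf{A}) + c$, $v$ and $\lambda^{+}$ gives precisely $\text{tr}(\mathbf{A}) + \sqrt{-2\ln p_{out}}\,\sqrt{v} - \ln p_{out}\,\lambda^{+} + c \le 0$. Finally I would introduce the slacks $t_1 \ge \sqrt{v}$, which is exactly the second-order cone constraint $\Vert[\text{vec}(\mathbf{A})^{T},\ \sqrt{2}\mathbf{b}^{T}]^{T}\Vert_2 \le t_1$, and $t_2 \ge \lambda^{+}$, which is exactly the LMI $t_2\mathbf{I}_N - \mathbf{A} \succeq \mathbf{0}$ together with $t_2 \ge 0$; since $\text{tr}(\mathbf{A}) + \sqrt{-2\ln p_{out}}\,t_1 - t_2\ln p_{out} + c$ is nondecreasing in $t_1$ and in $t_2$ (as $-\ln p_{out} \ge 0$), its nonpositivity implies the scalar condition above, hence the chance constraint. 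Each of the three constraints is convex (linear, SOC, LMI), as required.

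The step I expect to be the main obstacle is the elementary bounding of the MGF in move (ii): the inequalities must be sharp enough that the discriminant computation in move (iii) returns the coefficients $\sqrt{-2\ln p_{out}}$ and $-\ln p_{out}$ exactly --- a cruder route, such as plugging the minimizer of $\theta \mapsto \theta\mu + \frac{\theta^{2} v/2}{1-\theta\lambda^{+}}$ back into the exponent, yields only the weaker, non-matching bound $\exp(-\mu^2/(2(v-\mu\lambda^{+})))$ --- so one must work with the exact quadratic-in-$\theta$ feasibility rather than with a closed-form minimizer. Secondary points to state explicitly are the finiteness of the MGF on $(0,1/\lambda^{+})$ (hence validity of the Chernoff step there), the sign bookkeeping distinguishing $\lambda_i \ge 0$ from $\lambda_i < 0$ while still producing one uniform bound, and the monotonicity invoked in the epigraph step. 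The remaining pieces --- the eigen-decomposition, the scalar Gaussian integral, and the quadratic algebra --- are routine.
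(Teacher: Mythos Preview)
The paper does not prove Lemma~\ref{lemm4} at all: it is stated as a quoted result from \cite{Yuan:2018vzba} and used as a black box in the reformulation of the outage constraint. So there is no ``paper's own proof'' to compare against; you have supplied a derivation where the authors simply cite one.

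That said, your derivation is the standard route to the complex-Gaussian Bernstein-type inequality (Chernoff/Markov on the MGF, exact diagonalized MGF of a Hermitian quadratic-plus-linear form, the elementary log bound $-\ln(1-x)\le x+\tfrac{x^2}{2(1-x^{+})}$ summed to $\theta\mu+\tfrac{\theta^{2}v/2}{1-\theta\lambda^{+}}$, and then the quadratic-in-$\theta$ feasibility), and the epigraph step correctly identifies $\sqrt{v}=\big\Vert[\mathrm{vec}(\mathbf{A})^{T},\,\sqrt{2}\,\mathbf{b}^{T}]^{T}\big\Vert_2$ and $\lambda^{+}\le t_2\Leftrightarrow t_2\mathbf{I}\succeq\mathbf{A},\,t_2\ge0$. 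Two minor points worth tightening: (i) after the discriminant condition $(\mu+\rho\lambda^{+})^{2}\ge 2\rho v$ you should state explicitly why the admissible root actually lies in $(0,1/\lambda^{+})$ rather than merely existing on $\mathbb{R}$ --- this uses $\mu<0$ and the sign of the leading coefficient $v/2-\mu\lambda^{+}>0$; (ii) the lemma as printed has $\sqrt{2}\,\mathbf{u}$ in the SOC constraint, which is a typo for $\sqrt{2}\,\mathbf{b}$ --- your argument already uses the correct vector. None of this affects correctness; your proposal is more than the paper itself provides.
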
  \vspace{-3mm}
Firstly, by invoking Lemma 3, the  wiretap rate $R_E$  of  the EVE  can be   rewritten
 as \vspace{-1mm}
 \begin{align}\label{OUTRM2}
-R_E
  &=\frac{(1-\tau)}{\ln 2}\underset{{\bf S}\succeq \bm{0}}{\max} -\text{tr}\big[{\bf S}({{ \bf   I}}_{N_E}\!+\! {\bf R}_E^{\frac{1}{2}} { \bf   H}_{E}
  { \bf   P}_{F}{ \bf   H}_{E}^H{\bf R}_E^{\frac{1}{2}})\big]+
  \ln\det({\bf S})+N_E.
 \end{align}  We further
 substitute \eqref{OUTRM2} into   problem \eqref{OUTRM} to  obtain the reformulated constraint $\text{CR6}$ as
  \begin{align}\label{outsec1}
      \text{Pr}_{{    \Delta}{{\bf H}_E}}\left\{
  \text{tr}({ \bf   S} { \bf   R}_E^{\frac{1}{2}}{ \bf   H}_{E}
   { \bf   P}_{F}{ \bf   H}_{E}^H{ \bf   R}_E^{\frac{1}{2}} )-L\le 0\right\}\ge 1-p_{out}
 \end{align} where $L=\ln \big(1+ C_1{ { \bf   h}_{R}^H
   { \bf   P}_{F}
   { \bf   h}_{R}}\big)-\text{tr}({ \bf   S})+
\ln \det{{ \bf   S}}+N_E-\frac{\ln 2 R_S}{1-\tau}.$
 Since $\Delta{{ \bf   h}}_E\sim\mathcal{CN}({\bf 0}, {{ \bf   C}}_E)$, we can  re-express   $\Delta{{ \bf   h}}_E$ as  $\Delta{{ \bf   h}}_E\!=\!{{ \bf   C}}_E^{\frac{1}{2}}{ \bf   x}_e$ with ${ \bf   x}_e\!\sim \!\mathcal{CN}({\bf 0}, {\bf I}_{N_FN_E})$. Furthermore, through the vectorization of \eqref{outsec1}, we  have
 \vspace{-1mm}
\begin{align}\label{outc2}
   \text{Pr}_{{    \Delta}{{\bf x}_e}}\left\{
    \Delta{ \bf   x}_e^H{{ \bf   C}}_E^{\frac{1}{2}} { \bf   P}_S {{ \bf   C}}_E^{\frac{1}{2}}\Delta{ \bf   x}_e +2\text{Re}\{
    \Delta{ \bf   x}_e^H{{ \bf   C}}_E^{\frac{1}{2}}{ \bf   P}_S\hat{{ \bf   h}}_E \}+
    \hat{{ \bf   h}}_E^H { \bf   P}_S \hat{{ \bf   h}}_E-L \le
     0\right\}\ge 1-p_{out}, 
 \end{align} where $
{ \bf   P}_S= ({ \bf   R}_E^{\frac{1}{2}}{ \bf   S}{ \bf   R}_E^{\frac{1}{2}})^T\otimes { \bf   P}_F$. To
make  the probabilistic constraint \eqref{outc2}  tractable, we  adopt
a popular conservative  approximation, the so-called  BTI in Lemma~\ref{lemm4},  for   transforming  it into
a series of  tractable convex  constraints.  Then the outage-constrained SRM problem \eqref{OUTRM}   is      reformulated as \vspace{-5mm}
\begin{align}\label{OUTSRM2} 
&\underset{\substack{t_1,t_2,{\tau},{ \bf   W}_P\succeq{ \bf   0}, \\ { \bf   P}_{F}
  \succeq{ \bf   0}, {\bf S}\succeq { \bf   0} }} {\text{max}}~R_S, ~
{\rm{s.t.}}~\text{CR1}\!\sim\!\text{CR3},~\text{CR6}:{\small\left\{\!\begin{array}{l}
\text{tr}( \widehat{{ \bf   H}}_{CE}{ \bf   P}_S  )\!+\!\sqrt{-2\ln p_{out}} t_1\!-\! t_2\ln p_{out} -\ln (1\!+\!C_1{ { \bf   h}_{R}^H
   { \bf   P}_{F}
   { \bf   h}_{R}}) \\
    +\text{tr}({ \bf   S})\!-\!\ln\det{ \bf   S}\!-\!N_E\!+\!\frac{\ln 2 R_S}{1-\tau} \le 0 \\
\left\Vert\left[ \begin{array}{l}\text{vec}( {{ \bf   C}}_E^{\frac{1}{2}}{ \bf   P}_S{{ \bf   C}}_E^{\frac{1}{2}} ) \\ \sqrt{2}{{ \bf   C}}_E^{\frac{1}{2}}{ \bf   P}_S \hat{{ \bf   h}}_E \end{array}\right] \right\Vert_2 \le t_1\\
t_2{ \bf   I}_{N_FN_E}-{{ \bf   C}}_E^{\frac{1}{2}}{ \bf   P}_S{{ \bf   C}}_E^{\frac{1}{2}} \succeq { \bf   0}, ~t_2 \ge 0
\end{array}\right.,}
\end{align} where $\widehat{{ \bf   H}}_{CE}\!=\!\hat{{ \bf   h}}_E\hat{{ \bf   h}}_E^H+{{ \bf   C}}_E$. Although     problem \eqref{OUTSRM2}  is still not  jointly  concave w.r.t $\{{\tau},{ \bf   W}_P, { \bf   P}_{F},$ $ {\bf S}\}$, it is more  tractable  than the  outage-constrained SRM problem \eqref{OUTRM}.  
Specifically, by  utilizing     $\widetilde{\bf W}_P\!=\!
\tau{\bf W}_P$, $\widetilde{\bf P }_{F}
\!=\!(1\!-\!\tau) {\bf P}_{F}$ and $\widetilde{\bf S}\!=\!
\tau{\bf S}$,  we easily find that   problem \eqref{OUTSRM2}  is jointly concave w.r.t. $\{\tau,\widetilde{\bf W}_P,\widetilde{\bf P }_{F}\}$  when fixing ${\bf S}$, which is  similar to  problems \eqref{SRMRe3} and \eqref{WSRM5}.  In turn,  problem \eqref{OUTSRM2}  is also  concave  w.r.t. ${\bf S}$ when fixing $\{\tau, {\bf W}_P,{\bf P }_{F}\}$. Interestingly, we also  validate that  problem  \eqref{OUTSRM2} admits the optimal   rank-1  solution, as shown in the following Theorem.  \vspace{-3mm}
\begin{proposition}\label{thre5}
The optimal solution $\{{ \bf W}_P^{\star}, { \bf P}_{F}^{\star} \}$ to  problem \eqref{OUTSRM2}  satisfies $\text{rank}({ \bf W}_P^{\star})\!=\!\text{rank}( { \bf P}_{F}^{\star} )\!=\!1$,  which is also a high-quality solution  for the computationally  intractable problem  \eqref{OUTRM}. 
\end{proposition}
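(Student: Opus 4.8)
The plan is to mirror the rank-one analyses of Theorems~\ref{thre3} and~\ref{thre4}, adapted to the Bernstein-type conic constraints in $\mathrm{CR}6$. Recall that problem~\eqref{OUTSRM2} is solved by alternating optimization: with $\mathbf{S}$ fixed it is jointly concave over $\{\tau,\widetilde{\mathbf{W}}_P,\widetilde{\mathbf{P}}_F,t_1,t_2,R_S\}$ after the substitutions $\widetilde{\mathbf{W}}_P=\tau\mathbf{W}_P$, $\widetilde{\mathbf{P}}_F=(1-\tau)\mathbf{P}_F$, $\widetilde{\mathbf{S}}=\tau\mathbf{S}$, while with these variables fixed it is concave in $\mathbf{S}$, whose maximizer is $\mathbf{S}^\star=[\mathbf{I}_{N_E}+\mathbf{R}_E^{1/2}\mathbf{H}_E\mathbf{P}_F\mathbf{H}_E^H\mathbf{R}_E^{1/2}]^{-1}\succ\mathbf{0}$ by Lemma~\ref{lemm3}. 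It therefore suffices to establish the rank-one property for the convex $\mathbf{S}$-fixed subproblem, whose optimum coincides with that of \eqref{OUTSRM2} at a converged point; this subproblem is strictly feasible (take small $\widetilde{\mathbf{W}}_P,\widetilde{\mathbf{P}}_F\succ\mathbf{0}$ and $t_1,t_2$ large), so Slater's condition holds and the KKT conditions are necessary and sufficient. Undoing $\widetilde{\mathbf{W}}_P=\tau\mathbf{W}_P$, $\widetilde{\mathbf{P}}_F=(1-\tau)\mathbf{P}_F$ preserves rank, and $\{\mathbf{W}_P^\star,\mathbf{P}_F^\star\}$ is automatically feasible---hence a high-quality solution---for \eqref{OUTRM}, because by Lemma~\ref{lemm4} the feasible set of \eqref{OUTSRM2} is a conservative (safe) inner approximation of the outage constraint $\mathrm{CR}6$ in \eqref{OUTRM}.

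The energy covariance is the easy half. Since $\widetilde{\mathbf{W}}_P$ is absent from the objective and appears only linearly---in $\mathrm{tr}(\widetilde{\mathbf{W}}_P)\le\tau P_P$, in $\mathrm{tr}(\mathbf{H}_F\widetilde{\mathbf{W}}_P\mathbf{H}_F^H)$ of $\widetilde{\mathrm{CR}2}$, and in $\mathbf{h}_M^H\widetilde{\mathbf{W}}_P\mathbf{h}_M$ of $\widetilde{\mathrm{CR}3}$---fixing all other variables at the optimum shows $\widetilde{\mathbf{W}}_P^\star$ to be an optimizer of an SDP whose only constraints besides $\widetilde{\mathbf{W}}_P\succeq\mathbf{0}$ are two scalar linear inequalities; such an SDP admits a rank-one optimal solution (the standard rank-reduction argument, as in the closed-form derivation and in Theorem~\ref{thre3}), equivalently the optimal $\widetilde{\mathbf{W}}_P$ beamforms along the dominant generalized eigenvector of the pencil $(\mathbf{H}_F^H\mathbf{H}_F,\ \alpha\,\mathbf{h}_M\mathbf{h}_M^H+\mathbf{I})$ for some $\alpha\ge 0$. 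Hence $\mathrm{rank}(\widetilde{\mathbf{W}}_P^\star)=1$.

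For the information covariance $\widetilde{\mathbf{P}}_F$, write the Lagrangian of the $\mathbf{S}$-fixed subproblem and set its $\widetilde{\mathbf{P}}_F$-gradient to zero. Using $\mathbf{P}_S=\frac{1}{1-\tau}(\mathbf{R}_E^{1/2}\mathbf{S}\mathbf{R}_E^{1/2})^T\otimes\widetilde{\mathbf{P}}_F$, the stationarity condition takes the form $\mathbf{Z}=\bm{\Xi}-\frac{\lambda_0 C_1}{(1-\tau)+C_1\,\mathbf{h}_R^H\widetilde{\mathbf{P}}_F\mathbf{h}_R}\,\mathbf{h}_R\mathbf{h}_R^H$, where $\mathbf{Z}\succeq\mathbf{0}$ is the dual variable of $\widetilde{\mathbf{P}}_F\succeq\mathbf{0}$ with $\mathbf{Z}\widetilde{\mathbf{P}}_F=\mathbf{0}$, $\lambda_0\ge 0$ is the multiplier of the first line of $\mathrm{CR}6$, and $\bm{\Xi}=\lambda_{\mathrm{CR}2}\mathbf{I}+\lambda_{\mathrm{CR}3}\mathbf{g}_p\mathbf{g}_p^H+\frac{\lambda_0}{1-\tau}\mathcal{G}(\widehat{\mathbf{H}}_{CE})+(\text{PSD terms from the SOC and LMI duals})$. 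Here $\mathcal{G}(\widehat{\mathbf{H}}_{CE})=\sum_{k,l}[(\mathbf{R}_E^{1/2}\mathbf{S}\mathbf{R}_E^{1/2})^T]_{kl}\,[\widehat{\mathbf{H}}_{CE}]_{(k,l)}$ is the partial-trace image of the block matrix $\widehat{\mathbf{H}}_{CE}=\hat{\mathbf{h}}_E\hat{\mathbf{h}}_E^H+\mathbf{C}_E$ arising from $\mathrm{tr}(\widehat{\mathbf{H}}_{CE}\mathbf{P}_S)$. The multiplier $\lambda_0$ is strictly positive, because $R_S$ enters the Lagrangian \emph{only} through the first line of $\mathrm{CR}6$ (with the positive coefficient $\frac{\ln 2}{1-\tau}$), so $\lambda_0=0$ would leave the Lagrangian unbounded in $R_S$. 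Since $\mathbf{R}_E\succ\mathbf{0}$ and $\mathbf{S}^\star\succ\mathbf{0}$, the weighting matrix $(\mathbf{R}_E^{1/2}\mathbf{S}^\star\mathbf{R}_E^{1/2})^T\succ\mathbf{0}$, and since $\widehat{\mathbf{H}}_{CE}\succeq\mathbf{C}_E\succ\mathbf{0}$, one checks that $\mathcal{G}(\widehat{\mathbf{H}}_{CE})\succ\mathbf{0}$; hence $\bm{\Xi}\succeq\frac{\lambda_0}{1-\tau}\mathcal{G}(\widehat{\mathbf{H}}_{CE})\succ\mathbf{0}$. Then $\mathbf{Z}=\bm{\Xi}-(\text{rank-one PSD})$ has at most one zero eigenvalue, i.e.\ $\mathrm{rank}(\mathbf{Z})\ge N_F-1$, so from $\mathbf{Z}\widetilde{\mathbf{P}}_F=\mathbf{0}$ we get $\mathrm{rank}(\widetilde{\mathbf{P}}_F^\star)\le 1$; and $\widetilde{\mathbf{P}}_F^\star\neq\mathbf{0}$, since a zero information covariance yields zero FU rate and a non-positive objective, which is suboptimal whenever a positive secrecy rate is attainable. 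Therefore $\mathrm{rank}(\widetilde{\mathbf{P}}_F^\star)=\mathrm{rank}(\mathbf{P}_F^\star)=1$, which is the claim.

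The step I expect to be the main obstacle is justifying the positive semidefiniteness of the remaining pieces of $\bm{\Xi}$ coming from the Bernstein-type constraints. Because $\mathbf{P}_S$ carries the Kronecker factor $(\mathbf{R}_E^{1/2}\mathbf{S}\mathbf{R}_E^{1/2})^T$, the $\widetilde{\mathbf{P}}_F$-gradients of the LMI term $\mathrm{tr}(\bm{\Phi}\,\mathbf{C}_E^{1/2}\mathbf{P}_S\mathbf{C}_E^{1/2})$ (with dual $\bm{\Phi}\succeq\mathbf{0}$) and of the squared second-order-cone term $\|\mathbf{C}_E^{1/2}\mathbf{P}_S\mathbf{C}_E^{1/2}\|_F^2+2\|\mathbf{C}_E^{1/2}\mathbf{P}_S\hat{\mathbf{h}}_E\|_2^2$ are again partial-trace (completely-positive) images of positive semidefinite matrices evaluated at $\widetilde{\mathbf{P}}_F^\star\succeq\mathbf{0}$ and $\mathbf{S}^\star\succeq\mathbf{0}$, and hence positive semidefinite---so they can only reinforce $\bm{\Xi}$. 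Carrying out this matrix-calculus bookkeeping cleanly, in particular handling the second-order-cone constraint via its squared form rather than the non-smooth norm and confirming that no indefinite cross terms arise, is the delicate part, but it is entirely analogous to, if somewhat more involved than, Appendices~\ref{AppenA} and~\ref{AppenC}.
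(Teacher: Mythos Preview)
Your treatment of $\widetilde{\mathbf W}_P^{\star}$ is fine and agrees with the paper: the energy covariance enters only through the same three linear constraints as in Appendix~\ref{AppenA}, so the KKT argument there carries over verbatim.

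For $\widetilde{\mathbf P}_F^{\star}$, however, your route diverges from the paper's and runs into exactly the obstacle you flag. The second-order-cone dual variable $(\nu_0,\bm{\nu})$ satisfies only $\|\bm{\nu}\|\le\nu_0$; since $\mathbf y(\widetilde{\mathbf P}_F)$ is \emph{linear} in $\widetilde{\mathbf P}_F$, the contribution of $\widetilde{\mathrm{CR6}}b$ to the $\widetilde{\mathbf P}_F$-stationarity is the adjoint map $\mathcal Y^{*}(\bm{\nu})$, which is a fixed Hermitian matrix with no sign constraint---it is \emph{not} a partial trace of a PSD block and need not be PSD. Your proposed fix, ``square the SOC,'' replaces $\|\mathbf y\|\le t_1$ by $\|\mathbf y\|^2\le t_1^2$, which is not jointly convex in $(\widetilde{\mathbf P}_F,t_1)$, so KKT no longer applies; and evaluating the gradient of $\|\mathbf y\|^2$ at the optimum merely reproduces the same indefinite term via complementary slackness. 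In short, there is no clean reason why the ``$\bm{\Xi}\succ\mathbf 0$ minus rank-one'' decomposition survives the Bernstein-type constraints.

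The paper sidesteps this entirely: instead of KKT, Appendix~\ref{AppenC} uses a \emph{constructive} rank-reduction. Given any optimal $\widetilde{\mathbf P}_F^{\star}$, it forms the rank-one projector $\mathbf T=\widetilde{\mathbf P}_F^{1/2\star}\mathbf h_R\mathbf h_R^H\widetilde{\mathbf P}_F^{1/2\star}/\|\widetilde{\mathbf P}_F^{1/2\star}\mathbf h_R\|^2$ and sets $\widehat{\mathbf P}_F^{\star}=\widetilde{\mathbf P}_F^{1/2\star}\mathbf T\,\widetilde{\mathbf P}_F^{1/2\star}$. Then $\widehat{\mathbf P}_F^{\star}\preceq\widetilde{\mathbf P}_F^{\star}$ (so $\mathrm{tr}(\widehat{\mathbf P}_F^{\star})\le\mathrm{tr}(\widetilde{\mathbf P}_F^{\star})$, $\mathbf g_p^H\widehat{\mathbf P}_F^{\star}\mathbf g_p\le\mathbf g_p^H\widetilde{\mathbf P}_F^{\star}\mathbf g_p$, and all PSD-monotone pieces of $\widetilde{\mathrm{CR6}}$ can only loosen), while $\mathbf h_R^H\widehat{\mathbf P}_F^{\star}\mathbf h_R=\mathbf h_R^H\widetilde{\mathbf P}_F^{\star}\mathbf h_R$ preserves the FU rate. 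Together with a pointwise secrecy-rate comparison, this shows the rank-one $\widehat{\mathbf P}_F^{\star}$ is feasible with the same $R_S$, hence optimal. The advantage of this approach is that it never differentiates through the SOC: it works directly with Loewner-order monotonicity, which is what the Kronecker structure $\widetilde{\mathbf P}_S=(\mathbf R_E^{1/2}\mathbf S\mathbf R_E^{1/2})^T\otimes\widetilde{\mathbf P}_F$ respects.
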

  \vspace{-3mm}
  \begin{proof}
    Please refer to Appendix~\ref{AppenC}.
  \end{proof} 

Based on the above analysis, we  propose an alternating  optimization procedure  for efficiently solving  problem \eqref{OUTSRM2} in the sequel. The  first subproblem   for   optimizing  $\{\tilde{t}_1,\tilde{t}_2,\tau,\widetilde{\bf W}_P,\widetilde{\bf P }_{F}\}$  given  ${\bf S}$ is  expressed as
\begin{align}\label{OUTSRM3}
&\underset{\substack{\tilde{t}_1,\tilde{t}_2,{\tau}, \widetilde{ \bf   W}_P\succeq{ \bf   0}, \widetilde{ \bf   P}_{F}
  \succeq{ \bf   0} }} {\text{max}}~~~~ R_S,\nonumber\\&~~~~{\rm{s.t.}}~~~\widetilde{\text{CR1}}\!\sim\!
\widetilde{\text{CR3}},~\widetilde{\text{CR6}}:\!\!{\small\left\{\begin{array}{l}
\widetilde{\text{CR6}}a:\text{tr}( \widehat{{ \bf   H}}_{CE}\widetilde{ \bf   P}_S  )+\sqrt{-2\ln p_{out}} \tilde{t}_1- \tilde{t}_2\ln p_{out}
    \!+\!(1\!-\!\tau)\text{tr}({ \bf   S})\\
  \!-\!(1\!-\!\tau)\ln\det{ \bf   S}\!-\!(1\!-\!\tau)N_E\!+\!{\ln 2 R_S} \!\le\! (1\!-\!\tau)\ln (1\!+\!\frac{ C_1{ \bf   h}_{R}^H
  \widetilde { \bf   P}_{F}
   { \bf   h}_{R}}{1\!-\!\tau}) \\
\widetilde{\text{CR6}}b: \left\Vert\left[ \begin{array}{l}\text{vec}( {{ \bf   C}}_E^{\frac{1}{2}}\widetilde{ \bf   P}_S{{ \bf   C}}_E^{\frac{1}{2}} ) \\ \sqrt{2}{{ \bf   C}}_E^{\frac{1}{2}}\widetilde{ \bf   P}_S \hat{{ \bf   h}}_E \end{array}\right] \right\Vert_2 \le \tilde{t}_1\\
\widetilde{\text{CR6}}c: \tilde{t}_2{ \bf   I}_{N_FN_E}\!-\!{{ \bf   C}}_E^{\frac{1}{2}}\widetilde{ \bf   P}_S{{ \bf   C}}_E^{\frac{1}{2}} \succeq { \bf   0}, ~\tilde{t}_2 \ge 0,
\end{array}\right.},
\end{align}
where $\widetilde { \bf   P}_{S}\!=\!({ \bf   R}_E^{\frac{1}{2}}{ \bf   S}{ \bf   R}_E^{\frac{1}{2}})^T\otimes \widetilde{ \bf   P}_F$. Due to the joint   concavity of  problem \eqref{OUTSRM3}, we   can  optimally recover  a  high-quality suboptimal solution $\{\tau, { \bf   W}_{P}, { \bf   P}_{F}\}$ to  problem \eqref{OUTRM} from    the  obtained    optimal rank-1 solution  of problem \eqref{OUTSRM3}.   Then the second  subproblem  for optimizing ${ \bf   S}$ given $\{\tau, { \bf   W}_{P}, { \bf   P}_{F}\}$ is   formulated as  
\begin{align}\label{OUTSRM4}
&\underset{\substack{\tilde{t}_1,\tilde{t}_2,
 {\bf S}\succeq{ \bf   0} }} {\text{max}}~~~~R_S,~~~~~{\rm{s.t.}}~~\text{CR1}\sim\text{CR3},~\widetilde{\text{CR6}}.
\end{align}

Since    problem   \eqref{OUTSRM4}  is also  jointly  concave w.r.t. $\{\tilde{t}_1,\tilde{t}_2,
 {\bf S}\succeq{ \bf   0} \}$, it  is concluded that the global optimality at each iteration  is guaranteed  and  the  achievable  secrecy rate $R_S$ is nondecreasing  within the whole iterative process. Moreover, considering  the closed and finite feasible   region of  problem \eqref{OUTSRM2}, the proposed alternating optimization procedure for problem \eqref{OUTSRM2}
is guaranteed to  converge to a locally optimal secrecy rate value.
\vspace{-4mm}
\section{ Secure Communications under    Completely  Unknown   FBS-EVE  CSI}\label{S5}
  In the previous two Sections,  both full  and  partial FBS-EVE CSI are considered  for jointly optimizing  the PB energy covariance matrix, the FBS information covariance matrix and the  time splitting factor.  Nevertheless, in  practice, the EVE may be passive and the FBS is unaware of the existence of  the EVE.  In this context, we  accordingly  propose   artificial noise (AN) aided scheme  for secrecy wirelessly powered HetNets. With  this scheme, the FBS simultaneously transmits  the confidential  signal  ${ \bf s}_F$ and  AN ${ \bf z}_{F}$ to the FU, i.e., ${ \bf x}_F\!=\!{ \bf s}_F\!+\!{ \bf z}_F$,   for   sufficiently  interfering the EVE   without affecting the FU. Hence,    the generated AN should be in the null-space of the FBS-FU channel, i.e., ${\bf h}_R {\bf z}_F\!=\!{\bf 0}$.  We  then  express the AN   as  ${\bf z}_F\!=\!{\bf H}_R^{\perp}{\bf n}_F$, where ${\bf H}_R^{\perp}\!\in \! \mathbb{C}^{N_F\!\times\! (N_F\!-\!1)}$ denotes the orthogonal complement space of   ${\bf h}_R$ and ${\bf n}_F\!\sim\! \mathcal{CN}({\bf 0}, { \bf   \Sigma}_{U})$ is a  novel $(N_F\!-\!1)$-dimensional  AN vector  with  the positive semi-definite  covariance matrix ${ \bf   \Sigma}_{U}\!\succeq \! {\bf 0}$. Since the FBS-EVE CSI is unavailable at the FBS, the direct secrecy rate optimization  of  the   considered wirelessly powered HetNet is infeasible.
  As an alternative, we consider   maximizing the  average  AN power  $(1\!-\!\tau)\text{tr}( {{ \bf   \Sigma}}_{U})$  at the EVE subject to the minimum legitimate rate at the FU,   so as to  impose  as much  interference as possible    on   the potential
EVE  for  reducing   its   wiretaping  capability.  The AN aided secrecy problem  is  ultimately formulated  as 
\begin{align}\label{ARSRM2}
  &\underset{\substack{{\tau},{ \bf   W}_P\succeq{ \bf   0},
  { \bf   P}_{F}\succeq{ \bf   0},{ \bf   \Sigma}_U\succeq{ \bf   0} }}
{\text{max}}~~~(1\!-\!\tau)\text{tr}( {{ \bf  \Sigma}}_{U})\nonumber\\
&~~~~~~~~~{\rm{s.t.}}~~{\text{CR1}}, {\text{CR3}},~{\text{AR2}}:
(1-\tau)\log_2\bigg(1+ \frac{ { \bf   h}_{R}^H
   { \bf   P}_{F}
   { \bf   h}_{R}}{\sigma_n^{2}+\frac{P_M}{N_M}\Vert {\bf g }_R\Vert^2}\bigg)
\ge {R}_{th}\nonumber\\
 &~~~~~~~~~~~~~~~{\text{AR3}}:(1\!-\!\tau)\text{tr}({{ \bf   P}}_{F}\!+\! { \bf   \Sigma}_{U})\!
 \le \tau \xi \big[\text{tr}({ \bf    H}_F{ \bf    W}_P{ \bf    H}_F^H)+\frac{P_M}{N_M}\text{tr}(
  { \bf    G}_F{ \bf    G}_F^H)\big]
\end{align} 

Naturally,  problem \eqref{ARSRM2} with   coupled variables is not jointly concave w.r.t $\{{\tau},{ \bf   W}_P,
  { \bf   P}_{F},{ \bf   \Sigma}_U \}$. Based on the definitions $ \widetilde{ \bf   W}_P=\tau{ \bf   W}_P $, $  \widetilde{ \bf   P}_{F}=(1-\tau)  { \bf   P}_{F}$  and $\widetilde{ \bf \Sigma}_{U}=(1-\tau)  { \bf  \Sigma}_{U}$,   problem (\ref{ARSRM2}) can  be   rewritten  as
  \vspace{-2mm}
\begin{align}\label{ARSRM3}
  &\underset{\substack{{\tau},\widetilde{ \bf   W}_P\succeq{ \bf   0},
  \widetilde{ \bf   P}_{F}\succeq{ \bf   0},\widetilde{ \bf   \Sigma}_U\succeq{ \bf   0} }}
{\text{max}}~~~\text{tr}( \widetilde{{ \bf   \Sigma}}_{U})\nonumber\\
&~~~~~~~~~{\rm{s.t.}}~~\widetilde{\text{CR1}}, \widetilde{\text{CR3}},~\widetilde{\text{AR2}}:
  (1-\tau)\log_2\bigg(1+ \frac{ C_1{ \bf   h}_{R}^H
  \widetilde { \bf   P}_{F}
   { \bf   h}_{R}}{1-\tau}\bigg) \ge {R}_{th}\nonumber\\
 &~~~~~~~~~~~~~~~\widetilde{\text{AR3}}:~\text{tr}(\widetilde{{ \bf   P}}_{F}\!+\! \widetilde{ \bf   \Sigma}_{U})\!
 \le \xi \text{tr}({ \bf    H}_F\widetilde{ \bf    W}_P{ \bf    H}_F^H)+\xi \tau\frac{P_M}{N_M}\text{tr}(
  { \bf    G}_F{ \bf    G}_F^H).
\end{align}

Similar to   problems \eqref{SRMRe3} and \eqref{WSRM5},  problem \eqref{ARSRM3}  composed of  a  linear objective function and  concave  constraints  is also jointly concave w.r.t $\{{\tau},\widetilde{ \bf   W}_P,
  \widetilde{ \bf   P}_{F},\widetilde{ \bf   \Sigma}_U \}$, and thus can be optimally solved. Moreover, the optimal  rank-1  solution  $\{\widetilde{ \bf   W}_P^{\star}, \widetilde{ \bf   P}_{F}^{\star},\widetilde{ \bf   \Sigma}_U^{\star} \}$ to   problem  \eqref{ARSRM3}  is revealed   in the following  Theorem.
  \vspace{-3mm}
 \begin{proposition}\label{thre5}
 There always  exists an optimal solution $\{{ \bf   W}_P^{\star}, { \bf   P}_{F}^{\star},{ \bf   \Sigma}_U^{\star} \}$ with $\text{rank}({ \bf   W}_P^{\star})=\text{rank}({ \bf   P}_{F}^{\star})=\text{rank}({ \bf   \Sigma}_U^{\star})=1$ for   the  AN aided secrecy  problem   \eqref{ARSRM2}.
\end{proposition}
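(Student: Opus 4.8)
The plan is to establish the rank-1 optimality of the energy covariance $\widetilde{\bf W}_P^{\star}$, the information covariance $\widetilde{\bf P}_F^{\star}$, and the AN covariance $\widetilde{\bf \Sigma}_U^{\star}$ in problem \eqref{ARSRM3} via a KKT-based analysis, and then transfer this conclusion to the original problem \eqref{ARSRM2} through the variable substitution $\widetilde{\bf W}_P=\tau{\bf W}_P$, $\widetilde{\bf P}_F=(1-\tau){\bf P}_F$, $\widetilde{\bf \Sigma}_U=(1-\tau){\bf \Sigma}_U$. Since problem \eqref{ARSRM3} is jointly concave (as already argued in the excerpt) and satisfies Slater's condition for $I_{th}>0$, the KKT conditions are necessary and sufficient for global optimality, so it suffices to exhibit one optimal point of the stated rank structure.

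First I would write out the Lagrangian of \eqref{ARSRM3} with multipliers: $\mu_1\ge0$ for $\operatorname{tr}(\widetilde{\bf W}_P)\le\tau P_P$, $\lambda\ge0$ for $\widetilde{\text{AR3}}$, $\nu\ge0$ for $\widetilde{\text{CR3}}$, $\rho\ge0$ for $\widetilde{\text{AR2}}$, and PSD matrix multipliers ${\bf Z}_W,{\bf Z}_P,{\bf Z}_\Sigma\succeq{\bf 0}$ for the three semidefinite constraints, with complementary slackness ${\bf Z}_W\widetilde{\bf W}_P^{\star}={\bf 0}$ etc. The stationarity condition in $\widetilde{\bf \Sigma}_U$ gives ${\bf Z}_\Sigma={\bf I}-\lambda{\bf I}=(1-\lambda){\bf I}$, which forces $\lambda<1$ (else ${\bf Z}_\Sigma$ not PSD or the objective unbounded reasoning fails) and shows ${\bf Z}_\Sigma=(1-\lambda){\bf I}\succ{\bf 0}$, hence $\widetilde{\bf \Sigma}_U^{\star}={\bf 0}$ whenever $\widetilde{\text{AR3}}$ is not binding — but if it is binding we instead argue that any optimal $\widetilde{\bf \Sigma}_U^{\star}$ can be replaced by a rank-1 matrix $\operatorname{tr}(\widetilde{\bf \Sigma}_U^{\star})\,{\bf v}{\bf v}^H$ along the dominant eigenvector of ${\bf H}_F^H{\bf H}_F$ restricted appropriately, preserving the trace objective and all constraints since only $\operatorname{tr}(\widetilde{\bf \Sigma}_U)$ appears. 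For $\widetilde{\bf W}_P$, the stationarity condition reads ${\bf Z}_W=\mu_1{\bf I}+\nu{\bf h}_M{\bf h}_M^H-\lambda\xi{\bf H}_F^H{\bf H}_F$; the standard argument (as in the proof of Theorem~\ref{thre3}) shows $\mu_1{\bf I}-\lambda\xi{\bf H}_F^H{\bf H}_F$ together with the rank-one perturbation $\nu{\bf h}_M{\bf h}_M^H$ leaves ${\bf Z}_W$ with at most a one-dimensional null space, so $\operatorname{rank}(\widetilde{\bf W}_P^{\star})\le1$; feasibility with $R_{th}>0$ forces it to be exactly 1. For $\widetilde{\bf P}_F$, the only term promoting higher rank is the concave logarithm $\log_2(1+C_1{\bf h}_R^H\widetilde{\bf P}_F{\bf h}_R/(1-\tau))$ in $\widetilde{\text{AR2}}$, which depends on $\widetilde{\bf P}_F$ only through the scalar ${\bf h}_R^H\widetilde{\bf P}_F{\bf h}_R$; hence replacing $\widetilde{\bf P}_F^{\star}$ by $(\,{\bf h}_R^H\widetilde{\bf P}_F^{\star}{\bf h}_R/\|{\bf h}_R\|^4\,){\bf h}_R{\bf h}_R^H$ keeps ${\bf h}_R^H\widetilde{\bf P}_F{\bf h}_R$ fixed, does not increase $\operatorname{tr}(\widetilde{\bf P}_F)$ or ${\bf g}_p^H\widetilde{\bf P}_F{\bf g}_p$ (by a Cauchy–Schwarz / trace-inequality step), hence relaxes $\widetilde{\text{AR3}}$ and $\widetilde{\text{CR3}}$ while the objective $\operatorname{tr}(\widetilde{\bf \Sigma}_U)$ is unaffected — so a rank-1 $\widetilde{\bf P}_F^{\star}$ is WLOG optimal.

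The final step is the transfer back to \eqref{ARSRM2}: since $\tau^{\star}\in(0,1)$ at any sensible optimum (the cases $\tau=0$ and $\tau=1$ give zero harvested energy or zero transmission time, violating $\widetilde{\text{AR2}}$ with $R_{th}>0$), the maps ${\bf W}_P^{\star}=\widetilde{\bf W}_P^{\star}/\tau^{\star}$, ${\bf P}_F^{\star}=\widetilde{\bf P}_F^{\star}/(1-\tau^{\star})$, ${\bf \Sigma}_U^{\star}=\widetilde{\bf \Sigma}_U^{\star}/(1-\tau^{\star})$ are rank-preserving positive scalings, so the original covariances inherit rank 1. I expect the main obstacle to be the $\widetilde{\bf W}_P$ rank argument in the regime where the cross-tier interference constraint $\widetilde{\text{CR3}}$ is active (i.e. $\nu>0$), because then ${\bf Z}_W$ is a difference of a PSD matrix and $\lambda\xi{\bf H}_F^H{\bf H}_F$ plus a rank-one term, and one must carefully verify — exactly as in Appendix~\ref{AppenA} for Theorem~\ref{thre3} — that the null space of ${\bf Z}_W$ cannot exceed dimension one; this hinges on the fact that $\mu_1{\bf I}-\lambda\xi{\bf H}_F^H{\bf H}_F$ has at most one nonpositive eigenvalue at optimality (otherwise one could strictly increase the harvested energy, contradicting optimality combined with the tightness of $\widetilde{\text{AR3}}$). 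I would therefore reuse the machinery of Appendix~\ref{AppenA} essentially verbatim for the $\widetilde{\bf W}_P$ block and handle the $\widetilde{\bf P}_F$ and $\widetilde{\bf \Sigma}_U$ blocks by the elementary substitution arguments above.
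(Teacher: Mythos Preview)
Your overall plan (KKT analysis on the convexified problem \eqref{ARSRM3}, then scale back) matches the paper's approach, and your treatment of $\widetilde{\bf W}_P$ by reusing the argument of Appendix~\ref{AppenA} is exactly what the paper does. The $\widetilde{\bf\Sigma}_U$ block is also essentially fine: since $\widetilde{\bf\Sigma}_U$ enters \eqref{ARSRM3} only through its trace, any rank-one PSD matrix of equal trace is optimal (the reference to the dominant eigenvector of ${\bf H}_F^H{\bf H}_F$ is irrelevant, and the sign in your stationarity condition is off, but the conclusion survives).

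The genuine gap is in the $\widetilde{\bf P}_F$ step. Your proposed rank-one replacement $\widehat{\bf P}_F=({\bf h}_R^H\widetilde{\bf P}_F^\star{\bf h}_R/\|{\bf h}_R\|^4)\,{\bf h}_R{\bf h}_R^H$ preserves ${\bf h}_R^H\widetilde{\bf P}_F{\bf h}_R$ and never increases the trace, but it can \emph{increase} ${\bf g}_p^H\widetilde{\bf P}_F{\bf g}_p$ and hence violate $\widetilde{\text{CR3}}$. Take ${\bf h}_R=(1,1)^T$, ${\bf g}_p=(0,1)^T$, $\widetilde{\bf P}_F^\star=\operatorname{diag}(1,0)$: then ${\bf g}_p^H\widetilde{\bf P}_F^\star{\bf g}_p=0$ but ${\bf g}_p^H\widehat{\bf P}_F{\bf g}_p=1/4$. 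No Cauchy--Schwarz step rescues this, because your replacement rotates the beam outside the original range of $\widetilde{\bf P}_F^\star$. The paper instead passes to an equivalent power-minimization problem and reads off from stationarity in $\widetilde{\bf P}_F$ that $[(1+\beta^\star){\bf I}+\gamma^\star{\bf g}_p{\bf g}_p^H]\widetilde{\bf P}_F^\star=\psi^\star C_1{\bf h}_R{\bf h}_R^H\widetilde{\bf P}_F^\star$, forcing $\operatorname{rank}(\widetilde{\bf P}_F^\star)\le 1$ since the bracketed factor is positive definite. If you prefer a substitution route, the correct replacement (cf.\ Appendix~\ref{AppenC}) is $\widehat{\bf P}_F=(\widetilde{\bf P}_F^\star{\bf h}_R)(\widetilde{\bf P}_F^\star{\bf h}_R)^H/({\bf h}_R^H\widetilde{\bf P}_F^\star{\bf h}_R)$; for that choice ${\bf g}_p^H\widehat{\bf P}_F{\bf g}_p\le{\bf g}_p^H\widetilde{\bf P}_F^\star{\bf g}_p$ does follow from Cauchy--Schwarz with respect to the semi-inner product induced by $\widetilde{\bf P}_F^\star$.
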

  \vspace{-3mm}
  \begin{proof}
    Please refer to Appendix~\ref{AppenD}.
  \end{proof}
  \vspace{-3mm}
  
 \begin{proposition}\label{total}
For all the above cases of FBS-EVE CSI, if  the  cross-interference constraint ${\text{CR3}}$  is inactive,   a closed-form      expression for the PB energy covariance  matrix  can be derived as  ${ \bf   W}_P^{\star}=P_P{{\bf w }_P{\bf w }_P^{ H }}$ with ${\bf w }_P=\nu_{\max}({\bf H}_F^H{\bf H}_F)$.
\end{proposition}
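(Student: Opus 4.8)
The plan is to exploit the fact that, whenever the cross-tier interference constraint $\text{CR3}$ (equivalently $\widetilde{\text{CR3}}$) is inactive, the PB energy covariance matrix $\widetilde{\bf W}_P$ (or ${\bf W}_P$) appears in the relevant optimization problem only through two places: the energy-causality constraint $\widetilde{\text{CR2}}$ (or $\text{AR3}$), where it enters via the term $\xi\,\text{tr}({\bf H}_F\widetilde{\bf W}_P{\bf H}_F^H)$, and the power budget $\text{tr}(\widetilde{\bf W}_P)\le\tau P_P$. The objective function and all remaining constraints are monotonically non-decreasing in the harvested energy --- more harvested energy at the FBS can only enlarge the feasible set for $\widetilde{\bf P}_F$ (and $\widetilde{\bm\Sigma}_U$) and hence can only increase the achievable secrecy rate (or AN power, or legitimate rate). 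Therefore, at optimality one must choose $\widetilde{\bf W}_P$ to maximize $\text{tr}({\bf H}_F\widetilde{\bf W}_P{\bf H}_F^H)=\text{tr}({\bf H}_F^H{\bf H}_F\widetilde{\bf W}_P)$ subject only to $\widetilde{\bf W}_P\succeq{\bf 0}$ and $\text{tr}(\widetilde{\bf W}_P)\le\tau P_P$.

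The key steps, in order, are as follows. First I would argue the monotonicity claim precisely: fix $\{\tau,\eta,\widetilde{\bf P}_F,\dots\}$ at an optimal point of the relevant relaxed problem (\eqref{SRMRe3}, \eqref{WSRM5}, \eqref{OUTSRM2}, or \eqref{ARSRM3}); since $\widetilde{\text{CR3}}$ is inactive by hypothesis, replacing $\widetilde{\bf W}_P$ by any other feasible $\widetilde{\bf W}_P'$ with $\text{tr}({\bf H}_F\widetilde{\bf W}_P'{\bf H}_F^H)\ge\text{tr}({\bf H}_F\widetilde{\bf W}_P{\bf H}_F^H)$ keeps the point feasible (the right-hand side of $\widetilde{\text{CR2}}$ or $\widetilde{\text{AR3}}$ only grows) and does not decrease the objective. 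Second, I would solve the scalar-constrained SDP $\max_{\widetilde{\bf W}_P\succeq{\bf 0},\,\text{tr}(\widetilde{\bf W}_P)\le\tau P_P}\text{tr}({\bf H}_F^H{\bf H}_F\widetilde{\bf W}_P)$: this is a classical eigenvalue problem whose optimal value is $\tau P_P\cdot\lambda_{\max}({\bf H}_F^H{\bf H}_F)$, attained by the rank-1 matrix $\widetilde{\bf W}_P^{\star}=\tau P_P\,{\bf w}_P{\bf w}_P^H$ with ${\bf w}_P=\nu_{\max}({\bf H}_F^H{\bf H}_F)$ (the proof is to diagonalize ${\bf H}_F^H{\bf H}_F$ and note that, among all PSD matrices of bounded trace, putting all the ``mass'' on the top eigenvector is optimal). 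Third, I would undo the substitution $\widetilde{\bf W}_P=\tau{\bf W}_P$ to recover ${\bf W}_P^{\star}=P_P\,{\bf w}_P{\bf w}_P^H$, which automatically has rank one and saturates the power constraint $\text{tr}({\bf W}_P^{\star})=P_P$, consistent with Theorems~\ref{thre3}, \ref{thre4}, \ref{thre5}; and I would note the argument is verbatim the same in the perfect-CSI, worst-case, outage-constrained, and AN-aided formulations, because in all four the energy-causality constraint has the identical structure and the remaining constraints/objective are energy-monotone.

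The main obstacle I expect is the monotonicity argument in the outage-constrained case \eqref{OUTSRM2}, where $\widetilde{\bf W}_P$ is coupled to the auxiliary matrix ${\bf S}$ through the Bernstein-type constraints and the alternating structure: I would need to check that for \emph{fixed} ${\bf S}$ the subproblem \eqref{OUTSRM3} is still energy-monotone in the same sense (enlarging the right-hand side of $\widetilde{\text{AR3}}$/$\widetilde{\text{CR2}}$-type causality constraint relaxes the only constraint in which $\widetilde{\bf W}_P$ appears, once $\widetilde{\text{CR3}}$ is dropped), so that the same closed form for $\widetilde{\bf W}_P^{\star}$ holds at every iteration and hence at convergence. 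A minor secondary point is uniqueness: if $\lambda_{\max}({\bf H}_F^H{\bf H}_F)$ has multiplicity greater than one the maximizer is not unique, so the statement should be read as ``there always exists an optimal ${\bf W}_P^{\star}$ of the claimed form,'' which is all that is needed; I would remark on this explicitly to align with the existence phrasing already used in Theorems~\ref{thre3}--\ref{thre5}.
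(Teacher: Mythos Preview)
Your proposal is correct and follows essentially the same route as the paper: once $\text{CR3}$ is inactive, ${\bf W}_P$ influences the problem only through the harvested-energy term in $\text{CR2}$/$\text{AR3}$ and its own trace budget, so one should maximize $\text{tr}({\bf H}_F^H{\bf H}_F\widetilde{\bf W}_P)$ subject to $\text{tr}(\widetilde{\bf W}_P)\le\tau P_P$, yielding the stated rank-1 eigenvector solution.

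The one mild difference is in how the monotonicity step is justified. The paper first invokes the already-proved rank-1 structure of ${\bf P}_F^{\star}=\lambda_P{\bf p}_F{\bf p}_F^H$, substitutes it into the secrecy-rate expression, and shows explicitly that $R_S$ is non-decreasing in the scalar $\lambda_P$ (and similarly that the AN power is non-decreasing in available energy), concluding that $\text{CR2}$/$\text{AR3}$ must be active and hence the right-hand side should be maximized. Your feasible-set enlargement argument bypasses the rank-1 dependence entirely: increasing the harvested energy can only relax $\widetilde{\text{CR2}}$/$\widetilde{\text{AR3}}$, so the optimal value cannot decrease, regardless of the rank of ${\bf P}_F^{\star}$. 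This is a slightly more elementary and self-contained justification that applies uniformly to all four formulations without having to re-derive monotonicity in each case; the paper's version, in turn, makes the activeness of the energy-causality constraint explicit. Your remarks on the outage-constrained subproblem and on non-uniqueness when $\lambda_{\max}({\bf H}_F^H{\bf H}_F)$ is repeated are sound additions the paper does not spell out.
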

  \begin{proof}
Please refer to Appendix~\ref{AppenE}.
  \end{proof}

\section{Simulation Results and Discussions} \label{simulation}
In this section, numerical  simulation results are provided for   quantifying the     secrecy rate performance  of the   wirelessly powered HetNet.
Unless otherwise stated, in the following simulations the MBS, the PB, the FBS and the EVE   are equipped  with $N_M=2$, $N_P=4$, $N_F=4$  and $N_E=3$ antennas, respectively.  Furthermore, a  single-antenna MU and a single-antenna  FU  are considered.
The locations of the  MBS,  the MU and the PB  are   $(0, 0)$m, $(150, 0)$m and $(100, 100)$m, respectively.  Since the FBS  mainly  harvests  wireless energy from the PB, the FBS  location is near the PB  and thus  set as $(105,105 )$m. Moreover, the FU and the EVE are located  at $(105, 200)$m and $(155, 105)$m, respectively.
For the perfect SRM (PSRM), we assume that  all   channel coefficients  follow the i.i.d.  Rayleigh distribution with zero mean and  $ 10^{-3}d^{-\alpha}$ variance, where $d$ denotes the actual  distance between two arbitrary terminals  and $\alpha=2$ is the pathloss exponent.  The  Gaussian noise variance is  set as $\sigma_n^2=-110$dBm.  The maximum transmit   power of the MBS and  PB  are  defined as $P_M=10$dBm and $P_P=20$dBm, respectively. The FBS energy harvesting efficiency is $\xi=0.8$.  Initially,  we fix the interference threshold  to be  $I_{th}= -65$dBm.  For the  worst-case SRM (WSRM), the bound of    FBS-EVE CSI  error is set as $\xi_f= 0.2$. By contrast, for the outage-constrained SRM (OSRM),    the covariance matrix  of   FBS-EVE CSI error  ${{ \bf   C}}_E$ and  the outage probability $p_{out}$ are defined  as   ${{ \bf   C}}_E=\zeta^2{\bf I}_{N_E}$ with $\zeta=0.05$ and $p_{out}=0.1$, respectively.  Finally,  for the proposed AN scheme,  an    legitimate  rate threshold at the FU is  chosen as $R_{th}=\frac{R_{\max}}{2}$, where $R_{\max}$ denotes the maximum  achievable rate  of the  wirelessly powered HetNet without the   EVE  and    can be optimally derived  according to \cite{Lohani:2016etba}. Note that all simulation   points  are plotted  by averaging over  200 Monte-Carlo experiments.

Additionally, we   consider   four   benchmarks for  the proposed secrecy beamforming designs    under  three different  levels of FBS-EVE CSI. Specifically, for both   the  perfect and unknown  FBS-EVE CSI cases,   the  equal  power allocation  schemes  are adopted at the FBS ( also referred to as FBS-EPA)  with the fixed   and optimized time allocation, i.e. $\tau=\frac{1}{2}$  and    $\tau=\tau^{\star}$, respectively. Moreover, both the two  benchmarks  assume  ${\bf W}_P=\frac{P_P}{N_P}{\bf I}_{N_P}$.  While   for  the imperfect FBS-EVE CSI,   the nonrobust SRM scheme in \cite{Lohani:2016etba}  is  adopted as  a benchmark   for the  proposed WSRM, where the achievable secrecy rate is  calculated  by  substituting   the   optimal  solution  $\{{\tau},{ \bf   W}_P,
  { \bf   P}_{F}\}$   derived when  $\xi=0$  into the worst-case FBS-EVE channel.  Furthermore, by referring to \cite{Tabassum:2015tsba}, the  worst-case  SRM with the 
  error bound  $\xi=\sqrt{\frac{\eta}{2}F_{\chi_{2N_FN_E}^{2}}^{-1}(1-p_{out})}$ is  actually  a  conservative  approximation for the proposed OSRM, where   $F_{\chi_{2N_FN_E}^{2}}^{-1}(\cdot)$ denotes  the inverse cumulative density function (CDF) of a Chisquare random variable with degrees of freedom $2N_FN_E$.   Therefore,   we  also consider  it   to be a benchmark  for  the proposed OSRM. \footnote {Notice that 
if the predefined interference level $I_{th}$   is not realized  by  the adopted benchmarks,  the corresponding achievable secrecy rate is set  to  zero.}
  \vspace{-4mm}
  \begin{figure}[tbp]
    \vspace{-8mm}
     \centering                                               \includegraphics[width=3.5in]{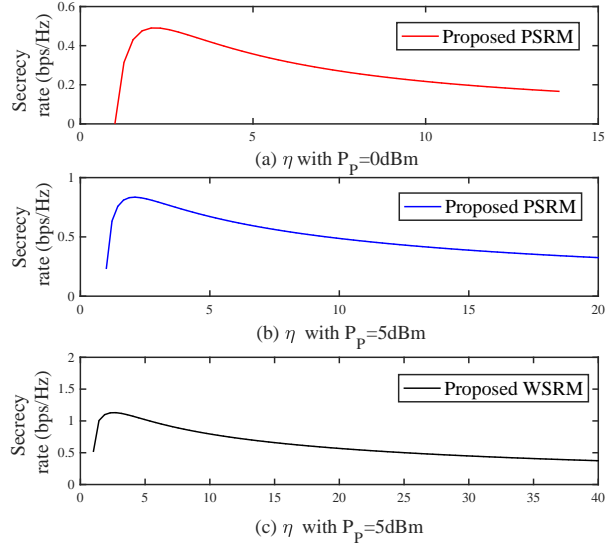}
    \vspace{-6mm}
     \caption{Achievable secrecy rates of the proposed  PSRM and WSRM  versus  the parameter $\eta$ to validate Theorem~\ref{thre2}.} %
     \label{Fig2}            \vspace{-8mm}
     \end{figure}
\subsection{The Proposed Perfect and Worst-Case  SRMs}

In this subsection, we evaluate the achievable secrecy rates of the proposed PSRM and WSRM  in Section~\ref{III} and Section~\ref{IV_A} corresponding to  problem \eqref{SRMRe3} and problem \eqref{WSRM5},  respectively.
Firstly, to validate  Theorem~\ref{thre2},  the achievable  secrecy rates of the above two schemes 
over   a feasible range of $\eta$ are  shown  in Fig.~\ref{Fig2}, where different PB transmit power  $P_P=0,5$dBm  are considered.
 It  is clearly  observed from Fig.~\ref{Fig2} that
both the  PSRM problem \eqref{SRMRe3} and  WSRM problem \eqref{WSRM5} are  unimodal (quasi-concave)  functions  of $\eta$, and thus  the   globally  optimal $\eta$  can be  determined   via the   GSS. 
\begin{figure}[tbp]
\vspace{-6mm}
\centering                                                       \subfigure[$N_E=3$]{
 \begin{minipage}{8cm}
\centering                                      
\includegraphics[scale=0.45]{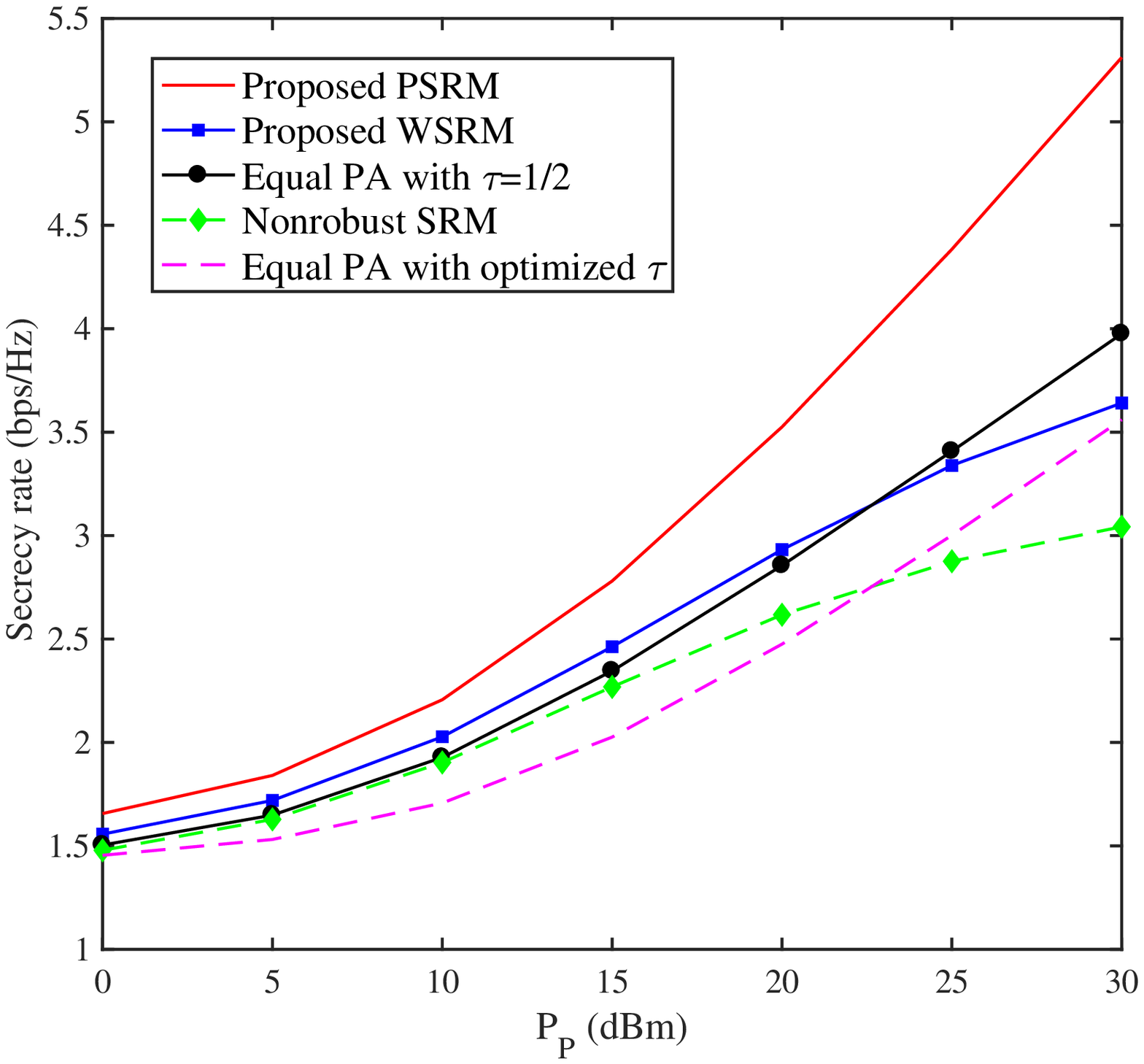} \end{minipage}}
\subfigure[\vspace{5mm} $N_E=4$]{
\begin{minipage}{8cm}
\centering                      
\includegraphics[scale=0.45]{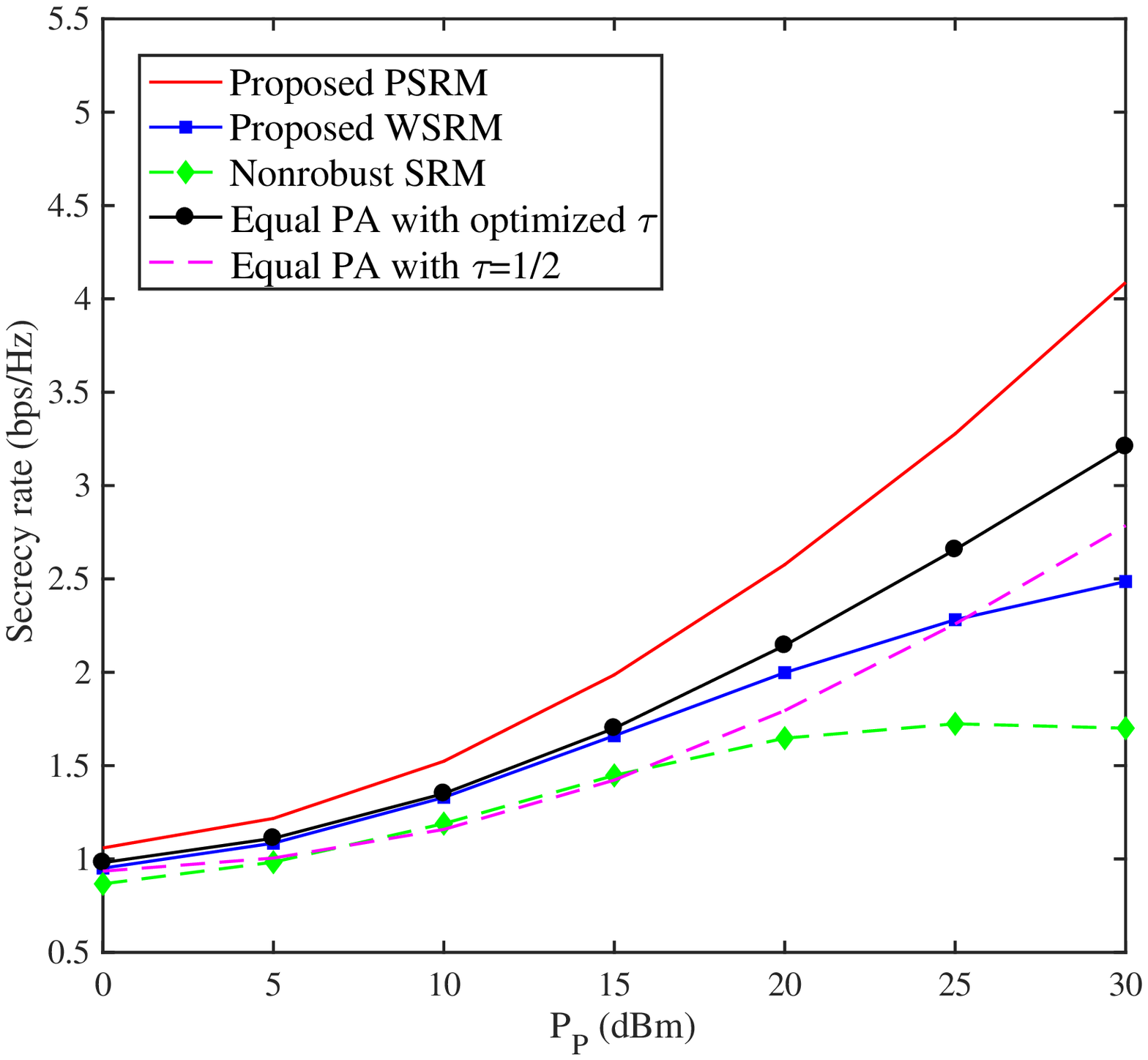} \end{minipage}}
\vspace{-6mm}
\caption{Achievable  secrecy rates for  different  schemes  versus  the PB transmit power $P_P$ with (a) $N_E=3$ and (b)  $N_E=4$.} %
\label{Fig3}             
\vspace{-8mm}
\end{figure}

Next, Fig.~\ref{Fig3} depicts   the  achievable  secrecy rates  of  all   schemes versus the PB
 transmit power $P_P$ with  different numbers of EVE's antennas $N_E$.  From Fig.~\ref{Fig3}\,(a) with $N_E=3$, we   readily find that the achievable secrecy rates   of all schemes  increase with  $P_P$. 
   The proposed PSRM    realizes  the highest secrecy  rate 
among all schemes.  While  the proposed WSRM  considering the influence  of  practical CSI error on the  secrecy beamforming design   naturally  performs better than the nonrobust counterpart.  By  comparing the two adopted  FBS-EPA benchmarks, we clearly see  that    time allocation  plays an   important role in    improving   secrecy rate performance  of the wirelessly powered HetNet.  Furthermore,  in Fig.~\ref{Fig3}\,(b) with $N_E=4$, the same comparisons  among all schemes as in Fig.~\ref{Fig3}\,(a) can  be observed. Moreover,   each scheme achieves  a lower  secrecy rate  than  its   counterpart in Fig.~\ref{Fig3}\,(a)  due  to  the increased  EVE's wiretap capability.
\begin{figure}[tp!]\vspace{-10mm}
\centering
\includegraphics[width=3.5in]{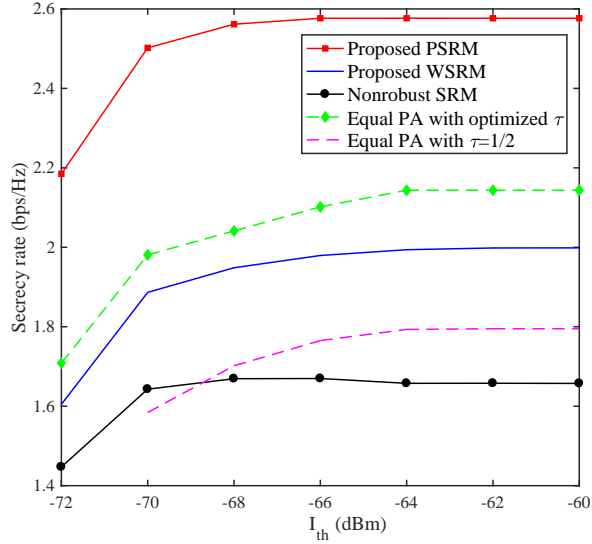}
\vspace{-7mm}
\caption{Achievable secrecy rates for  different  schemes  versus the interference level  $I_{th}$. }\label{Fig5}
\vspace{-3mm}
\end{figure}
\begin{figure}[tp!]
\centering
\includegraphics[width=3.5in]{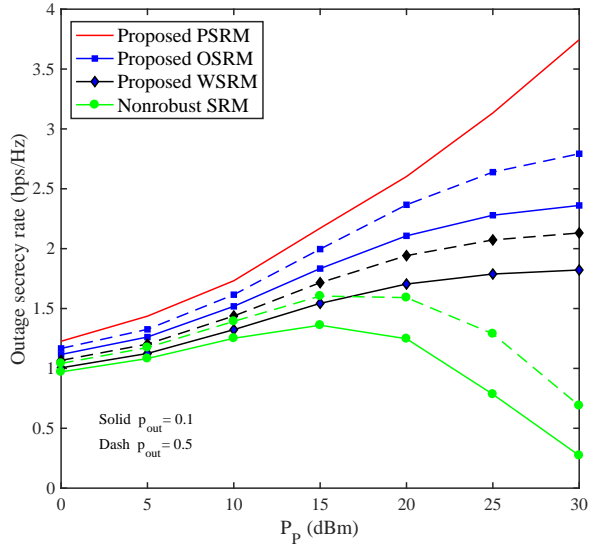}
\vspace{-8mm}
\caption{Achievable secrecy rates for  different schemes  versus the PB transmit power $P_P$ under different outage probabilities. } \label{Fig6}
\vspace{-8mm}
\end{figure}

Fig.~\ref{Fig5}  investigates the achievable  secrecy rates    of all  schemes  versus   the   interference level $I_{th}$. We  clearly find from Fig.~\ref{Fig5} that the  secrecy  rate value for each scheme   firstly increases with $I_{th}$ until  $I_{th}$ reaches a certain threshold  $I_{th}^{mi}$,  then  becomes saturated. Here,  $I_{th}^{mi}$  represents   the optimal solution  of the    PSRM problem \eqref{SRMRe3} or the WSRM problem \eqref{WSRM5}  with  the interference constraint  neglected.    This is because that when a small $0<I_{th}\le I_{th}^{mi}$ is adopted, it is easily inferred that the achievable  secrecy  rates of all  schemes are   dominated  by the  cross-interference constraint. In this context, since the  feasible  regions of both problems  \eqref{SRMRe3} and \eqref{WSRM5}  expand    with the increasing  $I_{th}$, the corresponding achievable  secrecy  rates also increase. However, when $I_{th}$ becomes sufficiently large, i.e $I_{th}>I_{th}^{mi}$, the  cross-interference constraint  actually becomes   inactive and  thus has no effect on   achievable secrecy  rates.  That is to say,   the achievable secrecy  rates  become saturated  regardless of the increasing $I_{th}$ when $I_{th}>I_{th}^{mi}$.  In particular,  the saturated  secrecy rate value  can    be   derived  from    problem \eqref{SRMRe3}(\eqref{WSRM5}) by neglecting the  cross-interference constraint.  Additionally,   the proposed PSRM   still has   the  best  secrecy rate  performance  among
 all   schemes  and the proposed WSRM   provides   much  higher secrecy rate  than  the nonrobust SRM.
\vspace{-4mm}
\subsection{The Proposed Outage-Constrained SRM }
In this subsection, we investigate the  achievable secrecy rates  of the proposed OSRM, i.e. problem \eqref{OUTRM} in Section IV. B.  As  mentioned before, the proposed WSRM with FBS-EVE's  CSI error bound $\xi_f=\sqrt{\frac{\eta}{2}F_{\chi_{2N_FN_E}^{2}}^{-1}(1-p_{out})}$ can  be  regarded as another conservation  reformulation   for  the intractable outage probability constraint, so we  also adopt it  as a benchmark    for fair comparisons hereafter.

 Fig.~\ref{Fig6} shows secrecy rate performance  for  all schemes  versus PB transmit power $P_P$  under  different outage probabilities $p_{out}\!=\!0.1, 0.5$.  We firstly observe  from Fig.~\ref{Fig6}
 that  for  both  considered  $p_{out}$,  the  achievable secrecy rates of the proposed PSRM, WSRM and OSRM  all increase with $P_P$.  Meanwhile, the proposed OSRM    has  a better secrecy rate  performance  than the proposed WSRM, while  the nonrobust  SRM  performs worst due to the ignorance of  CSI error.  Furthermore,  we find that for each scheme the higher  secrecy rate  is realized   with  the 
   outage probability $P_{out}\!=\!0.5$. This is because that  when the outage tolerance is  relaxed, i.e. from $P_{out}\!=\!0.1$ to $P_{out}\!=\!0.5$,  the corresponding  secrecy outage rate threshold also becomes  higher.
\begin{figure}[t]                                                           
\vspace{-10mm}
\centering
\includegraphics[width=3.5in]{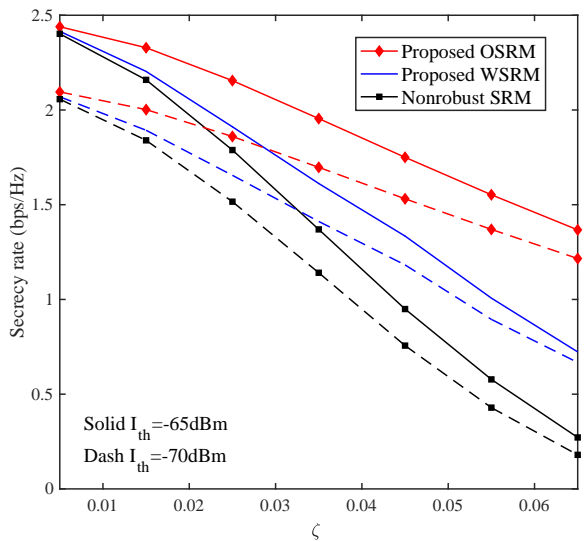}\vspace{-8mm}
\caption{ Achievable secrecy rates for  different  schemes  versus the  CSI error variance $\zeta$  under different interference levels $I_{th}$. } %
 \label{Fig7}
 \vspace{-8mm}
\end{figure}
\begin{figure}[t] \vspace{-8mm}
\centering
\includegraphics[width=3.5in]{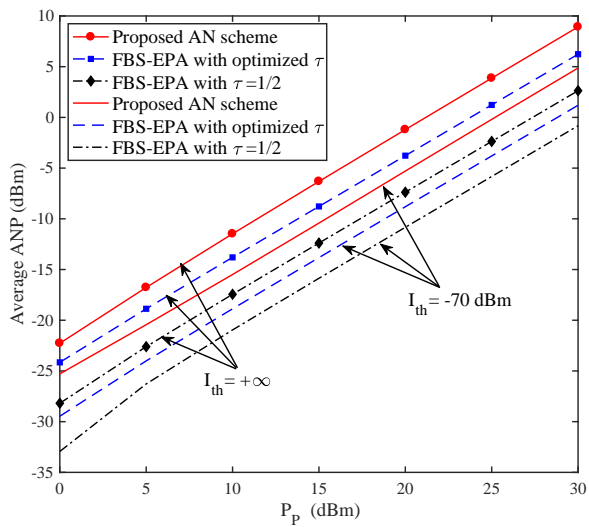}\vspace{-5mm}
\caption{ Artificial noise power for  different  schemes  versus the PB transmit power $P_P$   under different interference levels $I_{th}$, where   $R_{th}=1.5$(bps/Hz).} %
  \label{Fig8}
  \vspace{-8mm}
\end{figure}
Fig.~\ref{Fig7}  illustrates   secrecy rate performance  of  all  schemes  versus  the   CSI  error  variance  $\zeta$   with    different interference levels $I_{th}=-65,-70$dBm.  On the one hand, for each $I_{th}$, we observe  that the  achievable secrecy rates  of all   schemes  decrease  with the increasing $\zeta$.   The proposed OSRM  still  has better  secrecy rate performance   than the  proposed WSRM. While   the nonrobust SRM   performs the worst. Moreover, the performance   gaps  between the   proposed OSRM  and the  other two  schemes   both become  larger   when $\zeta$ increases.   On the other hand, the   higher  secrecy rates  of all  schemes  are observed at   the  interference level $I_{th}\!=\!-65$dBm  due to the larger feasible  region  of  our  secrecy beamforming design.
\vspace{-5mm}
\subsection{The proposed AN scheme }
\begin{figure}[t]
\centering
\includegraphics[width=3.5in]{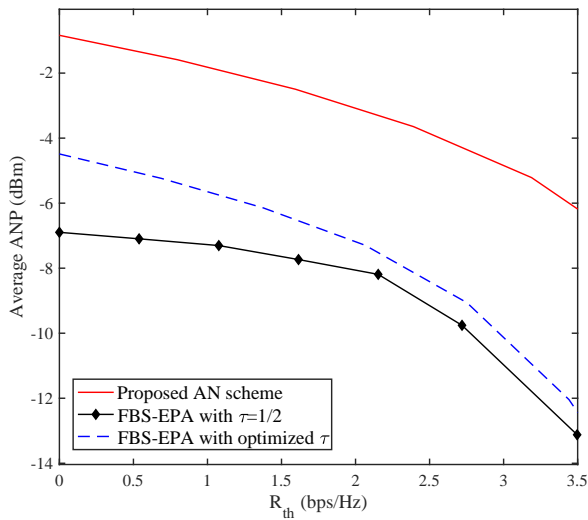}
\vspace{-8mm}
\caption{ Artificial noise power for   different schemes  versus the legitimate  rate threshold $R_{th}$.}
  \label{Fig10}
  \vspace{-8mm}
\end{figure}
 In this subsection, our goal is to  illustrate the  optimal  artificial noise  power (ANP)   of the  proposed AN scheme in Section~\ref{S5}.  In Fig.~\ref{Fig8},  the optimized   ANP  for  different schemes  versus the  PB transmit power $P_P$  with  different  interference thresholds  $I_{th}=-70, +\infty$ is  shown, where the legitimate  rate threshold is set as  $R_{th}=1.5$(bps/Hz). \footnote{ Notice that for  some small  $P_P$, by which  the legitimate rate threshold    $R_{th}$ is not supported, we  correspondingly   set  the achievable secrecy rate  to be zero.} We firstly find from Fig. ~\ref{Fig8} the achieved ANP increases with  $P_P$ for all three schemes. Moreover, the proposed AN scheme  has the highest ANP  compared to the other two schemes,   implying  that   the strongest  interference  is imposed on   the  EVE  to   reduce    its wiretapping capability.

Fig.~\ref{Fig10}   shows the  optimal  ANP  for   different schemes versus the legitimate rate threshold $R_{th}$. It is clear from Fig.~\ref{Fig10}  that  for each scheme,  the optimized  ANP   decreases with
the increasing   $R_{th}$  since  the more  portion of   PB transmit   power $P_P$ is allocated for information transfer  to  the  FU for  achieving  the  rate threshold   $R_{th}$. Furthermore, the proposed AN scheme still realizes the highest ANP  among all  schemes, which promotes  secure communications  of the wirelessly powered HetNet via  sufficiently interfering  the  malicious Eve.
\vspace{-4mm}
\section{Conclusions}
In this paper,  we   investigated  secure communications of the wirelessly powered HetNet from  different perspectives of the FBS-EVE  CSI. Firstly, in the case of    perfect  FBS-EVE CSI, the perfect  and   worst-case SRMs  was studied  by jointly optimizing the PB energy and FBS information covariance matrices as well as the time splitting factor, which  can be  globally addressed    by the proposed convexity-based linear search.  Secondly, considering    deterministically  and  statistically  imperfect cases of  FBS-EVE  CSI,   we applied S-procedure and  BTI   to transform  intractable CSI  error related  constraints into a series of tractable convex ones, respectively.      Finally, for   the completely  unknown  FBS-EVE  CSI case,  an  AN aided secrecy  beamforming design  was proposed  to interfere the  EVE as much as possible. In particular,   
when the cross-interference  constraint  is negligible, the closed-form   PB energy covariance matrix for all conisidred cases can be derived. More importantly, we   also proved    the rank-1 property of the optimal solutions for all studied SRM problems.  Numerical experiments   verified  the excellent  secrecy performance of  all our proposed  secrecy beamforming designs.
\begin{appendices}
\section{  }\label{AppenA}
Firstly,  assuming  the fixed $\tau$ and  $\eta$,  we can  reexpress the relaxed  SRM problem \eqref{SRMRe3} as
\begin{align}\label{AppenB1}
\tilde{R}_S^{\star}=\underset{0\le \tau \le 1,\eta \ge 1}{\text{max}}\left\{\begin{array}{l}
   \underset{\substack{\widetilde{ \bf W}_P\succeq{ \bf   0},
  \widetilde{ \bf P}_{F}\succeq{ \bf   0}}}
{\text{max}}~ { C_1{ \bf   h}_{R}^H
   \widetilde{ \bf   P}_{F}
   { \bf   h}_{R}}\\
     ~~~~~{\rm{s.t.}}~~~~
 ~~\widetilde{\text{CR1}}\sim \widetilde{\text{CR4}}  \end{array}\right..
\end{align}
Then the  proof  for Theorem~\ref{thre3} consists of two steps:  In the first step, we prove that  for any given  $\tau$ and $\eta$, the optimal solution $\{ {\widetilde{\bf  W}}_{P, 1}, \widetilde{ \bf P}_{F,1}\}$ to the inner maximization  problem  in \eqref{AppenB1} is of  rank-1. In the second step, we show that   $\{ {\widetilde{\bf  W}}_{P,1}, \widetilde{ \bf P}_{F,1}\}$ is also  optimal to   problem \eqref{SRM2} ( equivalent to the perfect SRM problem  \eqref{SRM}),  when the same  $\tau$ and $\eta$ are adopted. As a result, the equivalence between   problem \eqref{SRMRe3}   and the   perfect SRM problem  \eqref{SRM} is established.

\textbf{Step 1:}
To prove that the optimal solution to  the inner maximization problem in \eqref{AppenB1} is of  rank-1, we firstly consider its equivalent counterpart, i.e. the following power minimization problem 
\begin{align}\label{SP0}
 &  \underset{\substack{\widetilde{ \bf W}_P\succeq{ \bf   0},
  \widetilde{ \bf P}_{F}\succeq{ \bf   0}}}
{\text{min}}~\text{tr}(\widetilde{ \bf   P}_{F}),~~~{\rm{s.t.}}~~~\widetilde{\text{CR1}}\sim \widetilde{\text{CR4}}; ~~\widetilde{
      \text{PR1}}:  C_1{ \bf   h}_{R}^H
   \widetilde{ \bf   P}_{F}
   { \bf   h}_{R}\ge f_{\eta,\tau}
\end{align}
where $ f_{\eta,\tau}$ denotes the maximum  objective  value of the inner maximization problem in \eqref{AppenB1}. Clearly,  problem \eqref{SP0} is still convex with the linear objective function and  convex constraints. In fact, the optimal solution of the   inner maximization problem in \eqref{AppenB1}  is the same as that of  problem \eqref{SP0},  which can be proved by  contradiction.
Firstly,
 by assuming  that the optimal solution  to the inner maximization problem in \eqref{AppenB1} and  problem \eqref{SP0} are $\{\widetilde{\bf  W}_{P,1}, \widetilde{ \bf P}_{F,1}\}$ and $\{ \widetilde{\bf  W}_{P,2}, \widetilde{ \bf P}_{F,2}\}$, respectively,  we  readily    find that
$( \widetilde{ \bf W}_{P,2},\widetilde{ \bf P}_{F,2} )$   is  feasible  for the inner maximization problem in \eqref{AppenB1} and thus we have
$
{ C_1{ \bf   h}_{R}^H
  \widetilde{ \bf   P}_{F,2}
  { \bf   h}_{R}}\le{ C_1{ \bf   h}_{R}^H
  \widetilde{ \bf   P}_{F,1}
  { \bf   h}_{R}}=f_{\eta,\tau}$. Furthermore, since $( \widetilde{ \bf W}_{P,2},\widetilde{ \bf P}_{F,2} )$  also satisfies $\widetilde{
  \text{PR1}}$, it is inferred that
   $ C_1{ \bf   h}_{R}^H
    \widetilde{ \bf   P}_{F,2}
    { \bf   h}_{R}=f_{\eta,\tau}$.
  Clearly, the optimal solution $( \widetilde{ \bf W}_{P,2},\widetilde{ \bf P}_{F,2} )$  to  problem \eqref{SP0} is also optimal to the inner maximization problem in \eqref{AppenB1}.  In the sequel, we instead  investigate  the  rank-1 property of  the optimal solution  to    problem  \eqref{SP0}  through Karush-Kuhn-Tucker
(KKT) conditions, which are shown  in \eqref{AppenB1} at the top of this page.
\begin{figure*}
  \begin{subequations}\label{AppenB1}
  \vspace{-10mm}
    \begin{align}
  &(1+\beta^{\star}){\bf I}_{N_F}+\gamma^{\star}{\bf g}_P{\bf g}_P^H+
\rho^{\star}{\bf H}_E^H{\bf R}_E{\bf H}_E-\psi^{\star}C_1{\bf h}_R{\bf  h}_R^H-{\bf Z}_F^{\star}={\bf 0}\label{A1}\\
  &\lambda^{\star} {\bf I}_{N_P}+\gamma^{\star}{\bf h}_{M}{\bf h}_{M}^H-\beta^{\star}\xi
  {\bf H}_F^H{\bf H}_F-{\bf Z}_P^{\star}={\bf 0}\label{A2}\\
  &{\bf Z}_F^{\star}\widetilde{ \bf P}_{F}^{\star}={\bf 0},~{\bf Z}_P^{\star}
  \widetilde{ \bf W}_{P}^{\star}={\bf 0}\label{A3}\\
  &\lambda^{\star}(\text{tr}(\widetilde{ \bf   W}_P^{\star}) -\tau P_P)=0\label{A4}\\
  &\beta^{\star}(\text{tr}(\widetilde{ \bf   P}_{F}^{\star})\!
-\! \xi \text{tr}({ \bf    H}_F\widetilde{ \bf    W}_P^{\star}{ \bf    H}_F^H)-\xi \tau\frac{P_M}{N_M}\text{tr}(
  { \bf    G}_F{ \bf    G}_F^H))=0\label{A5}\\
  &\gamma^{\star}({ \bf   h}_{M}^H \widetilde{ \bf   W}_P^{\star}{ \bf   h}_{M} +{ \bf g}_{p}^H \widetilde{ \bf P}_{F}^{\star}{ \bf g}_{p}- I_{th})=0\label{A6}\\
  &\rho^{\star}((1-\eta)(1-\tau)+\text{tr}\big( {\bf R}_E{ \bf   H}_{E}
  \widetilde { \bf   P}_{F}^{\star}{ \bf   H}_{E}^H\big))=0\label{A7}\\
  &\psi^{\star}(   C_1{ \bf   h}_{R}^H
  \widetilde{ \bf   P}_{F}^{\star}
  { \bf   h}_{R}-f_{\eta,\tau})=0\label{A8}
\end{align}  
  \end{subequations}
  \vspace{-4mm}
\hrulefill
\vspace{-6mm}
\end{figure*}
 Here,
  $ \{\lambda, \beta, \gamma, \rho,\psi  \}$   denote
 non-negative lagrangian multipliers  for
 constraints $\{\widetilde{\text{CR}}1$,$\widetilde{\text{CR}}2$,
$\widetilde{\text{CR}}3$, $\widetilde{\text{CR}}4,\widetilde{\text{PR}}1\}$ in  problem \eqref{SP0}, respectively. While $\bm{Z}_P\succeq\bm{0}$ and  $\bm{Z}_F\succeq\bm{0}$ are the lagrangian multipliers corresponding to
$\widetilde{\bm{W}}_P\succeq\bm{0}$ and $\widetilde{\bm{P}}_{F}\succeq\bm{0}$, respectively.  Based on \eqref{A1} and \eqref{A3},
we have%
  $\left((1 \!+ \!\beta^{\star}){\bf I}_{N_F}\!+\!\gamma^{\star}{\bf g}_P{\bf g}_P^H \!+\! 
\rho^{\star}{\bf H}_E^H{\bf R}_E{\bf H}_E\right)\widetilde{ \bf P}_{F}^{\star}\!=\!\psi^{\star}C_1{\bf h}_R{\bf  h}_R^H\widetilde{ \bf P}_{F}^{\star},$  implying  that
\begin{align}\label{A9}
 &\text{rank}\big(\big((1+\beta^{\star}){\bf I}_{N_F}+\gamma^{\star}{\bf g}_P{\bf g}_P^H+
\rho^{\star}{\bf H}_E^H{\bf R}_E{\bf H}_E\big)\widetilde{ \bf P}_{F}^{\star}\big)= \text{rank}(\widetilde{ \bf P}_{F}^{\star}) \nonumber\\
&=\text{rank}(\psi^{\star}C_1{\bf h}_R{\bf  h}_R^H\widetilde{ \bf P}_{F}^{\star})\le 1
\end{align} where the first  equality in \eqref{A9} is due to
$(1+\beta^{\star}){\bf I}_{N_F}+\gamma^{\star}{\bf g}_P{\bf g}_P^H+
\rho^{\star}{\bf H}_E^H{\bf R}_E{\bf H}_E\succ {\bf 0}$. According to \eqref{A9},  the  rank-1 optimal $\widetilde{ \bf   P}_{F}^{\star}$ to  problem \eqref{SP0} is proved.

As for the optimal $\widetilde{ \bf   W}_P^{\star}$,  we refer to \eqref{A2} and \eqref{A3} to  obtain
\begin{align}\label{A10}
  \lambda^{\star} {\bf I}_{N_P}-\widetilde{\bf H}^{\star} ={\bf Z}_P^{\star}\succeq {\bf 0}, ~ (\lambda^{\star} {\bf I}_{N_P}-\widetilde{\bf H}^{\star} )\widetilde{ \bf   W}_P^{\star}={\bf 0},
\end{align} where $ \widetilde{\bf H}^{\star}=\beta^{\star}\xi
  {\bf H}_F^H{\bf H}_F-\gamma^{\star}{\bf h}_{M}{\bf h}_{M}^H$.  Observing from  \eqref{A10}, if $\lambda^{\star}=0$, then we have $\beta^{\star}\xi
  {\bf H}_F^H{\bf H}_F\preceq\gamma^{\star}{\bf h}_{M}{\bf h}_{M}^H$, for which  $\beta^{\star}=0$ is derived  since  $\text{rank}(\xi
  {\bf H}_F^H{\bf H}_F)>\text{rank}({\bf h}_{M}{\bf h}_{M}^H) =1$ is implied by   our system setting. The optimal $\widetilde{ \bf   W}_P^{\star}$ is then derived as ${\bf h}_{M}{\bf h}_{M}^H\widetilde{ \bf   W}_P^{\star}=0$.
  Clearly, there must exist a rank-1 optimal $\widetilde{ \bf   W}_P^{\star}$ in the null space of ${\bf h}_{M}$. However, if  $\lambda^{\star}>0$,  then we   have
    the eigenvalue decomposition (EVD) of $\widetilde{\bf H}^{\star}$ as   $ \widetilde{\bf H}^{\star}\!\!=\!\!
  \widetilde{\bf U}_{H}\widetilde{\bf \Lambda
  }_{H}\widetilde{\bf U}_{H}^H$,  where
the maximum eigenvalue  $\tilde{\lambda}_{{{H}},\max}$ in the   diagonal matrix $\widetilde{\bf \Lambda
  }_{{{H}}}$  must be   positive.    Otherwise, $\beta^{\star}\xi
  {\bf H}_F^H{\bf H}_F \preceq \gamma^{\star}{\bf h}_{M}{\bf h}_{M}^H$  is implied and  only obtained at $\beta^{\star}=0$ as mentioned above.
Based on \eqref{A10}, we  then have  $\widetilde{ \bf   W}_P^{\star}={\bf 0}$  since $\lambda^{\star}>0$, which contradicts with \eqref{A4}.  So  $\lambda^{\star}>0$ yields $\tilde{\lambda}_{{{H}},\max}>0$ and thus we have  $ \lambda^{\star} {\bf I}_{N_P}-\widetilde{\bf H}^{\star}=  \widetilde{\bf U}_{H}(\lambda^{\star} {\bf I}_{N_P}-\widetilde{\bf \Lambda
  }_{H})\widetilde{\bf U}_{H}^H\succeq{\bf 0}$.   Further, to ensure
$\widetilde{ \bf   W}_P^{\star}\neq {\bf 0}$ in \eqref{A10},  it is easily inferred that  $\lambda^{\star} =\tilde{\lambda}_{{{H}},\max}$ and thus $\widetilde{ \bf   W}_P^{\star}=c{\bf u}_P{\bf u}_P^H$ is obtained,  where   ${\bf u}_{P}$ is the unit-norm  eigenvector of $\widetilde{\bf H}^{\star} $  corresponding to $\lambda_{{{H}},\max}$. Moreover, since $\lambda^{\star}=\tilde{\lambda}_{{{H}},\max}>0$, it yields  that $\text{tr}(\widetilde{ \bf   W}_P^{\star}) =\tau P_P$  and  thus $\widetilde{ \bf   W}_P^{\star}=\tau P_P{\bf u}_P{\bf u}_P^H$ can be obtained from   \eqref{A4}.

Both  cases of $\lambda^{\star}$ demonstrate that the  optimal $\widetilde{ \bf   W}_P^{\star}$ to  problem   \eqref{SP0} is  of rank-1.  Overall,  the rank-1 optimal solution $\{\widetilde{ \bf W}_{P},
 \widetilde{ \bf P}_{F}\}$  of 
the inner maximization problem in \eqref{AppenB1} is proved.

\textbf{Step 2:}
 Firstly, we assume  that the optimal solution of  the inner maximization problem in \eqref{AppenB1} (equivalent to problem ) and  problem \eqref{SRM2} with the same  $\{\eta,\tau\}$  are   $\{\widetilde{ \bf W}_{P,1},
\widetilde{ \bf P}_{F,1}\}$ and $\{\widetilde{ \bf W}_{P,0},
 \widetilde{ \bf P}_{F,0}\}$, respectively. The corresponding   objective function  is defined as  $  f(\widetilde{ \bf W}_{P},
 \widetilde{ \bf P}_{F}) $.  Based on Lemma~\ref{lemm0}, for any given $\eta$ and $\tau$,    the    relaxed SRM problem  \eqref{SRMRe3}  actually  has a larger feasible solution region  than  problem \eqref{SRM2}, hence, 
 we    have
  $f(\widetilde{ \bf W}_{P,1},
 \widetilde{ \bf P}_{F,1}) \ge f(\widetilde{ \bf W}_{P,0},
 \widetilde{ \bf P}_{F,0})$.  Furthermore,
 since  both
$\widetilde{ \bf W}_{P,1},
 $ and $\widetilde{ \bf P}_{F,1}$  are of rank-1 as proved in Step 1, we also find that   $\{ \widetilde{ \bf W}_{P,1},
 \widetilde{ \bf P}_{F,1}\}$  is  feasible to  problem  \eqref{SRM2}, which implies   that
  $f(\widetilde{ \bf W}_{P,1},
 \widetilde{ \bf P}_{F,1}) \le f(\widetilde{ \bf W}_{P,0},
 \widetilde{ \bf P}_{F,0})$.
Combining the above two inequalities, we  finally have
  $f(\widetilde{ \bf W}_{P,1},
 \widetilde{ \bf P}_{F,1}) = f(\widetilde{ \bf W}_{P,0},
 \widetilde{ \bf P}_{F,0})$.

Given any  $\tau$ and $\eta$, it is concluded that   such a two-tupple  $\{ {\widetilde{\bf  W}}_{P,1}, \widetilde{ \bf P}_{F,1}\}$  from the relaxed SRM  problem \eqref{SRMRe3} (inner maximization   problem in  \eqref{AppenB1}) is   optimal to   problem \eqref{SRM2}. Notice that  problem \eqref{SRM2} with  the optimal  $\tau^{\star}$ and $\eta^{\star}$   is equivalent to the original SRM problem \eqref{SRM},    Therefore,   the  equivalence between  problem \eqref{SRMRe3} and the PSRM problem \eqref{SRM}  is  established as that in Theorem~\ref{thre3}.

\vspace{-3mm}
 \section{}\label{AppenB}
Notice that  this proof follows the same logic as that  provided for Theorem~\ref{thre3}.  Firstly, assuming the fixed  $\tau$ and  $\eta$,  we can  reexpress the relaxed robust SRM problem \eqref{WSRM5} as
 \begin{align}\label{AWSRM5}
  &\tilde{R}_{S,Ro}^{\star}=
 \underset{0\le \tau \le 1,\eta \ge 1}{\text{max}}\left\{\begin{array}{l}
  \underset{\widetilde{ \bf W}_P\succeq{ \bf   0},
  \widetilde{ \bf P}_{F}\succeq{ \bf   0} }
{\text{max}}~~{C_1 { \bf   h}_{R}^H
  \widetilde { \bf   P}_{F}
   { \bf   h}_{R}} \\
  {\rm{s.t.}}~\widetilde{\text{CR1}}\sim \widetilde{\text{CR3}}\\~~~~~
  \widetilde{\text{CR5}}:~\widetilde{\bf I}_E-\widetilde{\bf R}_E({\bf I}_{N}\otimes \widetilde{\bf P}_F)\widetilde{\bf R}_E^H\succeq{ \bf   0}
\end{array}\right.
\end{align} where
$\widetilde{\bf I}_E= \left[\begin{array}{cc}
  u{ \bf   I}_{N} & {\bf 0}\\
{\bf 0}& \eta-1-\mu\xi_f^2\end{array}\right]$ and $\widetilde{\bf R}_E= \left[\begin{array}{c}
  {\bf R}_E^{\frac{1}{2} T}\otimes { \bf   I}_{N} \\
  \hat{ \bf   h}_E^H({\bf R}_E^{\frac{1}{2} T}\otimes { \bf   I}_{N})
\end{array}\right].$  The  proof of Theorem~\ref{thre4} is similar to that of  Theorem~\ref{thre3} and  also consists of the following two steps.

In the first step, we prove that  for any given  $\tau$ and $\eta$, the rank-1 optimal  solution $\{ {\widetilde{\bf  W}}_{P}, \widetilde{ \bf P}_{F}\}$ of the inner maximization  problem  in \eqref{AWSRM5} is obtained. In the second step, we show that  such an    $\{ {\widetilde{\bf  W}}_{P}, \widetilde{ \bf P}_{F}\}$ is also  optimal to the  WSRM problem  \eqref{WSRM}. As a result,  the equivalence between  the relaxed WSRM problem \eqref{WSRM5} and  the original WSRM problem \eqref{WSRM} are  established.

Similarly, for   proving   the rank-1 nature of the  optimal solution to  the inner maximization problem in \eqref{AWSRM5}, we  consider the corresponding      power minimization problem  given $\tau$ and $\eta$  as
\begin{align}\label{SPR1}
 &  \underset{\substack{\widetilde{ \bf W}_P\succeq{ \bf   0},
  \widetilde{ \bf P}_{F}\succeq{ \bf   0}}}
{\text{min}}~\text{tr}(\widetilde{ \bf   P}_{F}),~~~{\rm{s.t.}}~~~\widetilde{\text{CR1}}\sim \widetilde{\text{CR3}};~~\widetilde{\text{CR5}};~~\widetilde{
      \text{PR}}1:  C_1{ \bf   h}_{R}^H
   \widetilde{ \bf   P}_{F}
   { \bf   h}_{R}\ge \hat{f}_{\eta,\tau}., 
\end{align}
where $ \hat{f}_{\eta,\tau}$ denotes the  optimal  objective value of the inner maximization problem in \eqref{AWSRM5}  given  any  $\tau$ and $\eta$.  It is readily observed that   problem \eqref{SPR1} is also convex. Following the same philosophy  in  Step 1 of  Appendix~\ref{AppenA}, we  readily  verify that  the optimal solution of  problem \eqref{SPR1} is also optimal to the  inner maximization problem in \eqref{AWSRM5}. In the sequel,  we  aim to show that the optimal
solution $\{\widetilde{ \bf W}_P,
\widetilde{ \bf P}_{F}\}$ of  problem \eqref{SPR1} for any fixed  $\tau$ and $\eta$  are of rank-1 through KKT conditions, which are
   \begin{subequations}\label{AppenB2}
     \begin{align}
   &(1+\beta^{\star}){\bf I}_{N_F}+\gamma^{\star}{\bf g}_P{\bf g}_P^H+
 \rho^{\star}{\bf H}_E^H{\bf R}_E{\bf H}_E+\sum\limits_{i=1}^{N_E}\widetilde{\bf R}_{E,i}^H{\bf Z}_R \widetilde{\bf R}_{E,i}-\psi^{\star}C_1{\bf h}_R{\bf  h}_R^H-{\bf Z}_F^{\star}={\bf 0}\label{AA0}\\
   &{\bf Z}_R^{\star}(\widetilde{\bf I}_E-\widetilde{\bf R}_E({\bf I}_{N}\otimes \widetilde{\bf P}_F)\widetilde{\bf R}_E^H)={\bf 0}\label{AA2}\\
   &\eqref{A2}\sim \eqref{A6},~\eqref{A8}
 \end{align}
   \end{subequations}
  where  $\bm{Z}_R\succeq\bm{0}$  is   lagrangian multiplier   for  $\widetilde{\text{CR5}}$. Based on
   \eqref{AA0} and  \eqref{A3}, we have
 \begin{align}\label{AA9}
  &\text{rank}\big(\big((1+\beta^{\star}){\bf I}_{N_F}+\gamma^{\star}{\bf g}_P{\bf g}_P^H+
 \rho^{\star}{\bf H}_E^H{\bf R}_E{\bf H}_E+\sum\limits_{i=1}^{N_E}\widetilde{\bf R}_{E,i}^H{\bf Z}_R \widetilde{\bf R}_{E,i}\big)\widetilde{ \bf P}_{F}^{\star}\big)= \text{rank}(\widetilde{ \bf P}_{F}^{\star}) \nonumber\\
 &=\text{rank}(\psi^{\star}C_1{\bf h}_R{\bf  h}_R^H\widetilde{ \bf P}_{F}^{\star})\le 1.
 \end{align} 
 
 It is clear from \eqref{AA9} that the optimal $\widetilde{ \bf P}_{F}^{\star}$ to  problem \eqref{SPR1} is of rank-1. In addition,  since the $\widetilde{ \bf W}_P$ related KKT conditions of  problem \eqref{SPR1} are the same as that of  problem \eqref{SP0}. 
 Let's refer to  Step 1 of Appendix C  to prove  the rank-1  optimal $\widetilde{ \bf W}_P^{\star}$ to  problem \eqref{SPR1}. As such, the  rank-1 optimal solution $\{\widetilde{ \bf   P}_F^{\star},\widetilde{ \bf   W}_P^{\star}\}$ to  problem   \eqref{SPR1} are finally verified.
\vspace{-2mm}
\section{}\label{AppenC}
According to \cite{Ma:2014kzbacada} and following the same argument  as  Step 1 of Appendix~\ref{AppenA},   the outage-constrained SRM
 problem \eqref{OUTSRM2} given any $\{\tau,{\bf S}\}$  is   equivalent to the following power minimization problem
  \begin{align}\label{OUTSRM2A}
    &\underset{\substack{\widetilde{ \bf   W}_P\succeq{ \bf   0}, \widetilde{ \bf   P}_{F}
      \succeq{ \bf   0}}} {\text{min}}~~~~~\text{tr}(\widetilde{ \bf   P}_{F}),~~{\rm{s.t.}}~~\widetilde{\text{CR1}}\sim
    \widetilde{\text{CR3}},~\widetilde{\text{CR6}}|_{ R_S=R_S^{1}},
    \end{align}  where $R_S^{1}$ denotes  the  optimal objective value of the problem \eqref{OUTSRM2} with fixed  $\tau$ and ${\bf S} $.
In other words, the  optimal solution  $\{\widetilde{ \bf   P}_F^{\star},\widetilde{ \bf   W}_P^{\star}\}$ to   problem \eqref{OUTSRM2A} is also optimal to  problem \eqref{OUTSRM2} given the same  $\tau$ and ${\bf S}$.
In the sequel,  we firstly prove  the  rank-1 optimal $\widetilde{ \bf   P}_F^{\star}$  to   problem \eqref{OUTSRM2A}. First of all, we define the  projection matrix ${\bf T}$ of the  vector $\widetilde{ \bf   P}_F^{\frac{1}{2} \star}{\bf h}_R$ as
  ${\bf T} = \frac{\widetilde{ \bf   P}_F^{\frac{1}{2} \star}{\bf h}_R{\bf h}_R^H\widetilde{ \bf   P}_F^{\frac{1}{2} \star}}{\Vert{\bf h}_R^H\widetilde{ \bf   P}_F^{ \frac{1}{2} \star}\Vert^2}$, then
  a novel rank-1  solution
$\widehat{ \bf   P}_F^{\star} = \widetilde{ \bf   P}_F^{\frac{1}{2} \star}{\bf T} \widetilde{ \bf   P}_F^{\frac{1}{2} \star}$   is  available and  satisfies
\begin{align}\label{ae2}
 \text{tr}(\widehat{ \bf   P}_F^{\star})-
 \text{tr}(\widetilde{ \bf   P}_F^{\star})=
 \text{tr}(\widetilde{ \bf   P}_F^{\frac{1}{2} \star}({\bf T}-{\bf I}) \widetilde{ \bf   P}_F^{\frac{1}{2} \star})\le 0.
\end{align} 
The formulation \eqref{ae2} yields    $\text{tr}(\widehat{ \bf   P}_F^{\star})\le
\text{tr}(\widetilde{ \bf   P}_F^{\star})$, which hints that  the objective value of problem \eqref{OUTSRM2}  is non-increasing   while still satisfying  constraints $\widetilde{\text{CR1}}\sim
\widetilde{\text{CR2}}$. Moreover, observing from $\widetilde{\text{CR6}}$, we
 have
\begin{align}\label{ae3}
 \log_2\big(1+ \frac{C_1 {\bf   h}_{R}^H
 \widehat{ \bf   P}_{F}^{\star}{ \bf   h}_{R} }{1-\tau}
\big)
=\log_2\bigg(1+\frac{C_1\vert{ \bf   h}_{R}^H
\widetilde{ \bf   P}_F^{\star}{\bf h}_R\vert^2}{(1-\tau)\Vert\widetilde{ \bf   P}_F^{ \frac{1}{2} \star}{\bf h}_R \Vert^2}\bigg)=\log_2\big(1+ \frac{C_1 { \bf   h}_{R}^H
\widetilde{ \bf   P}_{F}^{\star}{ \bf   h}_{R}}{1-\tau}
\bigg)
\end{align} as well as
{
    \begin{align}\label{ae4}
     & \log_2\det\bigg({{ \bf   I}}_{N_E}\!+\!\frac{ {\bf R}_E{ \bf   H}_{E}
      \widehat{ \bf   P}_{F}^{\star}{ \bf   H}_{E}^H}{1-\tau}\bigg)
     =\log_2\bigg({{ \bf   I}}_{N_E}\!+\!\frac{ {\bf R}_E{ \bf   H}_{E}\widetilde{ \bf   P}_F^{\frac{1}{2} \star}{\bf T}\widetilde{ \bf   P}_F^{\frac{1}{2} \star}{ \bf   H}_{E}^H}{(1-\tau)}\bigg)\nonumber\\
     &=\log_2\bigg(1\!+\!\frac{ {\bf h}_R^H\widetilde{ \bf   P}_F^{\frac{1}{2} \star}(\widetilde{ \bf   P}_F^{\frac{1}{2} \star}{ \bf   H}_{E}^H{\bf R}_E{ \bf   H}_{E}\widetilde{ \bf   P}_F^{\frac{1}{2} \star})\widetilde{ \bf   P}_F^{\frac{1}{2} \star}{\bf h}_R}{(1-\tau)\Vert{\bf h}_R^H\widetilde{ \bf   P}_F^{ \frac{1}{2} \star}\Vert^2}\bigg)\le \log_2\big(1\!+\frac{\lambda_{\max}(\widetilde{ \bf   P}_F^{\frac{1}{2} \star}{ \bf   H}_{E}^H{\bf R}_E{ \bf   H}_{E}\widetilde{ \bf   P}_F^{\frac{1}{2} \star})}{1-\tau}\big)\nonumber\\
     &\le \log_2\det({\bf I }_{N_F}+\frac{\widetilde{ \bf   P}_F^{\frac{1}{2} \star}{ \bf   H}_{E}^H{\bf R}_E{ \bf   H}_{E}\widetilde{ \bf   P}_F^{\frac{1}{2} \star}}{1-\tau})=\log_2\det({\bf I }_{N_E}+\frac{{\bf R}_E{ \bf   H}_{E}\widetilde{ \bf   P}_F^{\star}{ \bf   H}_{E}^H}{1-\tau}).
    \end{align}
}
Based on \eqref{ae3} and \eqref{ae4},  it yields
 \begin{align}\label{ae40}
  & \log_2\big(1+ \frac{C_1 {\bf   h}_{R}^H
  \widehat{ \bf   P}_{F}^{\star}{ \bf   h}_{R} }{1-\tau}
 \big)-\log_2\det\bigg({{ \bf   I}}_{N_E}\!+\!\frac{ {\bf R}_E{ \bf   H}_{E}
   \widehat{ \bf   P}_{F}^{\star}{ \bf   H}_{E}^H}{1-\tau}\bigg)
  \nonumber\\
  &\ge \log_2\big(1+ \frac{C_1 {\bf   h}_{R}^H
  \widetilde{ \bf   P}_{F}^{\star}{ \bf   h}_{R} }{1-\tau}
 \big)-\log_2\det\bigg({{ \bf   I}}_{N_E}\!+\!\frac{ {\bf R}_E{ \bf   H}_{E}
  \widetilde{ \bf   P}_{F}^{\star}{ \bf   H}_{E}^H}{1-\tau}\bigg),
 \end{align}  which means that  the constraint   $\widetilde{\text{CR6}}$ still holds by using the novel  solution $\widehat{ \bf   P}_F^{ \star}$. Similarly, by referring to \eqref{ae2} and  the following inequality
\begin{align}\label{ae5}
 {\bf g}_p^H \widehat{ \bf   P}_F^{\star} {\bf g}_p=
 {\bf g}_p^H \widetilde{ \bf   P}_F^{\frac{1}{2} \star}{\bf T} \widetilde{ \bf   P}_F^{\frac{1}{2} \star} {\bf g}_p=\frac{\vert {\bf g}_p^H \widetilde{ \bf   P}_F^{ \frac{1}{2} \star}\widetilde{ \bf   P}_F^{ \frac{1}{2} \star}{\bf h}_R\vert^2  } { \Vert\widetilde{ \bf   P}_F^{\frac{1}{2} \star}{\bf h}_R\Vert^2}\le {\bf g}_p^H\widetilde{ \bf   P}_F^{ \star}{\bf g}_p,
\end{align}
the constraint  $\widetilde{\text{CR3}}$ can also be satisfied with  $\{\widehat{ \bf   P}_F^{ \star}, \widetilde{ \bf   W}_P^{\star}\}$.  This phenomenon  indicates that the
novel  solution $\{\widehat{ \bf   P}_F^{ \star}, \widetilde{ \bf   W}_P^{\star}\}$ is also  feasible to  problem \eqref{OUTSRM2A} and  may even realize a  lower objective value than $\{\widetilde{ \bf   P}_F^{ \star}, \widetilde{ \bf   W}_P^{\star}\}$. As a result,  we  conclude that the  optimal ${ \bf   P}_F^{ \star}$ to   problem \eqref{OUTSRM2A}  (equivalent to the outage-constrained SRM problem \eqref{OUTSRM2}) must be of rank-1. Since  the $\widetilde{ \bf W}_P$ related KKT conditions of problem \eqref{OUTSRM2} are also  the same as that of  problem \eqref{SP0},  the proof for    the rank-1  optimal $\widetilde{ \bf W}_P$ to  problem \eqref{SP0} in  Appendix~\ref{AppenA}
can still  be applied to    problem \eqref{OUTSRM2}. Due to   space limitation, the detailed proof is omitted here. Overall, the  rank-1 optimal solution $\widetilde{ \bf   P}_F^{\star}, \widetilde{ \bf   W}_P^{\star}$ to the outage-constrained SRM problem \eqref{OUTSRM2} is   proved.
\vspace{-3mm}
\section{}\label{AppenD}
Recalling the proof in Step 1 of Appendix A,   for any fixed $\tau$,  we readily infer that  the  AN aided problem \eqref{ARSRM3}  is  equivalent to the following  power minimization problem
\begin{align}\label{ARSRM4}
  &\underset{\substack{{\tau},\widetilde{ \bf   W}_P\succeq{ \bf   0},
  \widetilde{ \bf   P}_{F}\succeq{ \bf   0},\widetilde{ \bf   \Sigma}_U\succeq{ \bf   0} }}
{\text{min}}~~\text{tr}( \widetilde{{ \bf   P}}_{F})\nonumber\\
&~~~~~~~~~{\rm{s.t.}}~
\widetilde{\text{AR1}}:\text{tr}( \widetilde{ \bf   \Sigma}_{U}) \ge f_{A},~\widetilde{\text{AR2}}: \text{tr}(C_1{ \bf   h}_{R}^H
  \widetilde { \bf   P}_{F}
   { \bf   h}_{R}) \ge C_2,~\widetilde{\text{CR1}},~\widetilde{\text{AR3}},~\widetilde{\text{CR3}}, 
\end{align} where  $f_A$ denotes  the optimal objective value of  problem \eqref{ARSRM3} and $C_2=(1-\tau)(2^{\frac{R_{th}}{(1-\tau)}}-1)$.
To demonstrate  the  rank-1 optimal solution $\{\widetilde{ \bf   W}_P,
\widetilde{ \bf   P}_{F},\widetilde{ \bf   \Sigma}_U \}$ to  problem \eqref{ARSRM4}, we formulate the corresponding
KKT conditions as
\begin{subequations}\label{AppenB3}
  \begin{align}
    &(1+\beta^{\star}){\bf I}_{N_F}+\gamma^{\star}{\bf g}_p{\bf g}_p^H-\psi^{\star}C_1{\bf h}_R{\bf h}_R^H-{\bf Z}_F^{\star}={\bf 0},\label{eqq1}\\
    &(\beta^{\star}-\rho^{\star}){\bf I}_{N_F}+\gamma^{\star} {\bf g}_p{\bf g}_p^H-{\bf Z}_{U}^{\star}={\bf 0},\label{eqq2}\\
&{\bf Z}_U^{\star}\widetilde{ \bf   \Sigma}_U^{\star} ={\bf 0},~\eqref{A2}\sim \eqref{A6},~\eqref{A8}|_{ f_{\eta,\tau}=C_2},\label{eqq3}
\end{align}
\end{subequations}
where $\{\rho^{\star}, \psi^{\star},
\beta^{\star}\}$, $\lambda^{\star}$ and $\gamma^{\star}$ are the optimal lagrangian multipliers
associated with  constraints $\widetilde{\text{AR1}}\sim
 \widetilde{\text{AR3}}$,   $\widetilde{\text{CR1}}$ and $\widetilde{\text{CR3}}$, respectively, while  ${\bf Z}_U^{\star} \succeq {\bf 0}$ is the optimal lagrangian multiplier for  $\widetilde{\bf \Sigma}_U \succeq {\bf 0}$.
It is readily observed from \eqref{eqq1}  and  \eqref{A3}  that
 \begin{align}\label{eqq4}
  &\text{rank}\big(\big[(1+\beta^{\star}){\bf I}_{N_F}+\gamma^{\star}{\bf g}_P{\bf g}_P^H\big]\widetilde{ \bf P}_{F}^{\star}\big)= \text{rank}(\widetilde{ \bf P}_{F}^{\star})=\text{rank}(\psi^{\star}C_1{\bf h}_R{\bf  h}_R^H\widetilde{ \bf P}_{F}^{\star})\le 1,
\end{align}
 which implies that the optimal $\widetilde{ \bf P}_{F}^{\star}$ to  problem \eqref{ARSRM4} is of rank-1.
Additionally, based on  \eqref{eqq2} and \eqref{eqq3},  we also find that $\beta^{\star}-\rho^{\star}\ge 0$   for guaranteeing ${\bf Z}_U^{\star} \succeq{\bf 0}$. To be specific, when $\beta^{\star}-\rho^{\star}= 0$, we have  $\gamma^{\star} {\bf g}_p{\bf g}_p^H \widetilde{\bf \Sigma}_U^{\star}={\bf 0}$. Clearly,  there  must exist a  rank-1 optimal
$\widetilde{\bf \Sigma}_U^{\star}$ within the null space of ${\bf g}_p^H $. While for  $\beta^{\star}-\rho^{\star}>0$,   ${\bf Z}_U^{\star} \succ {\bf 0}$ and   $\widetilde{\bf \Sigma}_U^{\star}={\bf 0}$ are obtained  according to \eqref{eqq3}. Based on above discussion, the rank-1 optimal  $\widetilde{\bf \Sigma}_U^{\star}$ to  problem \eqref{ARSRM4}  can be proved. Without loss of generality, the $\widetilde{ \bf   W}_P$ related constraints of  problem \eqref{ARSRM4} are also identical to that of  problem  \eqref{SP0}. Therefore, the rank-1 nature of the optimal $\widetilde{ \bf   W}_P^{\star}$ to  problem  \eqref{ARSRM4} can  be  verified. Considering the equivalence between  problems \eqref{ARSRM4}  and  \eqref{ARSRM3}, we  finally conclude  that the optimal solution $\{\widetilde{ \bf   W}_P,
\widetilde{ \bf   P}_{F},\widetilde{ \bf   \Sigma}_U \}$ of  problem \eqref{ARSRM3} is also of rank-1.

\section{}\label{AppenE}
Since it has been proved that the globally optimal or high-quality suboptimal  FBS information  covariance matrices  ${ \bf P}_F$   for all studied  problems, namely \eqref{SRM}, \eqref{WSRM}, \eqref{OUTRM} and \eqref{ARSRM2}, are all  of rank-1, accordingly, we   can define the optimal  ${\bf P}_F^{\star}= \lambda_P{\bf p}_F{\bf p}_F^H$  with $\Vert {\bf p}_F\Vert_F=1$  and  $\lambda_P>0$   for   all  these   problems.
Furthermore,   by substituting   ${\bf P}_F^{\star}= \lambda_P{ \bf p}_F{\bf p}_F^H$ into the secrecy rate expression    in \eqref{RS1}, it yields
\begin{align}\label{WPno0}
   R_S &=(1-\tau)\log_2\big(1+ C_1 \lambda_P{ { \bf   h}_{R}^H
 \bm{p}_F\bm{p}_F^H
   { \bf   h}_{R}}\big)- (1-\tau)\log_2\det\big({{ \bf   I}}_{N_E}\!+\!  {\bf R}_E{ \bf   H}_{E}
   { \bf   P}_{F}{ \bf   H}_{E}^H\big)\nonumber\\
   &=(1-\tau)\log_2 \bigg(1+\frac{ C_1 \vert { \bf   h}_{R}^H
 \bm{p}_F\vert^2- \Vert   {\bf R}_E^{\frac{1}{2}} { \bf   H}_{E}
   { \bf  p}_{F}\Vert_F^2 }{   \frac{1}{ \lambda_P}+    \Vert   {\bf R}_E^{\frac{1}{2}} { \bf   H}_{E}
   { \bf  p}_{F}\Vert_F^2  }\bigg).
\end{align} 

Notice that  our work aims to   optimize  $\bm{P}_F$   for achieving the  maximum  non-negative secrecy rate $   R_S$,   so we must have $ C_1 \vert { \bf   h}_{R}^H
 \bm{p}_F \vert^2 \ge \Vert   {\bf R}_E^{\frac{1}{2}} { \bf   H}_{E}
   { \bf  p}_{F}\Vert_F^2$ at the optimal ${\bf P}_F^{\star}= \lambda_P{ \bf p}_F{\bf p}_F^H$.  Based on this, it can  be  found from \eqref{WPno0}  that  $R_S$  is a monotonically   non-decreasing function of $\lambda_P$. Additionally, when  
the   cross-interference  constraint ${\text{CR3}}$ is inactive for each  problem,    it is clear that $\lambda_P$ is only subject to the  constraint ${\text{CR2}}$ for problems \eqref{SRM}, \eqref{WSRM}, \eqref{OUTRM},  or   constraints ${\text{AR2}}$  and  ${\text{AR3}}$ for problem \eqref{ARSRM2}.    Based on  \eqref{WPno0} and  problem \eqref{ARSRM2}, we   readily infer that both  ${\text{CR2}}$  and   ${\text{AR3}}$ are  active  at the optimal ${ \bf P}_F^{\star}$ for maximizing  secrecy  rate $R_S$ and  artificial noise power, respectively.  Inspired by  this  conclusion,    we  further    formulate  the common   subproblem   for optimizing  PB energy covariance matrices  ${ \bf   W}_P$  in  different cases of FBS-EVE CSI  with the inactive  ${\text{CR3}}$     as
\begin{align}\label{WPno}
  &\underset{\tau, { \bf   W}_P\succeq{ \bf   0}}
{\text{max}}~~~\tau \xi\text{tr}({ \bf    H}_F{ \bf    W}_P{ \bf    H}_F^H),~~~{\rm{s.t.}}~~\text{CR1:}~0\le\tau\le 1,~\text{tr}({ \bf   W}_P) \le P_P.
\end{align}  

According to  \cite[Lemma H.1.h]{Inequal},
we  readily derive   the closed-form solution  ${ \bf   W}_P^{\star}$ to    problem \eqref{WPno} as   ${ \bf   W}_P^{\star}=P_P{{\bf w }_P{\bf w }_P^{ H }}$ with ${\bf w }_P=\nu_{\max}({\bf H}_F^H{\bf H}_F)$.


\end{appendices}

\vspace{-4mm}

\end{document}